\numberwithin{equation}{section}
\newtheorem{thm}{Theorem}[section]
\newtheorem{lem}[thm]{Lemma}
\newtheorem{cor}[thm]{Corollary}
\newtheorem{prop}[thm]{Proposition}
\newtheorem{defin}[thm]{Definition}
\newtheorem{rem}[thm]{Remark}
\renewcommand\Pr{{\mathbb P}}
\newcommand\N{{\mathbb N}}
\newcommand\E{{\mathbb E}}
\newcommand\R{{\mathbb R}}
\newcommand\Z{{\mathbb Z}}
\newcommand\1{{1\kern-.25em\hbox{\rm I}}}
\newcommand\eu{{1\kern-.25em\hbox{\sm I}}}
\newcommand\BB{{\mathcal B}}
\newcommand\D{{\mathcal D}}
\newcommand\FF{{\mathcal F}}
\newcommand\KK{{\mathcal K}}
\newcommand\WW{{\mathcal W}}
\newcommand\RR{{\mathcal R}}
 \newcommand\e{\epsilon}
\newcommand\eps{\epsilon}
\newcommand\s{\sigma}
\newcommand\om{\omega}
 \newcommand\La{\Lambda}
\newcommand\G{\Gamma}
\newcommand\Om{\Omega}
\newcommand{\dist}{{\rm dist}}
\newcommand{\nada}[1]{}
\begin{document}
 \def\nic#1{\textcolor{red}{#1}}
   \def\en#1{\textcolor{blue}{#1}}
 \title[Unique minimizer for a Random functional]
{Uniqueness of the
 minimizer for a random nonlocal  functional  with double-well potential in $d\le2$.   }
\author{Nicolas Dirr}\thanks{N.D. supported by GNFM-INDAM. }
\address{Nicolas Dirr, Cardiff School of Mathematics
Cardiff University
Senghennydd Road, Cardiff, Wales, UK, CF24 4AG.}
\email{ {\tt DirrNP@cardiff.ac.uk}}
\author{Enza Orlandi}\thanks{E.O. supported by MURST/Cofin  
Prin-2009-2013  
and ROMA TRE University}
\address{Enza Orlandi, 
Dipartimento di Matematica\\
Universit\`a  di Roma Tre\\
 L.go S.Murialdo 1, 00146 Roma, Italy. }
\email{{\tt orlandi@mat.uniroma3.it}}
\date{\today}
\begin{abstract} We  consider  a small random perturbation of the  energy functional
$$ [u]^2_{H^s( \La, \R^d)} + \int_\La W(u(x)) dx  $$
for $s \in (0,1),$ where  the non-local part  $ [u]^2_{H^s( \La,\R^d)}$ denotes the total contribution  from $\La \subset \R^d$  in  
the $H^s (\R^d)$  Gagliardo semi-norm of  $u$  and $W$ is a double well potential. We show that  
 there exists, as $\La $ invades $ \R^d$,  for almost all  realizations of the random  term  a  
 minimizer  under compact perturbations, which is unique when $d=2$,  $s \in (\frac 12,1)$  and when  $d=1$, $s \in [\frac 14, 1).$ This uniqueness is a consequence of the randomness.   When the random term is   absent,  there are two minimizers which are  invariant under translations  in space, $u = \pm 1$.  
 \end{abstract}
\keywords{ Random functionals, 
Phase segregation in disordered materials.}
\subjclass{35R60,  
80M35,  
82D30, 
74Q05}

\maketitle

\section{Introduction}
Non local functionals, related to fractional Levy partial differential equations,
appear frequently in many different areas of  mathematics and find many applications in engineering, finance \cite {RP}, physics \cite {MK},  chemistry \cite {BS}  and biology \cite {WGMN}.
We consider     non local functionals     representing   
 the free energy of a material with
two (or several) phases, see \cite {D},  on a 
a scale,    the so-called mesoscopic scale,  which is much larger than the atomistic  scale
so that the 
adequate description of the state of the material is by a {\em continuous}
scalar order parameter $m:\ \La \subseteq\R^d\to \R$.    
The  minimizers of these functionals are functions $m^*$ 
representing  the states  or  phases  of the materials. 

The natural question that we pose is the following:  What  
happens  to these minimizers  when   an  external, even very weak,  
random  potential   is added to the deterministic functional?
Does the   number of minimizers remain the same, i.e will  
the material always have the same number of states (or phases)?
Is there some significant difference in the qualitative properties of 
the material  when the randomness is added?  These are standard questions 
in  a calculus of variations framework.

Partial answers to these type of questions  were recently given  in  two  papers by the authors in the context 
of  the Ginzburg Landau functional, i.e in the case where the interaction  energy is    local and it is  modelled by $  \langle m, (-\Delta) m \rangle$  there $\langle \cdot, \cdot \rangle$ stands for  the $L^2$ scalar product and    $m$  is  taken in a  function  space which makes  the  scalar product   finite,  see \cite {DO} and  \cite {DO2}.
Here we  consider  a   functional in which the interaction energy is   non local,    i.e.  the state of the material at site $x \in \La$    depends on the state of  the material   in   all   $\R^d$.    We model this non local interaction energy  using the fractional Laplacian. 

 This nonlocality of the interaction needs a very different approach compared to \cite{DO} and \cite{DO2} because of the suitable interpretation of "boundary condition" in the case of a long-range interaction. In particular, an extensive 
part of the analytical work in the present paper is devoted to  so-called minimizers under compact perturbations, see    Definition \ref {min0}.

The {\em interaction energy} is given by 
 $  \langle m, (-\Delta)^s m \rangle$   for  $0<  s<1,$    the scalar product and the function space  for $m$ need to be suitable defined.   In the extreme case $s=1$ one gets the  Ginzburg Landau interaction energy and when $s=0$ one gets    $(-\Delta)^s =I$ where $I$ is the identity operator,  so $m$ at site $x$  interacts only with itself.    
 
 We add to this non local interaction energy  which penalizes spatial changes in $m$
a {\em double-well potential} $W(m),$
i.e. a nonconvex  function which has exactly two minimizers,
for simplicity $+1$ and $-1,$ modelling a two-phase material.

Finally, we add
 a term which couples $m$ to a {\em random field}  
$\theta g(\cdot,\omega)$ with mean zero,
variance  
$\theta^2$ 
and unit correlation length; i.e a term    
which   prefers at
each point in space one of the two minimizers of $W(\cdot)$ and thus 
breaks the translational invariance, but is
"neutral" in the mean.  

A    functional   with the aforementioned
properties is the following  functional 
\begin{equation} \label{functionalA}
G^{m_0}_1 (m,\omega,\La )= [m]^2_{H^s( \La,\R^d)} +
 \int_{\La} W(m(x)) \rm {d }x
-  \theta \int_{\La}  g_1 (x,\omega) m(x)\rm {d }x,
\end {equation}
  where 
\begin{equation} \label{J1}[m]^2_{H^s( \La,\R^d)}  =    \int_{\La} \rm {d }x \int_{\La} \rm {d }y\frac { | m(x) - m(y)|^2} {|x-y|^{d+2s}}  +  2   \int_{ \La } \rm {d }x \int_{\R^d \setminus \La} \rm {d }y\frac { | m(x) - m_0(y)|^2} {|x-y|^{d+2s}} \end {equation}
   denotes the total contribution from $\La$ to the $H^s (\R^d)$  Gagliardo semi-norm of  $m,$
if we set  $m= m_0$ in $\R^d \setminus \La$ in \eqref {J1}.
The Gagliardo semi-norm is given by 
 \begin{equation} \label{sp1}  \begin {split} &   \int_{\R^d} \rm {d }x \int_{\R^d} \rm {d }y\frac { | m(x) - m(y)|^2} {|x-y|^{d+2s}} =    [m]^2_{H^s( \La,\R^d)}  +   \int_{\R^d \setminus \La } \rm {d }x \int_{\R^d \setminus \La} \rm {d }y\frac { | m_0(x) - m_0(y)|^2} {|x-y|^{d+2s}}  . \end {split}
\end {equation} For the minimization problem   the term depending only on the value  of  $m_0$  in  the Gagliardo semi-norm    is irrelevant, since this term is kept fixed trough the minimization procedure.  
For dimensional reason the right hand side of \eqref {J1} should be multiplied by 
    $  c_{d,s} $,  a   normalizing constant  which  degenerates  when $ s \to 1$ or $s \to 0$.  
      In the following   the constant  $c_{d,s}$  does not play any role, so we  replace it by  $1$. 

      We are interested  in      determining    the  {\it macroscopic minimizers}    of  \eqref {functionalA},  i.e   minimizers   of  \eqref {functionalA}  over sequences of regions   $\La_n$    so that $\La_n\nearrow \R^d$   as $n\to \infty.$   Namely for any given  $ \La$ and fixed boundary value $m_0$ the minimizers of  \eqref {functionalA} over any  reasonable set of functions will depend on the boundary value $m_0$.   Physically one is interested in  taking $ \La$ large enough and to characterize   the minimizers  in a region deep inside  $ \La$ and detect if, even so deeply inside, the boundary condition  is  felt.   
In  other word $ \La$ needs to be  large to invade $\R^d$ and we are interested in characterizing the macroscopic minimizer which we construct   by a limit procedure   using minimization on
a sequence of finite subsets of $\R^d$. 

When $\theta=0$, i.e without random term, 
the constant functions equal to $\pm 1$ are the two macroscopic minimizers:
One can obtain  the $+1$ ($-1$) minimizer  as the limit of the minimizers of  \eqref {functionalA} when $\theta=0$ with  strictly  positive  (strictly negative) boundary values   by making use of the fact that the cost of a "boundary layer" near the boundary of large balls is of smaller order than the volume as the balls invade $\R^d,$ a point to which we will come back below, see \eqref{G10}. 

 When the random field is added, the 
constant functions  equal to $\pm 1$ are not minimizers anymore, due to the presence of the random fields.
 The question  is to show whether there are  still  two macroscopic  minimizers, each one close in some topology to  the constant minimizers $1$ and $-1$.
 
We are able to show   in $d=2$ for  $s \in (\frac 12 ,1)$ and in  $d=1$  for  $  s \in [\frac 14,  1) $ that for almost all the realizations of the randomness,  there  exists  one    macroscopic minimizer   which is unique under compact perturbations.  In this regime the  boundary conditions is not felt by the minimizer. 
 This is an example of uniqueness  induced by  random terms. The uniqueness holds only in the limit
$\La \nearrow \R^d$ and is sensitive to the type of randomness added.
% For more information on the latter point, including counter-examples, we refer to \cite{AW}. 
  We will come back to this point   in subsection 2.1.  For    values of  $d$ and $s$  different from the ones for which we state the uniqueness  result  we expect,  for almost all the realizations of the randomness,  the existence of at least two macroscopic
minimizers, one "close" to the constant minimizer $1$, the other  "close'' to   the constant minimizer $-1$. But this issue  is still open. 
The strategy of our proof is   based on  the following   steps.  
We   prove  first that  for almost all the realizations of the random field  
there exist two  {\it macroscopic extremal  }
minimizers  $v^\pm (\cdot, \om)$ 
  so that any other macroscopic minimizer under compact perturbations 
$u^\ast$  
satisfies $v^- (\cdot, \om) 
\le  u^* (\cdot, \om) \le v^+ (\cdot, \om)$. 
 This construction  requires two limit procedures.
First,    for any  bounded, sufficiently regular   subset of $\R^d$,   $\La$, and for any  $K>0$
%  boundary condition  $v_0$, so that $\|v_0\|_\infty \le K$,
 we  determine the  minimizers of $G_1$ in $\La$ with boundary condition $v_0=K$.
Since the functional is not convex there might be many minimizers. 
 Because the set of minimizers in a bounded domain $\La$ is ordered and compact,  we can  single out one specific minimizer which we call  the maximal $K-$ minimizer. 
 Similarly we  single out one specific minimizer $G_1$ in $\La$ with $v_0=-K$ boundary condition,
which we call  the minimal $K-$ minimizer.   The   maximal $K-$ minimizer and the  minimal $K-$ minimizer of $ G_1$ in $\La$  have the property that any other minimizer of $G_1$ in $\La$ with boundary condition $v_0$, $\|v_0\|_\infty \le K $  is  point wise smaller  than   the   maximal $K-$ minimizer and larger than  the    minimal $K-$ minimizer of $ G_1$ in $\La$.
Then we let $\La$ to invade $\R^d$ obtaining  two    infinite volume functions $u^{\pm, K}$,
and we show that  they  are  infinite volume  minimizers   under compact perturbations of $G_1$.
At last, we define  $v^\pm (\cdot, \om)$ as the point wise limit as $K\to \infty$  of  $u^{\pm, K}$, proving again  
  that   $v^\pm (\cdot, \om)$  are  extrema infinite volume  minimizers  under compact perturbations.
 Then  we  show that  for any  $  s \in   (0,   1) $ there exists a   positive constant $C$, so that  for any  bounded, sufficiently regular 
$\La \subset 
\R^d$   ,   for almost all the realizations of the random field,  
\begin {equation} \label {G10}   
\left  | G_1^{v^{+}}  ( v^+,   \om, \La )-  
G_1^{ v^-}  (v^-,   \om, \La  ) \right |   \le C    |\La|^{\frac {d-1}d}  \1_{\{s\in (\frac 12, 1)\}}  + C  |\La| ^{\frac {d-2 s} d}   \1_{\{ s\in (0, \frac 12)\} } +    \1_{\{s= \frac 12\}}  |\La|^{\frac {d-1}d}  \log |\La|. \end {equation}
 The minimizers    $v^\pm (\cdot, \om)$ depend in a highly non trivial way 
on the  random fields $ \{g(x,\omega)\}_{\{ x \in \Z^d\}}$.  Therefore also the  
difference $G_1^{v^{+}}  ( v^+,   \om, \La)-  G_1^{ v^-}  ( v^-,   \om, \La)$ 
depends on the  random fields in all of $\Z^d$. 
We  take a sequence $\La_n \subset \La_{n+1}$
and we  show that,  conditioning on the random fields in     
$\La_n$ (i.e taking the expectation  over only the random fields outside $\La_n$)
$$   F_n (\om):=    \E \left [   \left \{G_1  ( v^+ (\om),   \om, \La_n )-  
G_1  ( v^-(\om),   \om, \La_n ) \right \}| \BB_{\La_n} \right ]  $$  has significant 
fluctuations, with variance of the order of the volume.
Here $\BB_{\La_n}$ is  the $\s$ algebra generated by the random field 
in  $\La_n$. 
Namely we show that
$$  \E  \left [   F_n (\cdot )\right ] =0, $$
and for  $t \in \R$
 \begin{equation} \label{mars3}
 \liminf_{n \to \infty} 
\E  \left [ e^{t  \frac {F_n} {\sqrt { \La_n}} } \right ] 
\ge e^{\frac {t^2 D^2}  2},   \end{equation}   
where $D^2$ is given in \eqref{may1}. 
This holds in all dimensions and for all $ s \in (0,1)$.   
 In  $d=1$  and for $s \in  [\frac 1 4, 1)$,  in $d=2$   and for    $s \in   (\frac 1 2, 1)$  the bound  \eqref {mars3} generates a contradiction  with the  
bound \eqref  {G10},     unless $D^2=0$.  
 When   $ D^2 =0$  we show that  
  $ M=  \E [ \int_{Q(0)}  v^+] -   \E [  \int_{Q(0)}  v^-] =0$.   
Further, we  show  that   point-wise $v^+  \ge v^-$,   
%$ \E [ \int_{Q(0)}  v^+] \ge \E [  \int_{Q(0)}  v^-] $, 
therefore  $ \E [ \int_{Q(0)}  v^+] =   \E [  \int_{Q(0)}  v^-] =0$
and $v^+(\cdot,\om)= v^-(\cdot, \om)$,  for almost all realizations of the random field.  
%The hard part is to  prove \eqref {mars3} and link   $D^2$ 
%to  $M$.  
The probabilistic  argument  has been already applied   by
Aizenman and  Wehr, \cite {AW}, in the context of 
Ising spin systems with random external field,
see also the   monograph  by Bovier, \cite{B}, for a survey on  this  subject.

 It is instructive to  understand what one can say about the functional \eqref  {functionalA} when $ \theta=0$.   
 Denote   $J^{m_0} (m, \La) $  the functional \eqref {functionalA} when $ \theta =0$.  In this case 
the constants  $m(x)=  \tau $  for $x  \in  \R^d$ and  $ \tau = \pm 1$ are 
 the   only {\em bounded} global  macroscopic minimizers under compact perturbations.   
 To pass to    a so-called {\em macroscopic} scale, which
is coarser than the mesoscopic scale,       we 
rescale space with a small parameter
$\eps.$ If $\D=\eps \La $ and $u(z)=m(\eps^{-1}z) $  and  $u_0(z)=m_0(\eps^{-1}z)$  we obtain
\begin {equation} \label {functB1}
\tilde J^{u_0}_\eps  ( u,  \D )=    \e^{2s-d} [u]^2_{H^s( \D,\R^d)} +
 \e^{-d} \int_{ \D} W( u(z)) \rm {d }z. 
\end {equation}

%\begin {equation} \label {functB1}
%\tilde J^{u_0}_\eps  ( u,  \D )= c_{d,s}   \e^{2s-d} \int_{\R^d} \rm {d }z \int_{\R^d} \rm {d }z'\frac { |  u(z) -  u(z')|^2} {|z-z'|^{d+2s}} +
% \e^{-d} \int_{ \D} W( u(z)) \rm {d }z. 
%\end {equation} 
 Functionals with a  finite energy on this scale must be Lebesgue almost everywhere close to one of the two minimizers. 
The second step is  to determine  the cost of forming an interface between the spatial regions occupied by these two different minimizers. 
 
As in the case of the corresponding local functional  one needs to  normalize  $\tilde J^{u_0}_\eps  ( u,   \D)$ by  a power of $\epsilon$ related to  
the   dimension of the interface,  which is not necessarily an integer in this case, see also Lemma \ref{boun1}.  Computations similar to the ones done to obtain \eqref  {G10} give  for  $\theta=0$ 
 a factor of     $ \e^{-d+1}$ when $ s \in (\frac 12, 1)$,   $ \e^{-d+2s}$ when $ s \in (0, \frac 12)$,
and by  $ \e^{-d+1} \log {\frac 1 \e} $ when $ s = \frac 12$. 
Therefore   
we obtain
\begin {equation} \label {functB}
J^{u_0}_\eps  ( u,   \D)=  \left \{  \begin {split}  &     \e^{2s-1}  [u]^2_{H^s( \D,\R^d)}  +
  \e^{-1} \int_{\D} W( u(z)) \rm {d }z, \quad  s \in (\frac 12, 1)\cr &
      [u]^2_{H^s( \D,\R^d)}  +
 \e^{-2s}  \int_{\D} W( u(z)) \rm {d }z, \quad  s \in (0, \frac 12)  \cr &
   \frac { \e^{2s} } { \e \log \e }  [u]^2_{H^s( \D,\R^d)}   +
 \frac 1 {\e \log \e }  \int_{\D} W( u(z)) \rm {d }z, \quad  s = \frac 12.  
 \end {split} \right. 
 \end {equation}  
The $\Gamma$-convergence  for the  functional \eqref {functB} has been studied  by   Savin and Valdinoci,  \cite {SV}. 
They show  that  the functional $ J^{u_0}_\eps  ( u,   \D)$ $\G-$ converges to the classical minimal surface functional when $s \in [\frac 12,1)$  while, when $s \in (0,\frac 12)$ the functional $\G-$ converges to the nonlocal minimal surface functional. 
There are in the literature other results dealing with    $\G-$ convergence of non local functionals, see e.g. \cite {GM}, \cite {GP}, \cite {Go} and references therein, but they are different from  the deterministic part of the functional that we are considering, either for the explicit form  or  because they  do not consider the full interaction    of $\La $  with all of   $\R^d$.    Physically this implies that  the particles in the domain $\La$ interact with all the particles in $\R^d$ and not only with those ones in $\La$,  i.e. a sort of nonlocal Dirichlet boundary condition.

\section{Notations and Results}

 We denote by $\La \subset \R^d$ a generic open subset of  $\R^d$,  by $ \partial \La$  the boundary of $\La$ and by $\La^c= \R^d \setminus  \La$.  When $\La$ is a bounded subset of $\R^d$ we  write  $\La \Subset \R^d$.  We denote by 
     $ |x|$ the euclidean norm of $x \in \R^d$,  by $ | \La|  $ the  volume of $\La$, 
  by $ \operatorname {diam} (\La) = \sup  \{  |x-y|,   \quad   x \  \hbox  {and}  \ y \in  \La \}$ 
  and by $ d_{\partial \La}(x)$    the  euclidean  distance from $x$ to $ \partial \La$.
 We will consider  domain  $\La$  with Lipschitz boundary  regularity,  i.e the boundary  can be thought of as locally being the graph of a Lipschitz continuous function, see for example \cite {DPV1}.
 { It is useful to introduce the following definition.  We  say that    
  a set with Lipschitz boundary  $\Lambda  \Subset \R^d $ is {\em cube-like}  if  $ {\mathcal H}^{d-1}(\partial \Lambda)\le C|\Lambda|^{\frac{d-1}{d}}$ and ${\rm diam}(\Lambda)\le C|\Lambda|^{\frac{1}{d}},$ where   ${\mathcal H}^{d-1}$ is  the $d-1$-dimensional Hausdorff measure and $C>0$ is a constant depending only on the dimension $d$.

   For $t $ and $s$ in  $\R$  we denote $s\wedge t= \min \{s,t\}$ and $s \vee t =  \max \{s,t\}$.
  For  $\La \subset \R^d$, we denote by $C^{k, \alpha} (\Lambda)$, $ k \ge 0$ an integer, $\alpha \in (0,1]$
the set of functions continuous  and  having continuous derivatives up to order  $k$,     such that the  $k$-th  partial derivatives  are H\"older continuous  with exponent  $\alpha$.  
 \subsection{The disorder} 
 The disorder or random field is constructed with the help of
a family  of 
independent,
identically distributed  random variables with mean zero and variance  equal to 1.
  We assume  that each  random variable    has   distribution    
absolutely continuous with respect to the
Lebesgue measure 
 and that the  Lebesgue density  is a 
symmetric, compactly supported function on $\R$. 
The corresponding infinite product measure on 
${\R}^{\Z^d}$   will be denoted by 
$\Pr$ and by $\E[\cdot]$ the mean with respect to $\Pr$.
We denote this family of random variables by
$\{g(z,\omega)\}_{z\in \Z^d}$, $ \omega \in \Omega$   where we identify  $\Omega $ with  ${\R}^{\Z^d}$. 
These assumptions  imply  that there exists  a finite $A>0$ so that 
  
  \begin {equation} \label {eq:ass}  
\quad \E[ g(z)]=0, \quad 
\E [g^2(z)]=1,   \quad  \forall z \in \Z^d \quad \text{and}\quad      \|g\|_\infty = \sup_{z} |g(z, \om)|  = A, \quad  \Pr\text{ - a.s.}  \end {equation}
 The boundedness  assumption is  not essential. 
Different  choices 
of $g $ could be handled by minor modifications provided  $g$ 
is  still a random field with finite correlation length, 
invariant under (integer) translations and such that
$g(z,\cdot)$ has a symmetric distribution,  absolutely continuous w.r.t 
the Lebesgue measure and  $ \E[ g(z)^{2+ \eta}] < \infty$, $ z \in \Z^d$ for $\eta >0$. 
The  method  does not apply  when   $g$ has atoms,  i.e. its distribution is not absolutely continuous with respect to the Lebesgue measure, 
see Remark  \ref {AC1}.    It is not clear to us if this requirement  is   purely technical   or if the discrete distribution of the random field may cause a degeneracy of the ground state like in the Ising spin systems  \cite{AW}. 
%In  Ising spin systems, the uniqueness of the minimizer  
%may fail if the distribution of $g$ has atoms, 
%see \cite{AW}.     

The symmetry of  the measure  
$\Pr $  is essential for obtaining the result. 
Namely if  $\Pr $ does not have a symmetric distribution, 
it would be no longer natural to compare the  
qualitative properties of the functional  \eqref{functionalA}
for $\theta \neq 0$  with the functional \eqref{functionalA} with $\theta=0.$
%\clr{
%as in the case $\theta=0$ the functional is symmetric under
%the transformation $m\mapsto -m.$ 
%One may instead    
%compare    the functional  \eqref{functionalA}
%with $\theta \neq 0$ with the functional   \eqref{functionalA}
%to which a  term $ h \int_\La m(x) dx$ is added,  
%with  the  constant $h$   equal  to the mean  of $g(z)$.  
Therefore   
in the following we  always assume   that $\Pr$ is symmetric.

We denote by  $ \BB$ the product $\sigma-$algebra and by $ \BB_\Lambda$, $  \Lambda \subset \Z^d$,  the $\sigma-$ algebra generated by $\{ g(z, \om): z \in \Lambda \}$.  
In the following we  often identify  the random field $ \{g(z, \cdot): z \in \Z^d\}$ with the coordinate maps  $ \{ g(z, \om)= \omega (z): z \in   \Z^d\}$. 
  To use ergodicity properties of the random field it is convenient 
to  equip  the  probability space $ (\Om, \BB, \Pr) $    with some extra structure.
First, we define the action $T$  of the translation group  ${\Z}^d$  on $ \Om$. We will assume that $ \Pr$  is invariant under this action and that the dynamical system  $ (\Om, \BB, \Pr, T)$ is stationary and ergodic.
In our model 
the action of  $T$  is  for $y \in \Z^d$  
 \begin {equation} \label {parisv1} (g (z_1,  [T_y \om]), . . . , g (z_n,  [T_y \om])) = ( g ( z_1+y,  \om) , . . . ,  g (z_n + y,  \om )). \end {equation}
%The disorder or random field in the functional will be obtained
%by a rescaling 
%of $g$ such that the correlation length is order $\eps$ and the
%amplitude grows as $\eps\to 0.$
%To  this end define for $x \in \Lambda$ a function
%$g_\e (\cdot, \omega)\in L^{\infty}(\Lambda)$  by
%\begin{equation}\label{gfrombern}
%g_\e (x, \omega):=\sum_{z\in \Z^d}g(z,\omega)
%\1_{\e (z+[-\frac 12 ,\frac 12 ]^d)\cap \Lambda }(x),
%\end{equation} where for any Borel-measurable set $A$
The disorder or random field in the functional will be obtained
 setting for  $x \in \Lambda$  
 \begin{equation}\label{gfrombern}
g_1 (x, \omega):=\sum_{z\in \Z^d}g(z,\omega)
\1_{ (z+[-\frac 12 ,\frac 12 ]^d)\cap \Lambda }(x),
\end{equation} where for any Borel-measurable set $A$
$$
  \1_A (x):= \begin{cases} &
1, {\rm if\ } x
\in A\\ &  0 \ {\rm if\ } x \not\in A . 
\end{cases}
$$
\subsection{The double well potential}  Next we  define  the   ``double-well potential''  $W$: 
\vskip0.5cm
\noindent
{\bf  Assumption (H1) }  $ W \in C^{2}(\R)$,  $W\ge 0$, $ W(t)=0$ iff $t \in
\{-1,1\}$, $W(t) =
W(-t)$ and
$W(t)$ is strictly decreasing in   $[0,1]$. Moreover there exists
$\delta_0$ and $C_0>0$ so that
\begin{equation} \label{V.1}   W(t) = \frac 1 {2 C_0} (t-1)^2 
\qquad \forall t \in (1-\delta_0,  \infty).
\end{equation}
 Note that $W$ is slightly different from the standard choice
$W(u)=(1-u^2)^2.$ Our choice simplifies some proofs because it makes
the Euler-Lagrange equation  linear provided solutions stay in one ``well.''
Note that in order to obtain our uniqueness result we could replace the equality in (\ref{V.1}) by a lower bound on $W(t)$ of the same form. 
 \subsection{The functional}  We start introducing  the functional spaces in which we  define
 the nonlocal interaction term. 
 \begin {defin}  {\bf Fractional Sobolev spaces} 
%For  $s \in (0,1)$ define
   %$ H^{s}:= H^{s}(\R^d, \R)$  the set of all   real functions  $f$ from  $\R^d$ to $\R$   so %that
%   $$ \| f\|_s^2 = \int_{\R^d} (1+ |\xi |^{2s} ) \hat f (\xi )|^2 d \xi $$
 %  is finite. 
 %  Comment ENZA : This space  should be the same as the space named $\dot H^s$, %see page 71 of luis readable.  There the norm is defined as following.
  %$$ \| f\|_{\dot H^s} ^2 = \int_{\R^d \times \R^d }  \frac {( f(x) - f(y))^2} { |x-y|^{d+2s}  }d %x dy  $$
  %  It might be convenient to define directly in this way. 
  % The space $ \dot H^s$ is an Hilbert space with the inner product  given by
 %  $$ <f,g>_{\dot H^s}=  \int_{\R^d \times \R^d }  \frac {( f(x) - f(y)) ( g(x) - g(y)) } { |x-y|^{d+2s}  }d x dy =  \int_{\R^d} f(x) (-\Delta)^s g(x) dx. $$
% When $d -2s>0$,  $ \dot H^s  \subset  L^{\frac {2d}{d-2s}} $
%$$ \dot H^s = \{ f \in L^{\frac {2d}{d-2s}} (\R^d,R):   \| f\|_{\dot H^s} < \infty\}, $$ 
%see below. 
%Notice that if $f(x)= C$ for all $x \in \R^d$,   $\| f\|_{\dot H^s} ^2 =0 $ but $f \notin  \dot %H^s$.  
  Let $ D \subset \R^d$ be an open domain and   $s \in (0,1)$.  We define the  fractional Sobolev space  $H^s (D)$  as the set of functions $ f \in L^2(D)$ so that
  $$\int_{D \times D }  \frac {( f(x) - f(y))^2 } { |x-y|^{d+2s}  }d x dy  < \infty.$$ 
 This space, endowed with the norm 
 $$ \|f\|_{H^s (D)}=   \|f\|_{L^2 (D)} +\left (  \int_{D \times D }  \frac {( f(x) - f(y))^2 } { |x-y|^{d+2s}  }d x dy \right)^{\frac 12}$$
 is an  Hilbert space. 
% \end {defin} 
% \medskip
% \begin {defin}  {\bf Local fractional Sobolev spaces} 
 We will say that $ f \in H^s_{loc} (\R^d)$, $s \in (0,1)$, if  $ f \in H^s (B_R)$   for any ball of radius $R$ in $\R^d$.
 \end {defin}  
   
   \medskip

For  $ v     \in   H^s_{loc} (\R^d) $, 
$\Lambda \Subset \R^d $ 
  denote       
   \begin{equation} \label{functional2}
\KK_1 (v,\omega,\Lambda) =     \int_{\Lambda} \rm {d }x \int_{\Lambda} \rm {d }y\frac { | v(x) - v(y)|^2} {|x-y|^{d+2s}} +
 \int_{\Lambda} W(v(x)) \rm {d }x
-  \theta \int_{\Lambda}  g_1 (x,\omega) v(x)\rm {d }x.
\end {equation}
Now we introduce some definitions  needed  to specify ``boundary conditions" in a sense appropriate for nonlocal functionals.  \newline 
For  any $ \Lambda \Subset \R^d $ and $\Lambda_1 \subset \R^d$, $  \Lambda_1 \cap  \Lambda= \emptyset$,   for $v$ and $u$ in     $H^s_{loc} (\R^d)$ denote 
\begin{equation}  \label {int1}
 \WW ( (v,  \Lambda),  ( u, \Lambda_1))= 2  \int_{\Lambda} \rm {d }x \int_{  \Lambda_1} \rm {d }y
 \frac { | v (x) - u(y)|^2} {|x-y|^{d+2s}}   
  \end {equation}
  the interaction between    the function $v$   in   $\Lambda$   and the function $u$ in  $\Lambda_1$.
  Note that if  $\Lambda_1$ is not a bounded set,  the  term in \eqref {int1} might not be finite. 
  We will show  in Lemma  \ref {boun1} that when $v \in     H^s_{loc} (\R^d) \cap L^\infty (\R^d)$ then
 $ \WW ( (v,  \Lambda),  (v, \Lambda_1))$  is  bounded, the bound depends on $| \La|$. 
   When   $\Lambda_1=   \Lambda^c$  \ and $u=v$  we  simply write 
\begin{equation}  \label {fe1} \WW (v, \Lambda) = 2   \int_{\Lambda} \rm {d }x \int_ {\Lambda^c} \rm {d }y\frac { | v(x) - v(y)|^2} {|x-y|^{d+2s}}.    \end {equation}
    
     \begin {defin}  {\bf  The Functional}  For  any $ \Lambda \subset \R^d $, $ v     \in   H^s_{loc}(\R^d)   \cap L^\infty (\R^d) $   we define 
\begin{equation}  \label {funct11}
  G_1(v,\omega,\Lambda ) =\KK_1 (v,\omega,\Lambda)+  \WW (v, \Lambda).  
    \end {equation}
    Whenever we need to stress the dependence of $G_1$ on the  value of   $v$ outside $\La$, i.e.   $v (y) = v_0(y), y \in \La^c$,  
    we   will  write 
    \begin{equation}  \label {funct110}
  G_1^{v_0}(v,\omega,\Lambda ) =\KK_1 (v,\omega,\Lambda)+    \WW ((v, \La) (v_0, \La^c)). \end {equation}
  \end {defin}  
 We  list some useful properties of the functionals  $G_1$ and  $\KK_1$ that follow immediately from the definitions. 

\begin{lem}{\quad }

\begin {itemize}
\item $\KK_1$ is superadditive, i.e. if  $A$ and $B$ are disjoint sets then
$$ \KK_1(v, \om, A \cup B) \ge  \KK_1(v, \om, A )+ \KK_1(v, \om,  B),$$
\item 
$G_1$ is  subadditive, i.e. if  $A$ and $B$ are disjoint sets then
\begin{equation}  \label{rome3} G_1(v, \om, A \cup B) \le   G_1(v, \om, A )+ G_1(v, \om,  B). \end {equation}
\end {itemize}
\end{lem}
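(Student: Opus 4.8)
The plan is to prove both statements by an elementary decomposition of the domains of the defining double integrals according to the disjoint splitting $A\cup B$, keeping track of the signs of the cross-terms that appear. The single integrals, namely the potential term $\int_\La W(v)$ and the random term $-\theta\int_\La g_1 v$, are exactly additive over disjoint sets, so the only thing to monitor is the quadratic part (the Gagliardo double integral and the boundary term $\WW$), and here the crucial facts are that the kernel $|v(x)-v(y)|^2|x-y|^{-d-2s}$ is nonnegative and symmetric under $x\leftrightarrow y$.

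For the superadditivity of $\KK_1$ I would take $\La=A\cup B$ with $A\cap B=\emptyset$, so that $\La\times\La$ is the disjoint union of $A\times A$, $B\times B$, $A\times B$ and $B\times A$. By symmetry of the kernel the Gagliardo integral over $\La\times\La$ equals the sum of the integrals over $A\times A$ and $B\times B$ plus the nonnegative cross-term $2\int_A dx\int_B dy\,|v(x)-v(y)|^2|x-y|^{-d-2s}$. Dropping this nonnegative cross-term and restoring the additive potential and random contributions immediately gives $\KK_1(v,\om,A\cup B)\ge\KK_1(v,\om,A)+\KK_1(v,\om,B)$.

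For the subadditivity of $G_1$ the plan is to combine the identity just obtained with the analogous split of $\WW$. Since $A$ and $B$ are disjoint we have $B\subset A^c$ and $A\subset B^c$, hence the disjoint decompositions $A^c=B\cup(A^c\cap B^c)$ and $B^c=A\cup(A^c\cap B^c)$, and $(A\cup B)^c=A^c\cap B^c$; using these together with the symmetry of the kernel one finds
\[
\WW(v,A)+\WW(v,B)=\WW(v,A\cup B)+4\int_A dx\int_B dy\,\frac{|v(x)-v(y)|^2}{|x-y|^{d+2s}}.
\]
Adding this to the identity $\KK_1(v,\om,A\cup B)=\KK_1(v,\om,A)+\KK_1(v,\om,B)+2\int_A dx\int_B dy\,|v(x)-v(y)|^2|x-y|^{-d-2s}$ yields
\[
G_1(v,\om,A\cup B)=G_1(v,\om,A)+G_1(v,\om,B)-2\int_A dx\int_B dy\,\frac{|v(x)-v(y)|^2}{|x-y|^{d+2s}},
\]
and since the subtracted term is nonnegative, subadditivity follows.

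There is no genuine obstacle here; the only point deserving a remark is finiteness, so that the rearrangements above manipulate honest numbers rather than $\infty-\infty$. For $v\in H^s_{loc}(\R^d)\cap L^\infty(\R^d)$ and $A,B$ bounded all the double integrals that occur are finite --- this is exactly the content of Lemma \ref{boun1} --- so every step is legitimate. If $A$ or $B$ is unbounded the same identities and inequalities still hold in $[0,+\infty]$ with the usual conventions, since each quadratic term is the integral of a nonnegative kernel and the random term remains finite because $g_1$ is bounded and $v\in L^\infty$.
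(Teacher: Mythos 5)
Your proof is correct and is precisely the direct unwinding of the definitions that the paper alludes to when it says these properties ``follow immediately from the definitions'' (no explicit proof is given in the paper). The key identities — the cross-term $2\int_A\int_B$ appearing with opposite signs in the decomposition of the Gagliardo integral and of $\WW$, so that it cancels favorably in $G_1$ — are exactly the content.
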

 
    \begin {defin}  \label {min0}{\bf The minimizers }   \begin{enumerate}\item 
    We say that $u \in H^s_{loc} (\R^d)\cap L^\infty(\R^d)$ is a minimizer  under compact perturbations  for $G_1$ in   $ \Lambda \subset \R^d $ if for any compact  subdomain $ U \subset \La$ we  have
    $$   G_1(u,\omega, U )< \infty, \quad \Pr\  a. s. $$
    and 
    $$  G_1(u,\omega,U )\le   G_1(v,\omega,U )  \quad \Pr \ a. s.$$
    for any $v$ which coincides with $u$ in $ \R^d \setminus U$.
    \item  Let $v_0\in  L^\infty(\R^d)$ be independent of $\omega\in \Omega.$  We say that $u \in H^s_{loc} (\R^d)\cap L^\infty(\R^d)$ is a $v_0$- minimizer \ for $G_1$ in   $ \Lambda \subset \R^d $ if for any compact subdomain $ U \subset \La$ we  have
    $$   G_1^{v_0}(u,\omega, U )< \infty, \quad \Pr \,a. s. $$
    and 
    $$  G_1^{v_0}(u,\omega,U )\le   G_1(v,\omega,U )  \quad \Pr\, a. s.$$
    for any $v$ which coincides with $v^0$ in $ \R^d \setminus U$. 
    \item  We say $u$ is a free minimizer on $\Lambda$ if it minimizes ${\mathcal K}_1(\cdot, \omega,\Lambda)$ in $H^s(\Lambda).$  
\end{enumerate}
    \end {defin}
    Note that $v_0$ will usually be a constant function.
    \medskip
    \begin{rem}[Existence]
   Existence of $v_0$-minimizers  (for sufficiently regular $v^0$)  and free minimizers in a bounded  Lipschitz set $\La \subset \R^d$
    follows from the compact embedding  of $H^s (\La)$ in $L^2(\La)$ and the lower semicontinuity of the $H^s$-norm.  We prove  the existence of a $v_0$-minimizer in Lemma \ref {jt1} and  Lemma \ref {jt1a} in the Appendix. 
    The existence of  exactly one 
    minimizer under compact perturbations is a consequence of the main theorem. 
    \end{rem}  
  \medskip
    \begin {defin}  { \bf   Translational covariant states}  We say that  the function  $ v: \R^d \times \Omega \to \R$  is   translational covariant if
 \begin {equation} \label {rome1}  v(x+y, \omega) = v(x, [T_{-y} \om]) \quad   \forall y \in \Z^d, \quad x \in \R^d. \end {equation}
\end {defin}

Our   main result is  the following. 
 \begin{thm} \label{min1}  Take    $ d=2 $ and  $ s \in  (\frac 1 2, 1)$  or $d=1$ and $ s \in [\frac 14,1)$,   $ \theta $    strictly positive. Let    $n \in \N$, $  \La_n = (-\frac n 2, \frac n 2)^d$ \begin {footnote}     {One could take any  increasing,  {\it  cube-like},  sequences of sets    $ \{\La_n\}_n$, $  \La_n \subset  \R^d$    invading $\R^d$. The proof goes in the same way. } \end {footnote},  $v_0 \in L^\infty(\R^d)$     and   $ u^*_n$  be   a $v_0$-minimizer of $G_1$ in $\La_n$  according to Definition  \ref {min0}. 
 Then   $\Pr$ a.s.   there exists  a {\em unique}  $u^* (\cdot, \om) ,$ {\em independent of the choice of $v_0,$}    defined as  
 \begin {equation}  \label {ra1}\lim_{n \to \infty}   u^*_n (x, \om)= u^* (x, \om) \qquad {\rm (uniformly\ on\ compacts\ in}\ x{\rm)} \end {equation}
   so that  
   \begin{itemize}
 \item  $u^* (\cdot, \om) $ is   translation covariant, see \eqref {rome1}. 
 \item   $\| u^* (\cdot, \om)\|_{\infty}   \le 1+ C_0 \theta \|g \|_\infty $  where $C_0$ is the constant in \eqref {V.1}.  
  \item     $u^* (\cdot, \om)  \in C^{0,\alpha}_{\operatorname {loc}  (\R^d}) $ for any $\alpha < 2 s$ when    $ 2s \le 1$,    
  $u^*  \in  C^{1,\alpha}_{\operatorname {loc}} (\R^d) $  for any $ \alpha < 2s-1$,  when  $ 2s >1$.
     \item    $$        \E[u^*(x ,\cdot) ]=0, \qquad  \forall x \in  \R^d .$$ 
 % $$   \E \left [   \int_{z + [-\frac 12, \frac 12 ]^d}    u^*  (x, \cdot) dx \right ] =0,   \quad   \forall z \in %\Z^d. $$
  \end {itemize}
  \end{thm}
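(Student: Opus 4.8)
The plan is to carry out the program sketched in the introduction. First I would construct the two extremal macroscopic minimizers $v^\pm(\cdot,\om)$. For fixed $K>0$ and each bounded cube-like $\La$, the set of $v_0$-minimizers with $\|v_0\|_\infty\le K$ is non-empty (by the existence results quoted, Lemma \ref{jt1} and Lemma \ref{jt1a}), and, because the Euler--Lagrange equation is linear inside a well, it enjoys a comparison principle, so this set is a lattice and is compact in the appropriate topology; hence there is a maximal $K$-minimizer $u^{+,K}_\La$ and a minimal one $u^{-,K}_\La$. Using the $L^\infty$ bound $\|u\|_\infty\le 1+C_0\theta\|g\|_\infty$ (which follows from truncation and \eqref{V.1}) together with the interior H\"older / $C^{1,\alpha}$ estimates for the fractional Laplacian and a monotone-limit argument as $\La\nearrow\R^d$, I obtain infinite-volume minimizers under compact perturbations $u^{\pm,K}$; monotonicity in $K$ then yields the pointwise limits $v^\pm(\cdot,\om)=\lim_{K\to\infty}u^{\pm,K}$, which are again minimizers under compact perturbations, are translation covariant (since the maximal/minimal constructions commute with integer shifts and the random field is translation covariant), satisfy $v^-\le v^+$ pointwise, and are sandwiched between any other macroscopic minimizer; by symmetry of the law of the disorder, $-v^-(\cdot,\om)$ has the same distribution as $v^+(\cdot,\om)$.

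Second, I would establish the deterministic energy bound \eqref{G10}. Since $v^+=v^-$ on $\La^c$ is false in general, but both are bounded macroscopic minimizers, I compare $G_1^{v^+}(v^+,\om,\La)$ with $G_1^{v^+}$ evaluated at the competitor obtained by gluing $v^+$ inside a thin collar near $\partial\La$ to $v^-$ deep inside, and vice versa; the cost of such an interpolating ``boundary layer'' is controlled exactly by the kernel $|x-y|^{-d-2s}$ integrated across a shell of width $O(1)$, which produces the three regimes $|\La|^{(d-1)/d}$ for $s\in(1/2,1)$, $|\La|^{(d-2s)/d}$ for $s\in(0,1/2)$, and $|\La|^{(d-1)/d}\log|\La|$ for $s=1/2$ (this is the content of Lemma \ref{boun1} and the computation behind \eqref{functB}). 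Combining the two one-sided comparisons gives \eqref{G10}.

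Third comes the probabilistic Aizenman--Wehr argument. Set $F_n(\om)=\E[\,G_1(v^+(\om),\om,\La_n)-G_1(v^-(\om),\om,\La_n)\mid\BB_{\La_n}\,]$. Symmetry of $\Pr$ together with the distributional identity $v^+\stackrel{d}{=}-v^-$ gives $\E[F_n]=0$. Writing $F_n$ as a telescoping sum of martingale-type increments coming from resampling the single random variable $g(z,\cdot)$ for $z\in\La_n$, each increment has a variance bounded below by a constant multiple of $\theta^2\big(\E[\int_{Q(0)}v^+]-\E[\int_{Q(0)}v^-]\big)^2=\theta^2 M^2$ (this is where absolute continuity of the marginal and the explicit linear coupling $-\theta g_1(x)v(x)$ are used, exactly as in \cite{AW}), so $\mathrm{Var}(F_n)\ge c\,|\La_n|\,D^2$ with $D^2$ as in \eqref{may1}, and a standard convexity/CLT estimate yields \eqref{mars3}. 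But \eqref{G10} forces $|F_n|\le |\E[\,\cdot\mid\BB_{\La_n}]|\le$ the right-hand side of \eqref{G10}, which is $o(\sqrt{|\La_n|})$ precisely when $d=2,s\in(1/2,1)$ or $d=1,s\in[1/4,1)$; this contradicts \eqref{mars3} unless $D^2=0$, i.e. $M=0$. Since $v^+\ge v^-$ pointwise and $\E[v^+]=\E[v^-]$ would then both equal $0$ by the symmetry $\E[v^+]=-\E[v^-]$, we get $v^+(\cdot,\om)=v^-(\cdot,\om)$ a.s.; by the sandwich property every macroscopic minimizer coincides with this common function $u^*$, independently of $v_0$. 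Translation covariance, the $L^\infty$ bound, the regularity, and $\E[u^*(x,\cdot)]=0$ are then inherited from $v^\pm$, and the uniform-on-compacts convergence $u_n^*\to u^*$ follows from the squeeze $u^{-,K}_{\La_n}\le u_n^*\le u^{+,K}_{\La_n}$ once $K$ is taken large (using the $L^\infty$ bound on $v_0$) plus the compactness of the interior estimates.

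\textbf{Main obstacle.} The delicate point is the lower bound $\mathrm{Var}(F_n)\ge c|\La_n|D^2$: one must show that conditioning on $\BB_{\La_n}$ does not wash out the fluctuations, which requires simultaneously (i) a quantitative lower bound on the effect of a single-site resampling on the conditional energy difference, exploiting that $v^\pm$ react to $g(z,\cdot)$ through the linear term, and (ii) enough ergodicity/decoupling so that these per-site contributions add up to order $|\La_n|$ rather than cancelling; handling the long-range kernel (which couples all of $\Z^d$) while keeping this estimate is the technical heart, and is exactly why the admissible range of $(d,s)$ is restricted.
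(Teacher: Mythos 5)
Your proposal reproduces the paper's strategy essentially step for step: the two-limit construction of $v^{\pm}$ (first $\La\nearrow\R^d$ at fixed boundary height $K$ using the maximal/minimal minimizers and the FKG-type lattice structure from Lemma~\ref{FGK0}--\ref{FGK}, then $K\to\infty$), the surface-order energy bound of Lemma~\ref{A1} via a cut-off competitor in a width-one collar, and the Aizenman--Wehr martingale CLT contradiction showing $D^2=0$, from which $m^+=m^-=0$ and, combined with the pointwise order $v^-\le v^+$, $v^+=v^-$ a.s.; the squeeze $u_n^{-,K}\le u_n^*\le u_n^{+,K}$ then gives the claimed uniform-on-compacts convergence and transfers regularity, covariance, the $L^\infty$ bound, and $\E[u^*]=0$ from Theorem~\ref{infvol}. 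Two small imprecisions, neither fatal: you assert a quantitative per-increment variance bound of order $\theta^2 M^2$, whereas the paper only establishes (and only needs) the qualitative implication $D^2=0\Rightarrow M=0$ via the monotonicity of $\om(0)\mapsto\int_{Q(0)}v^{\pm}$ (Remark~\ref{R1}) and the a.e.\ derivative identity of Corollary~\ref{A2b}/Lemma~\ref{d3}; and you invoke only a distributional identity $v^+\stackrel{d}{=}-v^-$, while the paper has the stronger pathwise flip $v^+(x,\om)=-v^-(x,-\om)$, though the weaker statement suffices for $\E[F_n]=0$.
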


   \begin {rem}  \label {ra2}    Since  for any set $\La \Subset \R^d$, $C^{0,\alpha} (\La) \subset C^{0,\beta}(\La)$  for $ \beta <\alpha$  and the inclusion is compact,    the convergence  in \eqref {ra1} holds  in $C^{0,\beta}$,  $ \beta < 2s$ when $s \in (0, \frac 12 ],$ because we can find $\alpha$ with $ \beta < \alpha<2s.$    Similarly one obtains convergence of  \eqref {ra1} in  $C^{1,\beta}$,  $ \beta < 2s-1$ when $s \in (\frac 12,1)$.  \end {rem}

\begin {rem}  When  $\theta=0$ in \eqref {functional2},  i.e the random field is absent,   the minimum value of $\mathcal {K}_1(\cdot,\cdot,\Lambda)$ is zero for any bounded $\Lambda$ and 
there are  exactly two translation covariant minimizers under compact perturbations, the constant   functions  identically 
equal to $1$ or to $- 1$. 
\end {rem}

    \section{Finite volume Minimizers}
 In this section we   state      properties for minimizers of the  functional $ G_1$ in any bounded set   
  $ \Lambda \subset \R^d $.
    These properties hold in all   dimensions $d$, for all bounded $\Lambda$ \ with Lipschitz boundary   and for    every $ \om \in \Omega$.    
The $\om$ plays the role of a parameter.     
 We start  showing  that   
to determine the   minimizers of  $ \KK_1$ in $\La$
it is sufficient to consider    functions   $v$ 
 satisfying  a uniform $L^\infty$-bound.

  For any $t>0$ denote  by $v^t=   t \wedge v\vee (- t)$.  
 \begin{lem} \label{A} 
Let the double well potential $W$ satisfy Assumption  (H1).  
 \begin{enumerate}\item For all $\om \in \Omega$,  for all 
$v \in  H^s  (\La)$ and all $t  \ge 1+ 
C_0\theta  \|g \|_\infty$ 
\begin{equation}\label{eqabove}
   \mathcal K_1(v, \om, \La)-  \mathcal K_1 (v^t,\om, \La )\ge  
  \int_{\Lambda_t}   \left (   C_0^{-1} (t-1)  -
 \theta \|g\|_\infty
\right ) (|v(y)|-t),  
\end{equation}
where $C_0$ is the constant in \eqref{V.1} and 
$ \Lambda_t= \{ y \in \Lambda: |v(y)| >t\}.$
\item  Take  $ v_0 \in H^s_{loc} (\R^d) \cap L^\infty (\R^d)$ and  $t\ge \max\{\|v_0\|_\infty, 1+ 
C_0\theta   \|g \|_\infty\} $.   The result stated in \eqref {eqabove}   holds for $G_1^{v_0}(v, \om, \La)$. This implies in particular that minimizers of $G_1^{v_0}$ are bounded uniformly by
$\max\{\|v^0\|_\infty, 1+ 
C_0\theta   \|g \|_\infty\}.$  
\end{enumerate}
\end{lem}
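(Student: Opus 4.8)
The plan is to prove both parts by one truncation estimate. Write $v^t=\phi_t\circ v$ with $\phi_t(r)=(t\wedge r)\vee(-t)$, the $1$‑Lipschitz retraction of $\R$ onto $[-t,t]$, and compare the functional evaluated at $v$ and at its truncation term by term. Since $\phi_t$ is Lipschitz and $|\phi_t(r)|\le|r|$, the Gagliardo seminorm and the $L^2$‑norm of $v^t$ are bounded by those of $v$, so $v^t\in H^s(\Lambda)$ whenever $v\in H^s(\Lambda)$ and all the quantities below are finite.

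For (1) I would split $\KK_1(v,\om,\Lambda)-\KK_1(v^t,\om,\Lambda)$ into the Gagliardo double integral, the potential integral, and the random integral. The Gagliardo part does not increase under truncation because $|v^t(x)-v^t(y)|\le|v(x)-v(y)|$ pointwise, so it contributes a nonnegative amount. The potential integrand changes only on $\Lambda_t$, where $|v(y)|>t\ge1$ and $|v^t(y)|=t$, so by evenness of $W$ and the explicit form \eqref{V.1} one gets $W(v(y))-W(v^t(y))=\frac1{2C_0}\bigl((|v(y)|-1)^2-(t-1)^2\bigr)=\frac1{2C_0}(|v(y)|-t)(|v(y)|+t-2)\ge C_0^{-1}(t-1)(|v(y)|-t)$, using $|v(y)|+t-2\ge2(t-1)$ on $\Lambda_t$. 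The random integrand also changes only on $\Lambda_t$, where $|v(y)-v^t(y)|=|v(y)|-t$ and $|g_1(\cdot,\om)|\le\|g\|_\infty$ (from \eqref{gfrombern} and \eqref{eq:ass}), so $-\theta g_1(y,\om)\bigl(v(y)-v^t(y)\bigr)\ge-\theta\|g\|_\infty(|v(y)|-t)$. Adding the three contributions over $\Lambda_t$ produces exactly the right‑hand side of \eqref{eqabove}.

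For (2), the only term in $G_1^{v_0}$ beyond those of $\KK_1$, see \eqref{funct110}, is the long‑range boundary interaction $\WW\bigl((v,\Lambda),(v_0,\Lambda^c)\bigr)$; here I would use the elementary fact that $|\phi_t(r)-c|\le|r-c|$ whenever $|c|\le t$, applied with $c=v_0(y)$, which is legitimate because $t\ge\|v_0\|_\infty$. Hence this term too does not increase under truncation and \eqref{eqabove} holds verbatim with $\KK_1$ replaced by $G_1^{v_0}$. To deduce the $L^\infty$ bound for a minimizer $u$ of $G_1^{v_0}(\cdot,\om,\Lambda)$ among functions equal to $v_0$ on $\Lambda^c$, set $t=\max\{\|v_0\|_\infty,1+C_0\theta\|g\|_\infty\}$: since $|v_0|\le t$, the function $u^t$ is still an admissible competitor, so $G_1^{v_0}(u,\om,\Lambda)\le G_1^{v_0}(u^t,\om,\Lambda)$, while the estimate just proved gives the opposite inequality with a nonnegative integrand. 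Consequently $\int_{\Lambda_{t'}}\bigl(C_0^{-1}(t'-1)-\theta\|g\|_\infty\bigr)\bigl(|u(y)|-t'\bigr)\,dy=0$ for every $t'\ge t$; taking $t'>t$ makes the coefficient strictly positive, so $|\Lambda_{t'}|=0$, and letting $t'\downarrow t$ yields $\|u\|_\infty\le t$.

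I do not expect a genuine obstacle; this is a routine truncation estimate. The two points that use the hypotheses essentially are the explicit quadratic form \eqref{V.1} of $W$ inside the well, which is precisely what converts the potential gain into the clean linear bound $C_0^{-1}(t-1)(|v|-t)$ that must dominate the crude random contribution $\theta\|g\|_\infty(|v|-t)$ (whence the threshold $t\ge1+C_0\theta\|g\|_\infty$), and, in part (2), the requirement $t\ge\|v_0\|_\infty$, without which truncating $v$ could increase its long‑range interaction with the fixed exterior datum $v_0$.
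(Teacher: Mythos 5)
Your proof is correct and takes essentially the same route as the paper's: truncation by $v^t$, the pointwise inequality for the Gagliardo kernel under truncation, the explicit quadratic form \eqref{V.1} on $\Lambda_t$ for the potential gain, and the $L^\infty$ bound on $g_1$ for the random term, with part (2) handled by requiring $t\ge\|v_0\|_\infty$ so the long-range boundary interaction is also nonincreasing. You merely write out the intermediate computations that the paper leaves implicit; there is no substantive difference.
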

   
\begin{proof} 
 %For any $t>0$ denote $v^t=   t \wedge v\vee (- t)$ 
We have that  for  $x $ and $y$ and any function $v$ and $w$   
$$  [v (x) - w(y) ]^2 \ge [ v^t(x)- w^t(y)]^2 .$$
We  immediately obtain 
\begin{equation*} 
  \begin{split}   &    \mathcal K_1  (v,\om,\La)-
   \mathcal K_1 (v^t,\om, \La )  \ge
    \int_{\Lambda_t}   \left (   W(v(y))- W(t) \right )
dy
 -  \theta \int_{\Lambda_t} dy   g_1 (y,\om) [
v(y)- \operatorname{sign} (v(y)) t], 
 \end{split}
\end{equation*}
 and from  Assumption (H1) and the $L^\infty$-bound on $g$ we derive (\ref{eqabove}).
  The proof of (2) is   a   consequence of (1)  by choosing  $t\ge \max\{\|v_0\|_\infty, 1+ 
C_0\theta   \|g \|_\infty\}$.
  \end{proof}
  \vskip0.cm 
 Next we show  that  the functional  \eqref {funct11}   is  finite   when  $v \in    H^s_{loc}(\R^d) \cap L^\infty (\R^d)$.  To this aim it is sufficient to show   that $  \WW (v, \Lambda)$, defined in \eqref {fe1}, is finite. 
  \begin {lem} \label {boun1} Let  $v \in    H^s_{loc}(\R^d) \cap L^\infty (\R^d)$,    $\Lambda \Subset \R^d$ 
 and 
   $C=  C ( \|v\|_\infty, d, s)$  be a  generic constant   which might change from one occurrence to the other.  
    Suppose that  $\Lambda$ is {\em cube-like}.\footnote{ The lemma holds for Lipschitz domains $\Lambda$, but then the generic constant $C$ depends on the shape of the domain.} Then  we have 
  \begin{equation} \label {t1} \WW (v, \Lambda) \le C      | \Lambda|^{\frac  {d-2s} d}, \qquad s \in (0, \frac 12).      \end {equation}
  When $ s \in  [\frac 12, 1)$  denote by $B_{1} (\partial \Lambda)= \{ x \in \R^d: d_{\partial \La}(x)  \le 1\}$   we have 
  \begin{equation} \label {t3} \WW (v, \Lambda) \le    \|v\|_{H^s(B_1 (\partial \Lambda))} + \left\{\begin{array}{cc}
 C        | \Lambda|^{\frac  {d-1} d},  &\qquad  s \in  (\frac 12, 1),\\
 C        | \Lambda|^{\frac  {d-1} d}\log(|\Lambda|),  &\qquad  s =\frac{1}{2}.
\end{array}\right.
   \end {equation}
   When $ s \in  [\frac 12, 1)$  and  $v   \in C^{0, \alpha} ( B_{1} (\partial \Lambda)) $  for $\alpha > s -\frac 12 $
 \begin{equation} \label {t3c} \WW (v, \Lambda) \le     
 \left\{\begin{array}{cc}
 C        | \Lambda|^{\frac  {d-1} d},  &\qquad  s \in  (\frac 12, 1),\\
 C        | \Lambda|^{\frac  {d-1} d}\log(|\Lambda|),  &\qquad  s =\frac{1}{2}.  
\end{array}\right.
   \end {equation}
       \end {lem}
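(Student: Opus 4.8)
The plan is to reduce the lemma to a single elementary pointwise bound together with two geometric facts about cube-like domains, and then to estimate $\WW(v,\Lambda)$ by splitting the double integral according to the distance to $\partial\Lambda$. The pointwise bound is: for every $x\in\Lambda$, since any segment from $x$ to a point of $\Lambda^{c}$ crosses $\partial\Lambda$,
\begin{equation*}
\int_{\Lambda^{c}}\frac{dy}{|x-y|^{d+2s}}\le\int_{|x-y|>d_{\partial\La}(x)}\frac{dy}{|x-y|^{d+2s}}=\frac{c_{d}}{2s}\,d_{\partial\La}(x)^{-2s}.
\end{equation*}
The geometric inputs, both consequences of $\Lambda$ being cube-like, are $\operatorname{diam}(\Lambda)\le C|\Lambda|^{1/d}$ and the Minkowski-type tube estimate $|\{x\in\Lambda:\,d_{\partial\La}(x)<t\}|\le C|\Lambda|^{\frac{d-1}{d}}\min\{t,|\Lambda|^{1/d}\}$; for a cube this is just $L^{d}-(L-2t)^{d}\le 2dtL^{d-1}$, and for general cube-like Lipschitz domains it is the standard tubular-neighbourhood bound with constant controlled by the cube-like constant. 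Via the layer-cake formula these turn any integral $\int_{\Lambda}d_{\partial\La}(x)^{-\gamma}\,dx$ into $\gamma\int_{0}^{\infty}|\{d_{\partial\La}<t\}|\,t^{-\gamma-1}\,dt$, which can then be evaluated explicitly. In the case $s\in(0,\tfrac12)$ this is already enough: using $|v(x)-v(y)|^{2}\le4\|v\|_{\infty}^{2}$ gives $\WW(v,\Lambda)\le C\int_{\Lambda}d_{\partial\La}(x)^{-2s}\,dx$, and the layer-cake computation with $\gamma=2s<1$ yields $C|\Lambda|^{\frac{d-1}{d}}\int_{0}^{|\Lambda|^{1/d}}t^{-2s}\,dt=C|\Lambda|^{\frac{d-2s}{d}}$ from the range $t\le|\Lambda|^{1/d}$ and $C|\Lambda|\cdot(|\Lambda|^{1/d})^{-2s}=C|\Lambda|^{\frac{d-2s}{d}}$ from $t\ge|\Lambda|^{1/d}$, which is \eqref{t1}.

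For $s\in[\tfrac12,1)$ the integral $\int_{\Lambda}d_{\partial\La}^{-2s}$ diverges at $\partial\Lambda$, so I would split both $\Lambda$ and $\Lambda^{c}$ with respect to $B_{1}(\partial\Lambda)=\{d_{\partial\La}\le1\}$, obtaining three contributions. First, $x\in\Lambda$ with $d_{\partial\La}(x)>1$: using the pointwise bound in $y$ this is $\le C\int_{\{d_{\partial\La}>1\}}d_{\partial\La}(x)^{-2s}\,dx\le C|\Lambda|^{\frac{d-1}{d}}\int_{1}^{C|\Lambda|^{1/d}}t^{-2s}\,dt$, which is $\le C|\Lambda|^{\frac{d-1}{d}}$ when $s>\tfrac12$ and $\le C|\Lambda|^{\frac{d-1}{d}}\log|\Lambda|$ when $s=\tfrac12$. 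Second, $x\in\Lambda\cap B_{1}(\partial\Lambda)$ and $y\in\Lambda^{c}$ with $d_{\partial\La}(y)>1$, so that $|x-y|>1$: then $\int_{|x-y|>1}|x-y|^{-d-2s}\,dy$ is a finite constant and this is $\le C|\Lambda\cap B_{1}(\partial\Lambda)|\le C|\Lambda|^{\frac{d-1}{d}}$ by the tube estimate. Third, $x\in\Lambda\cap B_{1}(\partial\Lambda)$ and $y\in\Lambda^{c}\cap B_{1}(\partial\Lambda)$: this is dominated by $\int_{B_{1}(\partial\Lambda)}\int_{B_{1}(\partial\Lambda)}\frac{|v(x)-v(y)|^{2}}{|x-y|^{d+2s}}\,dx\,dy$, i.e.\ the squared Gagliardo seminorm of $v$ on $B_{1}(\partial\Lambda)$, which is exactly the term $\|v\|_{H^{s}(B_{1}(\partial\Lambda))}$ appearing in \eqref{t3}. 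Summing the three gives \eqref{t3}.

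To obtain \eqref{t3c} only the third contribution must be re-estimated, now using that $v$ is $C^{0,\alpha}$ near $\partial\Lambda$ with $\alpha>s-\tfrac12$. For $x\in\Lambda$, $y\in\Lambda^{c}$ one has $|x-y|\ge d_{\partial\La}(x)$; splitting into $|x-y|\ge1$ (handled as in the second contribution, $\le C|\Lambda|^{\frac{d-1}{d}}$) and $|x-y|<1$, on the latter $|v(x)-v(y)|^{2}\le[v]_{C^{0,\alpha}}^{2}|x-y|^{2\alpha}$, and integrating in $y$ over $\{d_{\partial\La}(x)\le|x-y|<1\}$ gives $\le C(1+d_{\partial\La}(x)^{2\alpha-2s})$. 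Since $\alpha>s-\tfrac12$ forces $2\alpha-2s>-1$, the function $d_{\partial\La}^{2\alpha-2s}$ is integrable up to $\partial\Lambda$, and the layer-cake/tube estimate bounds $\int_{\Lambda\cap B_{1}(\partial\Lambda)}d_{\partial\La}(x)^{2\alpha-2s}\,dx$ by $C|\Lambda|^{\frac{d-1}{d}}$, proving \eqref{t3c}.

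I expect the genuine difficulty to lie entirely in the range $s\in[\tfrac12,1)$: the naive bound $|v(x)-v(y)|^{2}\le4\|v\|_{\infty}^{2}$ is lost near $\partial\Lambda$, so one must isolate precisely the near-boundary double integral and either keep it as an $H^{s}$-seminorm over $B_{1}(\partial\Lambda)$ or, with H\"older control, recognise that $\alpha>s-\tfrac12$ is exactly the integrability threshold for $d_{\partial\La}^{2\alpha-2s}$ at the boundary. Tracking the borderline logarithm at $s=\tfrac12$ also requires some care. Everything else reduces, once the cube-like tube estimate is available, to routine layer-cake computations.
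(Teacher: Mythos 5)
Your proof is correct and follows essentially the same route as the paper's: the pointwise bound $\int_{\Lambda^c}|x-y|^{-d-2s}\,dy\le C\,d_{\partial\Lambda}(x)^{-2s}$, the same three-way near/far splitting with respect to $B_1(\partial\Lambda)$ for $s\ge\tfrac12$, and the same treatment of the near--near double integral either as the $H^s$-seminorm over $B_1(\partial\Lambda)$ or via the H\"older bound $|v(x)-v(y)|^2\le C|x-y|^{2\alpha}$. The only cosmetic difference is that you compute $\int_\Lambda d_{\partial\Lambda}^{-\gamma}$ via a layer-cake identity plus a tube-measure estimate, whereas the paper bounds it directly by $(\mathrm{diam}\,\Lambda)^{1-2s}\mathcal{H}^{d-1}(\partial\Lambda)$; these are the same geometric input.
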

  \begin {proof}
For any   $ s \in (0, \frac 12)$ we have 
\begin{equation}   \label {oct2}\begin {split}& \int_{\Lambda} \rm {d }x \int_ {\Lambda^c} \rm {d }y\frac { | v(x) - v(y)|^2} {|x-y|^{d+2s}} \cr & \le C  \int_{\Lambda} \rm {d }x \int_ {\Lambda^c} \rm {d }y\frac {1} {|x-y|^{d+2s}} \le   C   \int_{\Lambda} \rm {d }x \int_ { \{ y\in \R^d:\ |x-y|\ge  d_{\partial \La}(x)  \} }\frac {1} {|x-y|^{d+2s}} \rm {d }y  \le C  \int_{\Lambda}  |d_{\partial \La}(x)|^{-2s} \rm {d }x  \cr & \le C({\rm diam}(\Lambda))^{1-2s}{\mathcal H}^{d-1}(\partial \Lambda)
 \le C |\La|^{\frac {d-2s} d}.
 \end {split} \end {equation} 
  Note that for cubes ${\rm diam} (\Lambda)\le C|\Lambda|^{1/d},$ where the constant $C$ depends only on the dimension.

 When $ d \ge 1$ and  $ s \in [\frac 12, 1)$,   \ $ d_{\partial \La}(x)^{-2s} $ is not integrable anymore over $\La$. So we   split the integral  as follows:
 \begin{equation}  \label {t2}  \begin {split}& \int_{\Lambda} \rm {d }x \int_ {\Lambda^c} \rm {d }y\frac { | v(x) - v(y)|^2} {|x-y|^{d+2s}} \cr & =
  \int_{ \{x \in \Lambda:  d_{\partial \La}(x) \le  1 \}} \rm {d }x \int_ { y \in\Lambda^c} \rm {d }y\frac { | v(x) - v(y)|^2} {|x-y|^{d+2s}}   +  \int_{\{x \in \Lambda:  d_{\partial \La}(x)> 1 \}} \rm {d }x \int_ { y \in\Lambda^c} \rm {d }y\frac { | v(x) - v(y)|^2} {|x-y|^{d+2s}}.
 \end {split} \end {equation}
  For the last integral, since $|x-y| \ge 1$,   we  obtain  proceeding  as in \eqref{oct2} 
 \begin{equation}   \label {ele1} \begin {split} & \int_{\{x \in \Lambda:  d_{\partial \La}(x)> 1 \}} \rm {d }x \int_ { y \in\Lambda^c} \rm {d }y\frac { | v(x) - v(y)|^2} {|x-y|^{d+2s}} 
 \cr&  \le  C \int_{\{x \in \Lambda:  d_{\partial \La}(x)> 1 \}} d_{\partial \La}(x)^{-2s}   \rm {d }x     \le   \left \{  \begin {split} &   C |\La|^{\frac {d-1} d}   \qquad s \in ( \frac 12,1) \cr &  C   |\La|^{\frac {d-1} d}  \log |\La|, \qquad s = \frac 12.  
     \end {split} \right.     \end {split} \end {equation}
We split  the first integral   of \eqref {t2}  as 
 \begin{equation}  \label {oct3} \begin {split}&   \int_{ \{x \in \Lambda:  d_{\partial \La}(x) \le  1 \}} \rm {d }x \int_ { y \in\Lambda^c} \rm {d }y\frac { | v(x) - v(y)|^2} {|x-y|^{d+2s}}  \cr &  =
 \int_{ \{x \in \Lambda:  d_{\partial \La}(x) \le  1 \}} \rm {d }x \int_ {\{y \in \Lambda^c:  d_{\partial \La}(y) \le  1  \}} \rm {d }y\frac { | v(x) - v(y)|^2} {|x-y|^{d+2s}}   \cr &+
  \int_{ \{x \in \Lambda:  d_{\partial \La}(x) \le  1 \}} \rm {d }x \int_ { \{y \in \Lambda^c:  d_{\partial \La}(y) > 1  \}} \rm {d }y\frac { | v(x) - v(y)|^2} {|x-y|^{d+2s}}.
    \end {split} \end {equation}
    For  the last  term of \eqref   {oct3}, since  $|x-y| \ge 1  $, we get 
    $$  \int_{ \{x \in \Lambda:  d_{\partial \La}(x)  \le  1 \}} \rm {d }x \int_ { \{y \in \Lambda^c:  d_{\partial \La}(y) > 1  \}} \rm {d }y\frac { | v(x) - v(y)|^2} {|x-y|^{d+2s}}  \le  C \int_{ \{x \in \Lambda:  d_{\partial \La}(x)  \le  1 \}} \rm {d }x\int_1^\infty r^{-1-2s}{\rm d}r\le 
    C |\La|^{\frac {d-1} d}   \qquad s \in [ \frac 12,1).
$$
       The first term of the   right hand side of  \eqref  {oct3} is obviously bounded  when $v \in H^s_{loc} (\R^d)$
   $$   \int_{  \{x \in \Lambda:  d_{\partial \La}(x) \le  1 \} } \rm {d}x \int_ {\{y \in \Lambda^c:  d_{\partial \La}(y)     \le    1 \} } \frac { | v(x) - v(y)|^2} {|x-y|^{d+2s}}  \rm {d}y \le  \|v\|_{H^s(B_{1} (\partial \Lambda))}. $$   
       When   $ v \in C^{0, \alpha} ( B_{1} (\partial \Lambda)) $  for $\alpha > s -\frac 12 $ then again  \ arguing   as in \eqref{oct2} 
  \begin{equation}   \int_{  \{x \in \Lambda:  d_{\partial \La}(x) \le  1 \} } \rm {d}x \int_ {\{y \in \Lambda^c:  d_{\partial \La}(y)   \le    1 \} } \rm {d}y  \frac { | v(x) - v(y)|^2} {|x-y|^{d+2s}}  \le     C |\La|^{\frac {d-1} d},  \qquad s \in [\frac 12,1).    \end {equation}
   
\end{proof} 
   
   \vskip 0.5cm 
     Next we prove an energy decreasing rearrangement which  allows to show  a strong maximum principle,  see Lemma  \ref {FGK}:  Minimizers  of $ G_1 (\cdot,\omega,\La)$ corresponding to ordered boundary conditions  on  $ \Lambda^c$  are ordered as well, i.e  they do not intersect.     In particular if there exists  more than one minimizer   corresponding to the same boundary condition  they do not intersect. 
    \begin{lem} %[zero temperature FKG] 
  \label{FGK0} Let $u$    and $v$  be in $ H^s_{loc} (\R^d)  \cap L^\infty(\R^d) $. Then  for all $\om \in \Om$ and $\La \subset \R^d$
   \begin {equation} \label {f10}
G_1(u\vee v, \om, \La)+G_1(u\wedge v, \om, \La)\le G_1(u, \om, \La)+G_1(v, \om, \La). 
\end {equation} 
When $u=v$ on $\La^c$,  the equality holds in \eqref {f10}   if and only if
    \begin {equation} \label {f11}u(x) \le v(x)  \quad \hbox {or}  \quad  v(x) \le u(x), \quad  a.s. \quad  x \in  \La.\end {equation} 
    When $u\le v$ on $\La^c$ and $u<v$ for some open set in  $\La^c$ 
    the equality holds in \eqref {f10}   if and only if
    \begin {equation} \label {f12a}u(x) \le v(x)    \quad  a.s. \quad x \in  \La.\end {equation} 

   \end {lem}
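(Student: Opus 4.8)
The plan is to reduce the whole statement to a pointwise algebraic inequality for the Gagliardo kernel, after which the potential and random-field terms contribute nothing because $\{a\vee b,a\wedge b\}=\{a,b\}$ as an unordered pair. First I would record that for all real numbers $a,b,c,d$,
\begin{equation*}
|a\vee c-b\vee d|^{2}+|a\wedge c-b\wedge d|^{2}\le|a-b|^{2}+|c-d|^{2},
\end{equation*}
and that, writing $\Delta(a,b,c,d)$ for the difference of the right and left sides, a short case distinction on the signs of $a-c$ and $b-d$ gives $\Delta=0$ when $a-c$ and $b-d$ have the same sign (one possibly $0$), and $\Delta=2(a-c)(d-b)$ when the signs are strictly opposite, which is then $>0$; in particular $\Delta\ge 0$ everywhere and $\Delta=0$ iff $(a-c)(b-d)\ge 0$. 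Applying this with $a=u(x)$, $b=u(y)$, $c=v(x)$, $d=v(y)$, integrating against the strictly positive kernel $|x-y|^{-(d+2s)}$ over $\La\times\La$ and (with the extra factor $2$) over $\La\times\La^{c}$, and using $W(u\vee v)+W(u\wedge v)=W(u)+W(v)$ and $g_{1}(u\vee v)+g_{1}(u\wedge v)=g_{1}u+g_{1}v$ pointwise, yields $(\ref{f10})$. If the right side of $(\ref{f10})$ is $+\infty$ there is nothing to prove; if it is finite, the pointwise bound forces $\WW(u\vee v,\La)$ and $\WW(u\wedge v,\La)$ to be finite as well (Lemma \ref{boun1}), so all four terms make sense.

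For the equality statements I would start from the identity
\begin{equation*}
\begin{split}
&G_{1}(u,\om,\La)+G_{1}(v,\om,\La)-G_{1}(u\vee v,\om,\La)-G_{1}(u\wedge v,\om,\La)\\
&\qquad=\int_{\La}\int_{\La}\frac{\Delta(u(x),u(y),v(x),v(y))}{|x-y|^{d+2s}}\,dx\,dy+2\int_{\La}\int_{\La^{c}}\frac{\Delta(u(x),u(y),v(x),v(y))}{|x-y|^{d+2s}}\,dx\,dy,
\end{split}
\end{equation*}
whose integrand is nonnegative, so equality in $(\ref{f10})$ is equivalent to both integrands vanishing almost everywhere, i.e.\ to $(u(x)-v(x))(u(y)-v(y))\ge 0$ for a.e.\ $(x,y)$ in $\La\times\La$ and in $\La\times\La^{c}$. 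Introducing the measurable sets $A=\{u>v\}$ and $B=\{u<v\}$ and using strict positivity of the kernel, the first integral vanishes iff at least one of $A\cap\La$, $B\cap\La$ is Lebesgue-null, and the second vanishes iff $A\cap\La,B\cap\La^{c}$ are not both of positive measure and, symmetrically, $B\cap\La,A\cap\La^{c}$ are not both of positive measure.

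It then remains to specialize the boundary data. If $u=v$ on $\La^{c}$, then $A,B$ are null outside $\La$, the condition coming from $\La\times\La^{c}$ is automatic, and equality holds exactly when $|A\cap\La|=0$ or $|B\cap\La|=0$, that is, when $u\le v$ a.e.\ in $\La$ or $v\le u$ a.e.\ in $\La$; this is $(\ref{f11})$. If instead $u\le v$ on $\La^{c}$ with $u<v$ on some open subset of $\La^{c}$, then $A\cap\La^{c}$ is null while $|B\cap\La^{c}|>0$, so the $\La\times\La^{c}$ condition forces $|A\cap\La|=0$, hence $|A|=0$, i.e.\ $u\le v$ a.e.\ in $\La$ (and conversely, if $u\le v$ a.e.\ then $A$ is null and both integrands vanish); this is $(\ref{f12a})$. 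I do not expect a genuine obstacle: the elementary inequality for $\Delta$ is the whole content, and the only care needed is the measure-theoretic bookkeeping that separates the $\La\times\La$ and $\La\times\La^{c}$ contributions and the passage from ``integrand vanishes a.e.'' to ``$A$, $B$ are null'', which rests on $|x-y|^{-(d+2s)}>0$.
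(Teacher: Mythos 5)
Your proof is correct and follows essentially the same route as the paper's: the key pointwise inequality for the Gagliardo kernel (your $\Delta\ge 0$ with its two-case computation) is exactly the paper's inequality \eqref{fa1}, and the rearrangement-invariance of the local terms ($W$ and $g_1$) closes \eqref{f10}. Your equality analysis via the sets $A=\{u>v\}$, $B=\{u<v\}$ and the strict positivity of the kernel is a more explicit rendering of the paper's argument for \eqref{f11} and \eqref{f12a}, but it is not a different method.
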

   \begin {proof}   Since   $u$ and $v$ are    in $ H^s_{loc} (\R^d)  \cap L^\infty (\R^d) ,$  $G_1$ is finite.          Let   $M(x)= \max \{ u(x), v (x) \}$  and  $ m(x)= \min \{ u(x), v (x) \}$.   It is immediate to verify that     the local part of the functional $G_1$ satisfies    \eqref {f10} with the equality.
   % with   we have that $$[W(m(x))- \theta g(x) m(x)] +[ W(M(x))- \theta g(x) M(x)]= [W(u(x))-\theta %g(x) u(x)] +[W(v(x)) -\theta g(x) v(x)].$$
For the interaction term, 
  for  $ x$ and $y$ in $\R^d$, we have that
    \begin{equation}  \label {fa1}  [m(x)- m(y)]^2+  [M(x)- M(y)]^2 \le [u(x)- u(y)]^2 + [v(x)- v(y)]^2. 
     \end {equation}
   Namely if  both the minimum values in $x$ and $y$ are reached  by the same function either $u$ or $v$  then the equality holds in  \eqref {fa1}.
   If  $m(x)=u(x) <v(x)$ and $m(y)=v(y)< u(y)$  then  the left hand side of \eqref {fa1}
  is equal to 
   $$ [u(x)- u(y)]^2 + [v(x)- v(y)]^2 +  [u(x)- v(x)]  [u(y)- v(y)] $$ 
   with $ [u(x)- v(x)]  [u(y)- v(y)] <  0$. 
     The same holds when 
    $m(x)=v(x)$ and $m(y)=u(y)$. In these last case we will have a strict inequality in \eqref   {fa1}, and therefore in  \eqref {f10}.
    
  Next we prove  \eqref {f11}.   If $u(x) \le v(x)$ or $u(x) \ge v(x)$ for all $x \in \R^d$ then the equality holds in  \eqref {f10}.  When $u=v$ on $\La^c$ 
       we have also the reverse  implication for $x \in \La$.  Namely it is immediate to verify that in such case (no matter   which value of $u$ or $v$ correspond to $M$ or $m$) 
     \begin{equation}  \label {fa1a}  \WW(M,\La) +   \WW(m,\La)=  \WW(u,\La) +   \WW(v,\La).  \end {equation}
The      equality     in \eqref {f10}  implies the equality       in   \eqref {fa1},     then  \eqref {f11}   holds. 
    Next we prove  \eqref {f12a}.   It is immediate to verify that if     \eqref {fa1a}  holds   we must have  
    $M(x)= v(x)$ and $m(x)=u(x)$ for $x \in \La$.  
        \end {proof}
  \begin{lem} %[zero temperature FKG] 
  \label{FGK}   Let $u$    and $v$  in $ H^s_{loc} (\R^d)  \cap L^\infty(\R^d) $    be minimizers of $ G_1 $ in $ \Lambda$, so that   
$u \le v $  on $ \Lambda^c $. 
Then, for all $\om \in \Om$,  $u=v$ or  $|u(x)-v(x)|>0$ for all $x\in {\rm int}(\Lambda).$ 
If $u<v$ in an open set in $  \Lambda^c,$ 
then $u<v$ everywhere in ${\rm int}(\Lambda).$
\end{lem}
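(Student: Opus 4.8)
The plan is to combine the energy inequality \eqref{f10} of Lemma~\ref{FGK0} with the strong maximum principle for $(-\Delta)^s$, working at a fixed realization $\om\in\Om$ (so that $g_1(\cdot,\om)$ is merely a bounded function). Here $\Lambda\Subset\R^d$ and each of $u,v$ minimizes $G_1(\cdot,\om,\Lambda)$ with respect to its own boundary values on $\Lambda^c$, with $u\le v$ there. I would first reduce to the case $u\le v$ on all of $\R^d$. Set $M=u\vee v$ and $m=u\wedge v$; since $u\le v$ on $\Lambda^c$ we have $M=v$ and $m=u$ there, so $M$ and $m$ are admissible competitors for the problems solved by $v$ and $u$ respectively, whence $G_1(v,\om,\Lambda)\le G_1(M,\om,\Lambda)$ and $G_1(u,\om,\Lambda)\le G_1(m,\om,\Lambda)$. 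Adding these and comparing with \eqref{f10} forces all three inequalities to be equalities; in particular equality holds in \eqref{f10}, and the equality cases of Lemma~\ref{FGK0} apply. If $u=v$ on $\Lambda^c$ they give that $u$ and $v$ are globally comparable on $\Lambda$, so after possibly interchanging $u$ and $v$---harmless, since they then solve the same boundary value problem and $u=v$ on $\Lambda^c$---we have $u\le v$; if $u<v$ on an open subset of $\Lambda^c$ they give $u\le v$ a.e.\ on $\Lambda$ directly. In either case $u\le v$ on $\R^d$, and we set $w:=v-u\ge 0$.

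Next I would use the Euler--Lagrange equation. Since $u,v\in L^\infty$ one has $(-\Delta)^s u\in L^\infty_{\rm loc}({\rm int}(\Lambda))$, so by the standard interior regularity for this (fractional) equation $u$ and $v$ are the locally Hölder, resp.\ $C^{1,\alpha}$, continuous functions described for $u^\ast$ in Theorem~\ref{min1}, and they solve in ${\rm int}(\Lambda)$ the Euler--Lagrange equation, which up to fixed positive constants reads $(-\Delta)^s u+\tfrac12 W'(u)=\tfrac\theta2\,g_1(\cdot,\om)$. Subtracting the equations for $v$ and $u$, and writing $W'(v(x))-W'(u(x))=c(x)\,w(x)$ with $c(x)=\int_0^1 W''(u(x)+t\,w(x))\,dt$ bounded (here $W\in C^2$ and $u,v\in L^\infty$), we obtain that $w\ge 0$ solves the linear nonlocal equation
\begin{equation*}
(-\Delta)^s w(x)+c(x)\,w(x)=0,\qquad x\in{\rm int}(\Lambda).
\end{equation*}
Suppose now $u\not\equiv v$ but $w(x_0)=0$ for some $x_0\in{\rm int}(\Lambda)$. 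Then the zeroth-order term vanishes at $x_0$, while, because $w\ge 0$ and $w(x_0)=0$,
\begin{equation*}
(-\Delta)^s w(x_0)=\kappa\int_{\R^d}\frac{w(x_0)-w(y)}{|x_0-y|^{d+2s}}\,dy=-\kappa\int_{\R^d}\frac{w(y)}{|x_0-y|^{d+2s}}\,dy\le 0
\end{equation*}
for a constant $\kappa>0$; hence $\int_{\R^d}|x_0-y|^{-d-2s}\,w(y)\,dy=0$, and since the kernel is strictly positive this forces $w\equiv 0$ a.e.\ on $\R^d$, contradicting $u\not\equiv v$. Therefore $w>0$ everywhere on ${\rm int}(\Lambda)$, which together with $u\le v$ is the first assertion (either $u\equiv v$, or $|u-v|>0$ on ${\rm int}(\Lambda)$). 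The second assertion is then immediate: if $u<v$ on an open subset of $\Lambda^c$ then $u\not\equiv v$, so by the first part $u<v$ everywhere on ${\rm int}(\Lambda)$.

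The main difficulty I anticipate is the passage from variational minimality to a usable form of the Euler--Lagrange equation and the application of the strong maximum principle in the regularity actually available: the Hölder exponents $\alpha<2s$ may lie just below the threshold making the singular integral absolutely convergent at an interior point. I would circumvent this by carrying out the last step entirely at the level of weak solutions, invoking the strong maximum principle / weak Harnack inequality for nonnegative weak supersolutions of $(-\Delta)^s$ with bounded zeroth-order coefficient (which needs no pointwise regularity of $u,v$); an alternative is to bootstrap interior regularity from the equation, which by Assumption~(H1) is linear on the region $\{x:\ u(x),v(x)>1-\delta_0\}$. A secondary point to watch is the reduction in the first step: it hinges on the precise equality cases of Lemma~\ref{FGK0}, and one must check that $M=u\vee v$ and $m=u\wedge v$ indeed lie in $H^s_{loc}(\R^d)\cap L^\infty(\R^d)$ with finite $G_1(\cdot,\om,\Lambda)$ (by Lemma~\ref{boun1}), so that they are legitimate competitors.
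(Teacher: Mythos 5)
Your proof is correct in substance and follows the same overall strategy as the paper's: force equality in \eqref{f10} via minimality, deduce an ordering between $u$ and $v$, and then apply a strong maximum principle to the nonnegative difference, using that at an interior zero $x_0$ the nonlocal operator $(-\Delta)^s$ returns a strictly negative value unless the difference vanishes identically.

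The organisational difference worth noting is in the reduction step. You use the equality cases of Lemma~\ref{FGK0} directly to conclude that $u\le v$ on all of $\R^d$ (after a harmless swap), and then apply the maximum principle once. The paper instead first observes that equality in \eqref{f10} (plus minimality) makes $u\wedge v$ a minimizer with $u$'s boundary data, and applies the maximum principle to $w=u-u\wedge v$, which \emph{always} vanishes on $\Lambda^c$ whenever $u\le v$ there. This gives the dichotomy ``$u\le v$ on $\Lambda$'' or ``$v<u$ everywhere in ${\rm int}(\Lambda)$'' \emph{without} invoking the equality cases of Lemma~\ref{FGK0}, and the second maximum-principle argument on $w'=v-u$ then finishes. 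Your version, by contrast, passes through the two equality alternatives \eqref{f11}/\eqref{f12a}, which as stated cover only ``$u=v$ on $\Lambda^c$'' and ``$u<v$ on an open subset of $\Lambda^c$''; the intermediate situation ($u\le v$, $u\ne v$ on $\Lambda^c$, but $\{u<v\}$ with empty interior) is not literally covered. This is partly an imprecision in the statement of Lemma~\ref{FGK0} itself, but the paper's route via $w=u-u\wedge v$ sidesteps it for the main dichotomy, so it is slightly more robust. Two further remarks: the interior regularity of minimizers that makes the pointwise evaluation of $(-\Delta)^s$ legitimate is Proposition~\ref{Lip} (you cite Theorem~\ref{min1}, which is downstream), and the regularity concern you flag is genuine and is resolved in the paper not by weak Harnack but by a bootstrap: since $V$ is $C^{0,\alpha}$, \cite[Prop.~2.8]{LS} upgrades $w$ to $C^{0,\alpha+2s}$ (or $C^{1,\alpha+2s-1}$), which exceeds the threshold $2s$ and makes the singular integral at $x_0$ absolutely convergent. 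Your zeroth-order-coefficient formulation $(-\Delta)^s w + c(x)w=0$ is a clean equivalent of the paper's $(-\Delta)^s w = V$ with the observation $V(x_0)=0$.
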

\begin {proof}  
 Since the  result  holds for any realization of the random field and $\La$ is fixed
 we avoid to explicitly  write  in $G_1$ the dependence on $\om$ and $\La$. 
By Lemma \ref {FGK0} 
% Since   $u$ and $v$ are    in $ H^s_{loc} (\R^d)  \cap L^\infty (\R^d) ,$  $G_1$ is finite.   Further,  %since   $u$ and $v$ are minimizers,   the following holds: 
\begin {equation} \label {f1}
G_1(u\vee v)+G_1(u\wedge v) \le G_1(u)+G_1(v). 
\end {equation} 
%The statement is obvious for the local part. For the nonlocal part it follows by the  following %elementary observation: 
%Let $M=u\vee v,\ m=u\wedge v,$ then, as $m\le M,$ 
%\begin{eqnarray*}
%0&\ge& (M(x)-m(x))(m(y)-M(y))=M(x)m(y)+m(x)M(y)- M(x)M(y)-m(x)m(y) \\&=&\frac{1}{2}\left 
%{ (M(x)-M(y))^2+(m(x)-m(y))^2 \right \}-
%\frac{1}{2}\left\{ (M(x)-m(y))^2+(m(x)-M(y))^2 \right\}.
%\end{eqnarray*}
The conditions on $u$ and $v$ in $\Lambda^c$  yield 
   $u\vee v=v, u\wedge v=u$ on $\Lambda^c,$ 
and  by the minimization properties of $u$ and $v$ we get 
$G_1(u\vee v)\ge G_1(v),\ G_1(u\wedge v)\ge G_1(u).$   This     and  \eqref {f1} 
imply that $G_1(u\vee v)+G_1(u\wedge v)=G_1(u)+G_1(v) $,
actually  that  $G_1(u\vee v)= G_1(v)$ and   
$G_1(u\wedge v)= G_1(u)$.
      Therefore 
$u\vee v$ is a minimizer with condition $v$ on $\Lambda^c,$  and 
$u\wedge v$ is a    minimizer with condition $u$ on $\Lambda^c.$
Obviously   the function 
$ w:=  u-u\wedge v\ge 0$  in   $\La$ and  in particular $w=0$ on   $  \Lambda^c$.  Further since $u$ by assumption is a minimizer and  $u\wedge v$  is also a minimizer,   they  are both solutions of problem \eqref {EL.1} and 
the regularity results of Proposition \ref {Lip} hold.

Therefore  by construction    $w \in C^{0,\alpha}(\La)$,    $\alpha <2s$,   when  $2s \le 1$ 
and     $w \in    C^{1,\alpha} (\La) $ ,   $\alpha <2s-1$,   when   $2s > 1$.   On the other hand,     $ w$   solves

\begin{equation}  \label{a11} \begin{split}    
&   (-\Delta)^s w =  V(x)     \quad 
\text{in } \Lambda,  \\ &
 w =0 \qquad   \hbox {on} \qquad   
 \Lambda^c 
 \end{split}
\end{equation}
where 
$$
V(x)=\frac{1}{2} [W'(u(x)) - W' (u(x)\wedge v(x))].
$$
 Since $ W \in C^{2} (\R)$,  see Assumption     (H1),  by the   regularity of $u$ and  $u\wedge v$
 we have that $V \in   C^{0,\alpha}(\La)$,    $ 0<\alpha <2s$,   when  $2s \le 1$ 
 and    $V \in    C^{1,\alpha} (\La) $ ,   $0<\alpha <2s-1$,   when   $2s > 1$. 
  By     \cite [Proposition 2. 8]{LS}    $w$  being solution of \eqref {a11}  is  in $C^{0,\alpha +2s}$  when  $\alpha +2s \le 1$ and in  $C^{1,\alpha +2s-1}$  when  $\alpha +2s > 1$.  In both cases the following argument holds.  
 Suppose there exists $x_0 \in  \Lambda  $  with  $u (x_0)=u (x_0)\wedge v (x_0)$, i.e  $w(x_0)=0.$
 By the regularity  of $w$ we have that     
 $$(-\Delta)^{s}w(x_0)=    \int_{\Lambda} \rm {d }y\frac { [ w(x_0) - w(y)] } {|x_0 -y|^{d+2s}} = - \int_{\Lambda} \rm {d }y\frac {   w(y)  } {|x_0-y|^{d+2s}}<0$$
 being equal to zero only when $w(x)=0$  for  almost all  $x \in  \R^d$.  Notice that the integral is well defined for any $s \in (0,1)$ since $w$ is  is  in $C^{0,\alpha +2s}$  when  $\alpha +2s \le 1$ and in  $C^{1,\alpha +2s-1}$  when  $\alpha +2s > 1$.
Since  $V(x_0)=0$ by construction, if  $(-\Delta)^{s}w(x_0) <0$  we have a contradiction  with  \eqref {a11}. 
  
Therefore in the interior of $\Lambda$ 
either $u=u\wedge v$ (in which case $u\le v$)  or $u>u\wedge v,$ 
i.e. $v<u$.  By assumption $u \le v$ in $\La^c$  and by Lemma \ref {FGK0}
 $v<u$  in the interior of $\Lambda$ is only possible if $u=v$ on $ \Lambda^c.$  
 Next we  show that when $u=u\wedge v$, then either   $u=v$ in  $\La$ (and this is possible only when  $u=v$  on $\La^c$)   or $u(x)  <v(x)$ for $x$ in the interior of $\La$.  Denote by $w= u-v \ge 0$. As before, we have that $w$ is  a solution of  
 \begin{equation}  \label{a11a} \begin{split}    
&   (-\Delta)^s w =  V(x)     \quad 
\text{in } \Lambda,  \\ &
 w =w_0\ge 0  \qquad   \hbox {on} \qquad   
 \Lambda^c,
 \end{split}
\end{equation}
where we set $w_0= v-u$, the difference of the boundary data, which by assumption is positive.
Arguing as before, assume that there exists $x_0$ in the interior of $\La$ so that $w(x_0)=0$. 
By the regularity  of $w$ we have that 
\begin {equation} \label {f12} \begin{split} &(-\Delta)^{s}w(x_0)=    \int_{\Lambda} \rm {d }y\frac { [ w(x_0) - w(y)] } {|x_0 -y|^{d+2s}}  + 
  \int_{\Lambda^c} \rm {d }y\frac { [ w(x_0) - w_0(y)] } {|x_0 -y|^{d+2s}}  
\cr &  = - \int_{\Lambda} \rm {d }y\frac {   w(y)  } {|x_0-y|^{d+2s}}  -
   \int_{\Lambda^c} \rm {d }y\frac { w_0(y)  } {|x_0 -y|^{d+2s}}   <0.
   \end{split}
\end{equation}
 Since  $V(x_0)=0$ by construction, if  $(-\Delta)^{s}w(x_0) <0$  we have a contradiction  with  \eqref {a11a}. 
Therefore  if $w_0=0$ in $\La^c$, in the interior of $\Lambda$ 
either $w=0$ (in which case $u= v$)  or $v>u$.
If $w_0>0$ in some subset of $\La^c$  the only possibility is $v>u$ in the interior of $\La$.  
 \end {proof}
  Note that there may be a priori several minimizers with the same boundary conditions, as our functional is not convex. 
 
Next,  given $ v_0 \in    H^s_{loc}(\R^d) \cap  L^\infty ( \R^d) $,   we single out two special minimizers    of $G_1$ in $\Lambda$, one is the largest minimizer of $G_1$ in $\La$ with $v_0$ boundary conditions (defined a pointwise supremum), the other is the smallest minimizer   of $G_1$ in $\La$ with $-v_0$ boundary conditions.  We call them 
 the  $v_0-$  maximal  and the   $v_0-$ minimal minimizer of $G_1$ in $\La$. 
  
  \begin{lem} { \bf Existence of maximal/minimal minimizers.}  
  
  Let $\La \Subset \R^d$ be a  Lipschitz  bounded open set and $ v_0 \in    H^s_{loc}(\R^d) \cap  L^\infty ( \R^d) $.  
  \begin{enumerate} \item 
 The set of  minimizer of $G_1^{v_0}$  on   $\Lambda$ is compact, i.e.  any sequence of minimizers has a  limit  in $C^{0,\alpha} (\La),$ $\alpha<2s$ for
 $s\in(0,1/2] $ or $C^{1,\alpha} (\La) ,$ $\alpha<2s-1$ for $s\in (1/2,1),$ which is still a minimizer. 
 \item The set of minimizers has a maximal and minimal element with respect to point-wise ordering of functions
 \end{enumerate}
 \end{lem}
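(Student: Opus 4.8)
The plan is to establish (1) by a compactness argument based on uniform a priori bounds for minimizers, and then to deduce (2) from (1) together with the lattice structure of the set of minimizers provided by Lemma~\ref{FGK0}.

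For part (1), let $(u_k)_k$ be a sequence of minimizers of $G_1^{v_0}$ in $\La$, all of which share the common minimal value $\mu$. First I would invoke Lemma~\ref{A}(2): every minimizer satisfies $\|u_k\|_\infty\le M_0:=\max\{\|v_0\|_\infty,\,1+C_0\theta\|g\|_\infty\}$, a bound independent of $k$. Since each $u_k$ solves the Euler--Lagrange problem \eqref{EL.1} in $\La$ with exterior datum $v_0$ and the right-hand side of \eqref{EL.1} is controlled in $L^\infty$ by a quantity depending only on $M_0$, $\|g\|_\infty$, $d$ and $s$, the interior regularity estimates of Proposition~\ref{Lip} hold uniformly in $k$: $(u_k)_k$ is bounded in $C^{0,\alpha}_{\operatorname{loc}}(\La)$ for every $\alpha<2s$ when $2s\le1$, respectively in $C^{1,\alpha}_{\operatorname{loc}}(\La)$ for $\alpha<2s-1$ when $2s>1$. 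Moreover $[u_k]^2_{H^s(\La)}\le\mu$, so $(u_k)_k$ is also bounded in $H^s(\La)$. By Arzel\`a--Ascoli applied on an exhausting sequence of compact subsets of $\La$, together with a diagonal extraction and (for identification of the limit) the compact embedding $H^s(\La)\hookrightarrow L^2(\La)$, a subsequence converges uniformly on compact subsets of $\La$ (hence in the slightly weaker H\"older norm, as in Remark~\ref{ra2}) and a.e. in $\La$ to some $u$ with $\|u\|_\infty\le M_0$, $u=v_0$ on $\La^c$, and $u\in H^s_{\operatorname{loc}}(\R^d)\cap L^\infty(\R^d)$. It then remains to check that $u$ is a minimizer: the Gagliardo term over $\La\times\La$ and the interaction term $\WW((\cdot,\La),(v_0,\La^c))$ are convex and nonnegative, hence lower semicontinuous along a.e.\ convergence by Fatou's lemma, while $\int_\La W(u_k)$ and $\theta\int_\La g_1 u_k$ converge by dominated convergence (the integrands are bounded by fixed constants on the bounded set $\La$), and $\WW(v_0,\La^c)$ as well as all the nonlocal contributions are finite by Lemma~\ref{boun1}; consequently $G_1^{v_0}(u,\om,\La)\le\liminf_k G_1^{v_0}(u_k,\om,\La)=\mu$, and since $u$ is admissible $G_1^{v_0}(u,\om,\La)=\mu$, so $u$ is a minimizer.

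For part (2), I would first record the lattice property. Applying Lemma~\ref{FGK0} to two minimizers $u,v$ of $G_1^{v_0}$, and using that $u=v=v_0$ on $\La^c$, the competitors $u\vee v$ and $u\wedge v$ are admissible, so $G_1(u\vee v)\ge\mu$ and $G_1(u\wedge v)\ge\mu$, while \eqref{f10} gives $G_1(u\vee v)+G_1(u\wedge v)\le G_1(u)+G_1(v)=2\mu$; hence both $u\vee v$ and $u\wedge v$ are again minimizers, i.e.\ the set $\mathcal M$ of minimizers of $G_1^{v_0}$ in $\La$ is closed under $\vee$ and $\wedge$. Now put $S:=\sup_{w\in\mathcal M}\int_\La w$, which is finite since $|w|\le M_0$ on $\La$. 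Choose $u_k\in\mathcal M$ with $\int_\La u_k\to S$ and replace them by $\tilde u_k:=u_1\vee\cdots\vee u_k\in\mathcal M$; the sequence $(\tilde u_k)_k$ is nondecreasing and $\int_\La\tilde u_k\ge\int_\La u_k\to S$, so by monotone convergence $\tilde u_k\uparrow\bar u$ pointwise with $\int_\La\bar u=S$. By part (1) a subsequence of $(\tilde u_k)_k$ converges, uniformly on compacts, to an element of $\mathcal M$, which must coincide with $\bar u$; hence $\bar u\in\mathcal M$. Finally, for any $v\in\mathcal M$ the minimizer $\bar u\vee v$ satisfies $\int_\La(\bar u\vee v)\ge S=\int_\La\bar u$, forcing $\bar u\vee v=\bar u$ a.e., i.e.\ $v\le\bar u$ a.e.\ in $\La$, and hence $v\le\bar u$ everywhere in $\La$ by the H\"older continuity of minimizers (or by Lemma~\ref{FGK}, according to which two minimizers with the same exterior datum either coincide or are strictly ordered in $\operatorname{int}\La$). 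Thus $\bar u$ is the maximal element of $\mathcal M$; the minimal element is produced symmetrically, minimizing $\int_\La w$ and using $\wedge$ in place of $\vee$.

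The \emph{main obstacle} is part (1): what makes the compactness go through is precisely that the only a priori input required is the uniform $L^\infty$ bound of Lemma~\ref{A}, which then feeds into the fractional elliptic estimates of Proposition~\ref{Lip} with constants independent of the particular minimizer, and that every nonlocal term in $G_1^{v_0}$ is convex and therefore lower semicontinuous under the (a.e.) convergence obtained. Once (1) is in hand, part (2) is a routine monotone--rearrangement argument on the sublattice $\mathcal M$, the only delicate point being the passage from an a.e.\ inequality to a pointwise one, which is immediate given the continuity of minimizers.
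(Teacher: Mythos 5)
Your proof is correct, and part (2) takes a genuinely different route from the paper, so a comparison is worthwhile.

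For part (1) your argument is essentially the one the paper has in mind (the paper just says ``the same techniques as in the proof of the existence of minimizers apply''): a uniform $L^\infty$ bound from Lemma~\ref{A}(2), uniform H\"older/$C^{1,\alpha}$ bounds from the Euler--Lagrange equation and the local version of Proposition~\ref{Lip}, Arzel\`a--Ascoli, and lower semicontinuity of the nonlocal terms via Fatou. One small slip: the bound $[u_k]^2_{H^s(\La)}\le\mu$ is not literally true, since $\mathcal K_1$ also contains the sign-indefinite term $-\theta\int_\La g_1 u_k$; one only gets $[u_k]^2_{H^s(\La)}\le\mu+\theta\|g\|_\infty M_0|\La|$, which is still a uniform bound and suffices. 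That bound is anyway redundant in your scheme, since Arzel\`a--Ascoli already gives local uniform convergence without the $H^s\hookrightarrow L^2$ embedding.

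For part (2) the paper and your proposal diverge. The paper defines $\bar u(x)=\sup\{v(x):v\ \text{minimizer}\}$, fixes one interior point $x_0$, extracts an increasing sequence $v_n$ with $v_n(x_0)\uparrow\bar u(x_0)$, and then invokes the \emph{strong maximum principle} of Lemma~\ref{FGK} twice: first to upgrade the ordering at $x_0$ to ordering on all of $\La$ (so that $v_n\uparrow\bar v$ pointwise and $\bar v$ is a minimizer by part (1)), and then to rule out $\bar v(x_1)<\bar u(x_1)$ for some other $x_1$. You instead observe that Lemma~\ref{FGK0} together with the minimality of $u$ and $v$ makes the set $\mathcal M$ of minimizers a lattice (closed under $\vee,\wedge$), maximize the single scalar functional $w\mapsto\int_\La w$ over $\mathcal M$, pass to an increasing rearranged sequence $\tilde u_k=u_1\vee\cdots\vee u_k$, get $\bar u\in\mathcal M$ by part (1), and conclude pointwise maximality because $\int_\La(\bar u\vee v)\ge\int_\La\bar u$ with equality forces $\bar u\vee v=\bar u$. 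This avoids Lemma~\ref{FGK} entirely for the construction (you only mention it as one of two ways to pass from an a.e.\ inequality to an everywhere inequality, and continuity already suffices for that), trading the fractional strong maximum principle for the more elementary submodularity estimate \eqref{f10}. The paper's route is more ``local'' and produces the pointwise supremum directly; yours identifies the maximal element through a variational characterization and exposes the lattice structure of $\mathcal M$, which the paper never states explicitly. Both are valid; yours is arguably more robust since it does not rest on the maximum-principle machinery of Lemma~\ref{FGK}.
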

{\em Proof:} A sequence of minimizers  of $G_1^{v_0}$  on   $\Lambda$ is  a sequence of functions with energies  converging  to the infimum, so the same techniques as in the proof of the existence of minimizers apply.

For the second part, let us  define a function $\bar u:\Lambda\to \R$ by $\bar u(x):=\sup\{v(x):\ v\  {\rm minimizer}\}.$
We have to show that $\bar u$ is a minimizer, in particular that it has sufficient regularity. Fix a point $x_0$ in the interior of $\Lambda.$ We can find a sequence of minimizers $\{v_n\}_{n\in \N}$  (which for the moment may still depend on $x_0$)  such that 
$v_n(x_0)\to \bar u(x_0)$ and such that the sequence $v_n(x_0)$ is increasing. By Lemma \ref{FGK},
$v_n(x)\le v_m(x)$ for all $m\ge n$ and {\em all} $x\in \Lambda.$ Define now $\bar v(x):=\lim_{n\to\infty}v_n(x).$
We know from  the  first part of the Lemma that the sequence of minimizers $\{v_n\}_{n\in \N}$ has a convergent subsequence 
which converges to a minimizer. So the point-wise limit $\bar v$ must be   minimizer, moreover $\bar v\le \bar u.$\newline
If there exists $x_1\in \Lambda $ such that $\bar v(x_1)<\bar u(x_1),$ then there must be a minimizer
$w$ such that $w(x_1)>\bar v(x_1).$ But $\bar v(x_0)=\bar u(x_0)\ge w(x_0),$ contradicting Lemma \ref{FGK}.
So $\bar v=\bar u,$  which is therefore  the maximal minimizer and point-wise maximum over the set of minimizers. The proof for the minimal element is done in the same way.

 This allows us to define the following object:
 
      \begin {defin}  \label {exmin} Given $ v_0 \in  H^s_{loc}(\R^d) \cap  L^\infty ( \R^d)  $, we  say that $u^+$ ( $u^-$) is the  $v_0-$ maximal ( $v_0-$ minimal) minimizer of $G_1$ in $\Lambda$ if 
 \begin {itemize}
 \item
$u^+(x) =v_0(x)$,   $ (u^-(x) = -v_0(x))$  for $x \in \Lambda^c$,
\item
 $u^+$, ($u^-$) is a miminizer of $G_1$ in $\Lambda $ according  to (2) of Definition \ref {min0}, 
 \item if  $ \tilde u$ is any  other minimizer (if more than one)   of $G_1$ in $\Lambda $ so that   $ \tilde u(x)= v_0(x)$, $ (\tilde u(x)= -v_0(x))$   for $x \in \La^c$,   then
 $ \tilde u(x) <u^+(x)$      $( \tilde u(x) >u^-(x))$  for $x \in \Lambda$. 
 \end  {itemize}
\end {defin}

     \medskip

   \section{Infinite volume covariant states}
   
In this section we  construct two functions $   v^{\pm} (\cdot, \om) $   on  $\R^d$ which we denote {\it macroscopic extrema minimizers  or  infinite-volume states.}   
  %which are the  \clr{so-called}
  %infinite volumes states \clr{(i.e suitable limits)} 
  They are obtained, as explained in the introduction,   through a two limits procedure.
 We first show that  for any      $K\ge   1 + C_0 \theta \|g\|_\infty$,  where $C_0$  is the constant in 
  \eqref {V.1},  the $K-$ maximal and minimal minimizers of $G_1^{K}$ in $ \Lambda_n$   as $ n\to \infty$
converge  in a   suitable  way    to $u^{\pm,K}$.  Then  we define  the  $v^{\pm} (\cdot, \om) $ as the point-wise  limit, when    $K \to \infty$  of  $u^{\pm,K}$. 
   We show that the $   v^{\pm} (\cdot, \om) $, constructed in such a way,   are minimizers under compact perturbations  
and    they  do not  depend on the    boundary values.  
% although one could expect that these two macroscopic volume states would  retain a memory  of the    boundary values,
  % they  are independent of the boundary value.      
   %$|v_0| \ge   1+ C_0 \theta \|g_1\|_\infty $   where $C_0$ is the constant in 
 % \eqref {V.1}  which is the bound imposed by the structural parameters of the model.  
% In the following we denote $C^+ =  1+ C_0 \theta \|g_1\|_\infty$.  
 %ND end check

 \begin{thm} \label{infvol}  [infinite-volume states]
For almost all   $\om \in \Omega$,    there exist  two functions $v^+(x,\omega),$ $v^-(x,\omega)$, $ x \in \R^d$,  having the following properties.
 \begin{itemize}
 \item   If $2s\le 1$, then $v^{\pm} (\cdot, \om) \in C^{\alpha}_{loc} (\R^d)$ for all $\alpha <2s$. 
If  $s \in (\frac 12, 1),$ then   $v^{\pm} (\cdot, \om)  \in C^{1,\alpha}_{loc}(\R^d)$ for all $ \alpha <2s-1$.
   \item    $v^{\pm} (\cdot, \om) $ are  translation covariant.
   \item  \begin {equation}  \label {c1} v^+ (x, \om) = -v^- (x, -\om) \quad x \in \R^d.  \end{equation}
  \item $v^\pm$ are minimizers under compact perturbations in the sense of Def. \ref{min0}, (1). 
   \item     \begin {equation}  \label {eq:bound}    \| v^{\pm}   (\om)\|_\infty  \le 1+ C_0 \theta \|g\|_\infty,  \end{equation}    where $C_0$ is the constant in 
  \eqref {V.1}. 
 \item Let $\Lambda_n = (-\frac n 2, \frac n 2)^d$, $n \in \N$, we have 
\begin {equation} \label {M1a} 
\lim n^{-d}\int_{\Lambda_n}v^\pm (x, \omega) {\rm d} x =m^\pm,
\end {equation} where $m^\pm =  \E \left [  \int_{   [-\frac 12, \frac 12]^d}   v^\pm(x, \cdot )  {\rm d} x \right ] $, and $m^+=-m^- \ge 0$. 
 
  \item  Given  $ v_0 \in L^\infty (\R^d)$, let  $\bar w_n (\cdot, \om)$   be  a minimizer of $  
 G_1^{v_0} (v, \om,\La_n)$ according to Definition \ref {min0},  then  uniformly on $v_0$   
 \begin{equation}  \label {diseq1}  v^- (x, \omega) \le  
\liminf_{n\to\infty} \bar w_n (x , \om) \le\limsup_{n\to\infty} \bar w_n (x , \om) 
\le v^+ (x, \omega),  \quad \hbox {(uniformly on compacts in $x$)}. \end{equation}
\end{itemize}
 \end{thm}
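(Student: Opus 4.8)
The plan is to run the two nested limits announced in the introduction. Fix $K\ge 1+C_0\theta\|g\|_\infty$ and let $u^{+}_{n,K}$ (resp. $u^{-}_{n,K}$) be the maximal (resp. minimal) minimizer of $G_1$ in $\La_n$ with boundary datum $+K$ (resp. $-K$), provided by the existence of maximal/minimal minimizers and Definition~\ref{exmin}. Since the restriction to $\La_m\subset\La_n$ of a minimizer of $G_1$ in $\La_n$ is a minimizer of $G_1$ in $\La_m$ with its own trace as boundary datum, the comparison Lemma~\ref{FGK}, the $L^\infty$-bound of Lemma~\ref{A}(2), and the extremality in Definition~\ref{exmin} force $n\mapsto u^{+}_{n,K}$ to be non-increasing and $n\mapsto u^{-}_{n,K}$ non-decreasing on each fixed $\La_m$; hence the monotone pointwise limits $u^{\pm,K}:=\lim_{n}u^{\pm}_{n,K}$ exist for every $\om$. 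On any fixed cube the $u^{\pm}_{n,K}$ solve the Euler--Lagrange equation~\eqref{EL.1} with uniformly bounded data, so the interior regularity estimates of Proposition~\ref{Lip} bound them uniformly in $C^{0,\alpha}$ (resp. $C^{1,\alpha}$); by Arzel\`a--Ascoli the convergence is then locally uniform, in the appropriate H\"older space and in $H^s_{loc}$, and $u^{\pm,K}$ inherit the stated regularity.

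Next I would show $u^{\pm,K}$ are minimizers under compact perturbations. Given a competitor $w$ equal to $u^{+,K}$ off a bounded open $U\Subset\R^d$ with $G_1(w,U)<\infty$ (which by Lemma~\ref{A} and Lemma~\ref{boun1} lets us take $w\in L^\infty$ without a trace jump across $\partial U$), set $w_n:=w+(u^{+}_{n,K}-u^{+,K})\psi$ for a fixed cutoff $\psi$ equal to $1$ off $U$ and $0$ near the interior core of $U$; then $w_n=u^{+}_{n,K}$ off $U$, so minimality of $u^{+}_{n,K}$ gives $G_1(u^{+}_{n,K},U)\le G_1(w_n,U)$, and since $w_n\to w$ in $H^s(U)$ with uniformly bounded sup-norm (by the local convergence above), $G_1(w_n,U)\to G_1(w,U)$. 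On the other hand, Fatou applied to the double integrals defining the Gagliardo seminorm and $\WW$, with dominated convergence for the remaining terms, gives $G_1(u^{+,K},U)\le\liminf_{n}G_1(u^{+}_{n,K},U)$; combining, $G_1(u^{+,K},U)\le G_1(w,U)$, and likewise for $u^{-,K}$. Comparing constant boundary data $\pm K$ and $\pm K'$ via Lemma~\ref{FGK}, $K\mapsto u^{+,K}$ is non-decreasing and $K\mapsto u^{-,K}$ non-increasing, and Lemma~\ref{A} applied to the minimizers under compact perturbations $u^{\pm,K}$ (truncating at a level just above $1+C_0\theta\|g\|_\infty$ strictly lowers $G_1$ on any box meeting the excess set) yields the uniform bound $\|u^{\pm,K}\|_\infty\le 1+C_0\theta\|g\|_\infty$. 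Hence $v^{\pm}:=\lim_{K\to\infty}u^{\pm,K}$ exist, satisfy~\eqref{eq:bound} and $v^{-}\le v^{+}$, and — the $L^\infty$-bound making Proposition~\ref{Lip} uniform in $K$ as well — repeating the two limit arguments shows $v^{\pm}$ have the claimed regularity and are minimizers under compact perturbations in the sense of Definition~\ref{min0}(1).

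The remaining properties follow from symmetries and ergodic theory. The functional $G_1^{v_0}$ is invariant under $(v,v_0,\om)\mapsto(-v,-v_0,-\om)$, because $W$ is even and $g_1(x,-\om)=-g_1(x,\om)$; as negation reverses pointwise order this gives $u^{+}_{n,K}(x,\om)=-u^{-}_{n,K}(x,-\om)$, hence~\eqref{c1} in the limit. For translation covariance, $g_1(x+y,\om)=g_1(x,T_y\om)$ for $y\in\Z^d$ (from~\eqref{gfrombern}--\eqref{parisv1}), so the change of variables $x\mapsto x+y$ identifies the maximal minimizer in $\La_n+y$ at $\om$ with the one in $\La_n$ at $T_y\om$ translated by $y$; since $\{\La_n+y\}_n$ is again cube-like and invading and, by monotonicity together with the interlacing of the two box sequences (cf. the footnote to Theorem~\ref{min1}), produces the same $v^{\pm}$, one gets $v^{\pm}(x+y,\om)=v^{\pm}(x,T_{-y}\om)$, i.e.~\eqref{rome1}. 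Finally, with $F^{\pm}(\om):=\int_{[-\frac12,\frac12]^d}v^{\pm}(x,\om)\,dx\in L^\infty\subset L^1(\Pr)$, covariance gives $\int_{[-\frac12,\frac12]^d+y}v^{\pm}(x,\om)\,dx=F^{\pm}(T_{-y}\om)$, so summing over $y\in\Z^d\cap\La_n$ and invoking Birkhoff's theorem (ergodicity of $T$) yields~\eqref{M1a} with $m^{\pm}=\E[F^{\pm}]$; \eqref{c1} and the symmetry of $\Pr$ give $m^{+}=-m^{-}$, and $v^{+}\ge v^{-}$ gives $m^{+}\ge m^{-}$, so $m^{+}\ge 0$. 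For~\eqref{diseq1}, fix $K\ge\max\{\|v_0\|_\infty,1+C_0\theta\|g\|_\infty\}$; Lemma~\ref{A}(2) bounds $\|\bar w_n\|_\infty\le K$, so for $m\le n$ the trace of $\bar w_n$ on $\La_m^c$ lies in $[-K,K]$, and Lemma~\ref{FGK} with the extremality of $u^{\pm}_{m,K}$ gives $u^{-}_{m,K}\le\bar w_n\le u^{+}_{m,K}$ on $\La_m$; letting $n\to\infty$, then $m\to\infty$, and using $v^{-}\le u^{-,K}$, $u^{+,K}\le v^{+}$ gives~\eqref{diseq1}, uniformly in $v_0$.

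The hard part is the passage to the limit in the minimality inequality. Because $G_1(\cdot,U)$ depends on the competitor on all of $\R^d$ through the nonlocal term $\WW$, its semicontinuity along the convergence one controls is not automatic: one needs the uniform interior regularity of Proposition~\ref{Lip} — uniform in $K$ only once the $L^\infty$-bound of Lemma~\ref{A} is available — to control the Gagliardo kernel near $\partial U$, and the decay of that kernel at infinity to control the far field. Establishing the uniform bound~\eqref{eq:bound} for $v^{\pm}$, i.e. that the diverging boundary data used in the construction do not survive in the limit, is the other delicate point; ruling out an unbounded excess set there uses the continuity of $u^{\pm,K}$ together with the maximum principle of Lemmas~\ref{FGK0}--\ref{FGK} in the region where the Euler--Lagrange equation is linear.
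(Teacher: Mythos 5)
Your proposal follows the paper's overall architecture quite faithfully: the two nested limits ($n\to\infty$ at fixed $K$, then $K\to\infty$), the monotonicity via Lemma~\ref{FGK}, uniform interior regularity via Proposition~\ref{Lip} and Arzel\`a--Ascoli, the symmetry $(v,v_0,\om)\mapsto(-v,-v_0,-\om)$ for \eqref{c1}, translation covariance by interlacing two exhausting sequences, Birkhoff for \eqref{M1a}, and sandwiching for \eqref{diseq1}. One genuine deviation: where the paper proves that $u^{\pm,K}$ are minimizers under compact perturbations by contradiction (Proposition~\ref{jt2}), you give a direct argument — take a competitor $w$ on $U$, glue $w_n := w + (u^+_{n,K}-u^{+,K})\psi$, pass to the limit via Fatou plus the kernel-decay estimates. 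This is a valid and arguably cleaner reformulation; it relies on the same technical ingredients (Proposition~\ref{convest}-type estimates near $\partial U$, kernel integrability at infinity), so it buys expository simplicity rather than new generality.

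There is, however, one real gap: the uniform (in $K$) $L^\infty$ bound \eqref{eq:bound}. You offer two sketches, and neither works as stated. First, you claim that truncating $u^{\pm,K}$ at level $t$ just above $1+C_0\theta\|g\|_\infty$ ``strictly lowers $G_1$ on any box meeting the excess set.'' But Lemma~\ref{A} only controls $\mathcal{K}_1$; the nonlocal interaction $\WW$ can \emph{increase} under truncation inside a box $\La$, because points $y\in\La^c$ where $|u^{\pm,K}(y)|>t$ (and these exist, since the a priori bound is only $K$) are moved \emph{farther} from the truncated values $u^t(x)$, $x\in\La$. So $G_1$ need not decrease, and the box-local competitor does not yield a contradiction by itself. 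Second, the ``maximum principle in the linear regime'' sketch is too vague to rule out an \emph{unbounded} excess set: the strong maximum principle argument of Lemma~\ref{FGK} requires an interior point where the relevant difference attains a (strict) extremum, which is not available if the set $\{u^{+,K}>1+C_0\theta\|g\|_\infty\}$ is unbounded and the supremum is approached only at infinity — precisely the scenario that could survive from the $\pm K$ boundary data. The paper's actual argument in Lemma~\ref{Ma1} is of a different nature: it uses translation covariance of $u^{\pm,K}$ together with Birkhoff's ergodic theorem to show that a nonzero expected excess on $Q(0)$ forces the Lebesgue measure of the excess set in $\La_n$ to grow like $|\La_n|$, whence the volume-order savings in $\mathcal{K}_1$ from truncation dominate; the possible increase of $\WW$ is then controlled to be $o(|\La_n|)$ by Proposition~\ref{tec1}. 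This ergodic-plus-surface-estimate step is essential and should be supplied; without it, \eqref{eq:bound} and hence the very existence of the $K\to\infty$ limits $v^{\pm}$ (with $K$-uniform regularity) are not justified.

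Two minor remarks. In the direct minimality argument, you should state explicitly that the restriction of a minimizer to a subdomain is a minimizer with its own trace as boundary datum — this is the standard localization identity for $G_1$ and underlies the inequality $G_1(u^+_{n,K},U)\le G_1(w_n,U)$. And in \eqref{diseq1}, the comparison $u^{-,K}_m\le\bar w_n\le u^{+,K}_m$ on $\La_m$ uses both the ordering of boundary data \emph{and} the maximality/minimality of $u^{\pm,K}_m$ among $\pm K$-minimizers; since $\bar w_n$ is \emph{not} a $\pm K$-minimizer but a restriction to $\La_m$ of a $v_0$-minimizer, you should spell out that Lemma~\ref{FGK} applies to the restricted problem (with the trace of $\bar w_n$ as its own boundary datum) — you do gesture at this, but it deserves a sentence.
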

 These $v^{\pm} (\cdot, \om) $ infinite volume minimizers will be obtained as limits of the so-called $K$-maximal/minimal minimizers.
 
 \begin{prop}\label {ag1}
Let $K\in \R,\ K\ge 1+ C_0 \theta \|g\|_\infty$   and $u^{\pm ,K}_n \in   H^s_{loc} (\R^d) \cap L^\infty(\R^d)    $ be  respectively the $K-$maximal and the $K-$minimal  minimizers  of $G_1$ in $ \Lambda_n= (-\frac n 2, \frac n 2)^d$.   We have that  $\Pr-$a.s.
        \begin {equation}  \label {ag8}  \lim_{n \to \infty}  u^{\pm,K}_n (x, \om)= u^{\pm,K} (x, \om) \qquad \hbox  {  point-wise and uniformly on compacts  in } \quad  x.  \end {equation}
  Further  \begin{itemize}
 \item   
  If $2s\le 1$, then $u^{\pm,K} (\cdot, \om) \in C^{\alpha}_{loc} (\R^d)$ for all $\alpha <2s$. 
If  $s \in (\frac 12, 1),$ then   $u^{\pm,K} (\cdot, \om)  \in C^{1,\alpha}_{loc}(\R^d)$ for all $ \alpha <2s-1$.
  \item    $u^{\pm,K} (\cdot, \om) $ are  translation covariant.
  \item   \begin {equation} \label {f25} u^{+,K} (\cdot, \om)=  - u^{-,K} (\cdot, -\om), \quad \Pr- a.s.\end {equation}. 
 \end {itemize}
\end{prop}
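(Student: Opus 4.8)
The plan is to obtain $u^{\pm,K}$ as monotone pointwise limits in the domain index $n$, to upgrade pointwise convergence to convergence uniform on compacts via the uniform interior regularity for the fractional Euler--Lagrange equation (Proposition \ref{Lip}), and to deduce translation covariance and the reflection identity \eqref{f25} from the symmetries of $G_1$ under $\Z^d$-translations and under $v\mapsto-v$. Here $\om$ is a parameter: one works with a fixed $\om$ such that $\|g(\cdot,\om)\|_\infty=A$, which is the only (full-measure) event used. Since $K\ge 1+C_0\theta A$, Lemma \ref{A}(2) gives $\|u^{\pm,K}_n\|_\infty\le K$ for every $n$. Let $m\le n$. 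Being a $K$-minimizer of $G_1$ in $\La_n$, hence a minimizer under compact perturbations there, $u^{+,K}_n$ minimizes $G_1(\cdot,\om,\La_m)$ subject to its own trace on $\La_m^c$, where $u^{+,K}_n\le K=u^{+,K}_m$; by the rearrangement inequality of Lemma \ref{FGK0} and the maximality of $u^{+,K}_m$ (equivalently, by the comparison Lemma \ref{FGK}) this forces $u^{+,K}_n\le u^{+,K}_m$ on $\La_m$. Thus, for each fixed $x$, $n\mapsto u^{+,K}_n(x,\om)$ is eventually defined, nonincreasing and bounded below by $-K$, hence convergent; let $u^{+,K}(x,\om)$ be its limit. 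The mirror argument (traces $-K$, ``minimal'' for ``maximal'') shows that $n\mapsto u^{-,K}_n(x,\om)$ is nondecreasing and bounded above by $K$, and one puts $u^{-,K}(x,\om):=\lim_n u^{-,K}_n(x,\om)$.

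On $\operatorname{int}(\La_n)$ each $u^{\pm,K}_n$ solves the fractional Euler--Lagrange equation associated with $G_1$ (the equation appearing in the proof of Lemma \ref{FGK}), whose right-hand side --- a fixed linear combination of $g_1(\cdot,\om)$ and $W'(u^{\pm,K}_n)$ --- is bounded in $L^\infty$ by a constant $C(K)$ independent of $n$, thanks to $\|u^{\pm,K}_n\|_\infty\le K$, $W\in C^2$ and $\|g_1\|_\infty\le A$. Hence, by Proposition \ref{Lip}, on every fixed $U\Subset\R^d$ and all $n$ with $U\Subset\La_n$, the norm of $u^{\pm,K}_n$ in $C^{0,\alpha}(U)$ (for $\alpha<2s$, if $2s\le1$), respectively in $C^{1,\alpha}(U)$ (for $\alpha<2s-1$, if $2s>1$), is bounded uniformly in $n$. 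By Arzel\`{a}--Ascoli $\{u^{\pm,K}_n\}$ is then precompact in $C(U)$, so $u^{\pm,K}$ is continuous on $U$, the convergence is uniform there, and lower semicontinuity of the H\"older norms gives $u^{\pm,K}\in C^{0,\alpha}_{\operatorname{loc}}(\R^d)$ resp.\ $C^{1,\alpha}_{\operatorname{loc}}(\R^d)$ with the stated exponents. (Equivalently one passes to the limit in the equation --- the bound $\|u^{\pm,K}\|_\infty\le K$ controlling the tail of $(-\Delta)^s$ --- and applies Proposition \ref{Lip}; uniform convergence on compacts is also a case of Dini's theorem, the convergence being monotone with continuous limit.)

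For $y\in\Z^d$ the change of variables $x\mapsto x+y$ intertwines $G_1(\cdot,\om,\La_n+y)$ with $G_1(\cdot,[T_{-y}\om],\La_n)$ --- the kernels depend only on $|x-y|$, the potential term is unchanged, and the random term transforms by \eqref{gfrombern}--\eqref{parisv1} --- and it preserves pointwise order, hence maps the $K$-maximal minimizer on $\La_n+y$ to that on $\La_n$: $u^{+,K}_{\La_n+y}(x,\om)=u^{+,K}_n(x-y,[T_{-y}\om])$. The same comparison as above shows the $K$-maximal minimizer is also nonincreasing under enlargement of a cube; choosing an integer $a=a(y)$ with $\La_{n-a}\subset\La_n+y\subset\La_{n+a}$ for all large $n$ and sandwiching $u^{+,K}_{\La_{n+a}}\le u^{+,K}_{\La_n+y}\le u^{+,K}_{\La_{n-a}}$ on $\La_{n-a}$, one lets $n\to\infty$ to get $\lim_n u^{+,K}_{\La_n+y}=u^{+,K}$; therefore $u^{+,K}(x+y,\om)=u^{+,K}(x,[T_{-y}\om])$, i.e.\ \eqref{rome1}, and likewise for $u^{-,K}$. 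Finally, since $W$ is even, $|({-v})(x)-({-v})(y)|^2=|v(x)-v(y)|^2$, and $g_1(x,-\om)=-g_1(x,\om)$, one has $G_1(-v,-\om,\La)=G_1(v,\om,\La)$; hence $v\mapsto-v$ is an order-reversing bijection from the minimizers of $G_1(\cdot,\om,\La_n)$ with trace $K$ onto those of $G_1(\cdot,-\om,\La_n)$ with trace $-K$, so $-u^{+,K}_n(\cdot,\om)$ is the minimal such minimizer, i.e.\ $-u^{+,K}_n(\cdot,\om)=u^{-,K}_n(\cdot,-\om)$; passing to the limit yields \eqref{f25}.

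The step I expect to require the most care is translation covariance, because the shifted cubes $\La_n+y$ never coincide with any $\La_m$, so one must genuinely combine monotonicity of the $K$-maximal minimizer in the domain with the insensitivity of the limit to the particular cube-like invading sequence (cf.\ the footnote to Theorem \ref{min1}); the essential analytic input everywhere is the uniform interior regularity of Proposition \ref{Lip}, which simultaneously upgrades the convergence and transfers the regularity to the limit.
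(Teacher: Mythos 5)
Your proof is correct and follows essentially the same route as the paper: monotonicity of $u^{\pm,K}_n$ in $n$ via the comparison Lemma \ref{FGK} and Lemma \ref{A}(2), upgrade to locally uniform convergence and H\"older/$C^{1,\alpha}$ regularity of the limit via the uniform interior estimates of Proposition \ref{Lip}, translation covariance by sandwiching the $K$-maximal minimizer on the shifted cube between centered cubes, and the reflection identity from $G_1^K(v,\om,\La)=G_1^{-K}(-v,-\om,\La)$ together with order reversal under $v\mapsto-v$. The only difference is cosmetic: your translation-covariance step makes the "monotonicity in the domain plus independence of the exhausting sequence" principle explicit with the $\La_{n\pm a}$ sandwich, where the paper compares $\La_n+z$ with $\La_m$ for $m$ suitably larger or smaller, and there is a sign-convention ambiguity in the paper between \eqref{parisv1} and \eqref{rome1} that your argument inherits harmlessly.
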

  \begin {rem}  \label {ra3}   As in Remark \ref {ra2}   the convergence  in \eqref {ag8} holds  in $C^{0,\beta}$,  $ \beta <\alpha < 2s$  when $s \in (0, \frac 12 ]$ and in  $C^{1,\beta}$,  $ \beta < \alpha$, $  \alpha <2s-1$ when $s \in (\frac 12,1)$.
\end  {rem}  
 \begin{proof} 
We start proving the existence of $u^{\pm,K}$.
  For $z \in \Z^d$,     denote by  
   $ u_n^{z,+,K} := u_n^{z,+,K}(\cdot, \om) $  the maximal minimizer of $G_1$ in the domain $ z + \La_n$,  so that  $u_n^{z,+,K} (\cdot, \om)=  K$ in $ \R^d \setminus (z+\La_n)$ 
   and respectively   $ u_n^{z,-,K}:= u_n^{z,-,K}(\cdot, \om) $  the minimal minimizer of $G_1$ in the domain $ z + \La_n$ ,  so that  $u_n^{z,-,K} (\cdot, \om)= -K$ in $ \R^d \setminus (z+\La_n)$.  If $z=0$ we write $u_n^{\pm,K}.$     Without loss of generality  we   assume for the next paragraph $z=0.$ 
   
  By   Lemma \ref{A}, (2),   $\|u^{\pm,K }_n\|_\infty\le K$ for any $n$. 
Therefore   $u^{+,K }_m \le  K$
on $\Lambda_m \setminus  \Lambda_n$ for $m>n$.  Lemma \ref{FGK} implies
that for any $x$ and $\omega$   (and $n>n_0(x)$ ) the sequence
$\{u^{+,K}_n(x,\om)\}_n$  is decreasing.      Moreover it is bounded from below
by $-K$.   Hence, reasoning in a similar manner for $u^{-,K }_n,$
$$
u^{\pm,K }(x,\omega):=\lim_n u^{\pm,K }_n(x,\omega)
$$ exist and are  measurable as function of $\om$.  
 We start analyzing  the case $2s\le 1$.  As the $u^{\pm,K }_n$ are bounded and
minimizers,  they are on each fixed cube $Q$  H\"older continuous of order $\alpha <2s$ for any $2s\le1,$ uniformly in $n,$ provided $Q\subseteq\Lambda_n,$ see  Proposition \ref {Lip}.   This implies that subsequences converge locally uniformly
to a H\"older function  of order $\alpha <2s$. As the entire sequence converges point-wise,
the limit of any subsequence must coincide with $u^{\pm,K},$ which is therefore a locally  H\"older continuous function  of order $\alpha <2s$. The same argument for general $z$ yields
monotone limits $u^{z,\pm,K}.$ 
When $s \in (\frac 12, 1)$  the argument goes in the same way, the only difference is that   
 by   Proposition \ref {Lip}  
  the  minimizers $u^{\pm,K}_n$ are  
 uniformly bounded and uniformly   with respect to $n$ in $C^{1,\alpha}$    with $\alpha <2s-1$  on each fixed cube $Q$ which does not depend on $n$.  
 
 To show   that $u^{\pm,K}$ are translational covariant,    
notice that, by \eqref{parisv1}
$$
u_n^{0,+,K}(0,\omega)=u^{z,+,K}_n(z, T_{-z}\omega).
$$  
Take $m$ large enough so that    $\Lambda_n+z\subseteq\Lambda_{m}$.
  We  have that  $u^{z,+,K}_n(z, T_{-z} \om) =  u_n^{0,+,K}(0,\omega) \ge u^{0,+,K}_{m}(0, \om)$, since $m>n$.
  Then letting first  $n \to \infty$ and then $m\to\infty$ we get   $u^{z,+,K} (z, T_{-z} \om) \ge u^{0,+,K} (0, \om) $.
  The opposite equality follows in the same way by taking  $ \Lambda_m \subseteq \Lambda_n+z$. 
Note that we used in the proof that the boundary condition  is translation invariant.
 Next we prove \eqref  {f25}. 
 It is immediate to verify that
  \begin{equation} \label{c2} G_1^{K} ( v, \om, \Lambda_n) =     G_1^{-K}(-v, -\om,\Lambda_n)=  G_1^{-K}(w, -\om,\Lambda_n),
\end{equation}
 (see  notation  \eqref  {funct110}), if  we set $-v = w$. 
 Therefore if $   u^{+,K}_n(\cdot, \omega)$ is the  maximal minimizer of  $ G_1^{K} ( v, \om, \Lambda_n) $   we have that   $   w _n (\cdot, -\omega) = - u^{+,K}_n(\cdot, \omega)$ is 
the minimal minimizer of  $ G_1^{-K}(w, -\om,\Lambda_n)$   in $ \Lambda_n$, i.e 
$    w _n (\cdot, -\omega)  =  u^{-,K}_n (\cdot, -\omega)$.   Then letting $n\to \infty$ we get \eqref {f25}. 
\end{proof}  

Next we show  that the states $u^{\pm,K}$     are indeed minimizers under compact perturbations.
In the proof we will only use that the boundary condition is bounded by $K$ and has the regularity of a minimizer, but not that it is actually a constant. 

 \begin{prop}  \label{jt2}   Let       $K \in \R$,  $K\ge   1 + C_0 \theta \|g\|_\infty$  and $u^{\pm,K} (\cdot, \om)$ be   the functions constructed  in  Proposition   \ref {ag1}.
 Then,   for any   $\La \Subset \R^d$,  we have that
 $$ G_1^{u^{+,K}}(u^{+,K}, \om, \La) \le G_1^{u^{+,K}}(u, \om, \La),$$
  for any  measurable function $u$ which coincides with $u^{+,K} (\cdot, \om)$ in $\La^c$. 
  The same holds for $u^{-,K}$.  
\end {prop}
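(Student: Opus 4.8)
The plan is to pass to the limit in the finite-volume minimality of the $K$-maximal minimizers $u^{+,K}_n$ of Proposition \ref{ag1}. Fix $\La\Subset\R^d$ and a competitor $u$ with $u=u^{+,K}(\cdot,\om)$ on $\La^c$. If $G_1^{u^{+,K}}(u,\om,\La)=+\infty$ there is nothing to prove, and since $\|u^{+,K}\|_\infty\le K$ and $K\ge 1+C_0\theta\|g\|_\infty$, Lemma \ref{A}(2) lets us truncate $u$ at height $K$ without increasing its energy, so we may assume $\|u\|_\infty\le K$; then $u\in H^s_{loc}(\R^d)\cap L^\infty$ with $u|_\La\in H^s(\La)$ and $\WW((u,\La),(u^{+,K},\La^c))<\infty$, while $G_1(u^{+,K},\om,\La)<\infty$ by Lemma \ref{boun1} (or directly from the local regularity of $u^{+,K}$). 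The tool I would use throughout is the \emph{locality of energy differences}: for $v,w\in H^s_{loc}(\R^d)\cap L^\infty$ that agree outside a bounded set $\La$ and any bounded $\La'\supseteq\La$,
$$ G_1(v,\om,\La')-G_1(w,\om,\La')=G_1(v,\om,\La)-G_1(w,\om,\La), $$
which follows by expanding the double integrals and noting that $(|v(x)-v(y)|^2-|w(x)-w(y)|^2)/|x-y|^{d+2s}$ vanishes unless $x\in\La$ or $y\in\La$.

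Applying this with $\La'=\La_n$ for $n$ large enough that $\overline\La\Subset\La_n$, I would first deduce that the restriction of $u^{+,K}_n$ to $\La$ is a minimizer of $G_1$ in $\La$ with boundary datum $u^{+,K}_n|_{\La^c}$: any competitor $v$ equal to $u^{+,K}_n$ on $\La^c$ (which we may take bounded by $K$ and of finite energy, else the inequality is trivial) equals $K$ on $\La_n^c$, hence competes with $u^{+,K}_n$ for the minimization on $\La_n$, and the identity turns $G_1(u^{+,K}_n,\om,\La_n)\le G_1(v,\om,\La_n)$ into $G_1(u^{+,K}_n,\om,\La)\le G_1(v,\om,\La)$. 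Then I would push this to $n\to\infty$ by the usual two-sided argument, using that $u^{+,K}_n\to u^{+,K}$ pointwise on all of $\R^d$ and, by Proposition \ref{ag1}, in $C^{0,\alpha}$ uniformly on fixed cubes. Lower semicontinuity is the easy half: since $W\ge0$ and $g_1$ is bounded, Fatou applied to the local term and to each of the two double integrals in $G_1$ gives $G_1(u^{+,K},\om,\La)\le\liminf_n G_1(u^{+,K}_n,\om,\La)$.

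For the complementary inequality I would build a recovery competitor that absorbs the drifting boundary data. Put $h_n:=u^{+,K}_n-u^{+,K}$ and $u_n:=u+h_n$. Then $u_n=u^{+,K}_n$ on $\La^c$ (because $u=u^{+,K}$ there), $u_n\in H^s_{loc}(\R^d)\cap L^\infty$, and $\|h_n\|_\infty\to0$; by the finite-volume minimality just obtained, $G_1(u^{+,K}_n,\om,\La)\le G_1(u_n,\om,\La)$, so it remains to check $G_1(u_n,\om,\La)\to G_1^{u^{+,K}}(u,\om,\La)$, whence $G_1(u^{+,K},\om,\La)\le\liminf_n G_1(u^{+,K}_n,\om,\La)\le G_1^{u^{+,K}}(u,\om,\La)$. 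Expanding $u_n=u+h_n$, the local term converges since $h_n\to0$ uniformly on $\overline\La$ and $W$ is continuous; the nonlocal terms reproduce the fixed finite quantities $\int_\La\!\int_\La|u(x)-u(y)|^2|x-y|^{-d-2s}$ and $\WW((u,\La),(u^{+,K},\La^c))$, together with cross terms bounded by Cauchy--Schwarz by $\bigl(\int_\La\!\int_\La|u(x)-u(y)|^2|x-y|^{-d-2s}+\WW((u,\La),(u^{+,K},\La^c))\bigr)^{1/2}\bigl(\int_\La\!\int_\La|h_n(x)-h_n(y)|^2|x-y|^{-d-2s}+\WW((h_n,\La),(h_n,\La^c))\bigr)^{1/2}$, and the quadratic contribution $\int_\La\!\int_\La|h_n(x)-h_n(y)|^2|x-y|^{-d-2s}+\WW((h_n,\La),(h_n,\La^c))$. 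So everything reduces to $\int_\La\!\int_\La|h_n(x)-h_n(y)|^2|x-y|^{-d-2s}+\WW((h_n,\La),(h_n,\La^c))\to0$.

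I expect this last convergence to be the only real difficulty, because for $s\ge\tfrac12$ the crude bound by $\|h_n\|_\infty^2\int_\La\int_{\R^d\setminus\La}|x-y|^{-d-2s}\,dx\,dy$ is useless (the integral diverges at $\partial\La$). The remedy is to interpolate the vanishing sup-norm of $h_n$ against its \emph{uniform} Hölder bound on a fixed cube $Q$ with $\overline\La$ in its interior, which Proposition \ref{ag1} provides with some exponent $\alpha>s$ (take $\alpha$ close to $2s$ when $2s\le1$, and $\alpha=1$ when $2s>1$). Splitting the integration into $Q\times Q$ and its complement: on the complement $|x-y|\ge\rho:=\dist(\overline\La,\R^d\setminus Q)>0$, so that piece is $\le C(\rho,d,s)\|h_n\|_\infty^2\to0$; on $Q\times Q$ one uses $|h_n(x)-h_n(y)|^2\le(2\|h_n\|_\infty)^{2(1-\theta)}\bigl([h_n]_{C^{0,\alpha}(Q)}|x-y|^{\alpha}\bigr)^{2\theta}$ with $\theta\in(s/\alpha,1)$, which makes the kernel $|x-y|^{2\theta\alpha-d-2s}$ integrable over the bounded set $Q\times Q$ while the prefactor $\|h_n\|_\infty^{2(1-\theta)}\to0$ — precisely the estimate already performed in Lemma \ref{boun1}. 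This gives $G_1(u_n,\om,\La)\to G_1^{u^{+,K}}(u,\om,\La)$ and finishes the argument for $u^{+,K}$; for $u^{-,K}$ one repeats it verbatim, or invokes the symmetry $u^{+,K}(\cdot,\om)=-u^{-,K}(\cdot,-\om)$ of \eqref{f25} together with $G_1^{v_0}(v,\om,\La)=G_1^{-v_0}(-v,-\om,\La)$.
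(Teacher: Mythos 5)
Your proof is correct in substance, but it takes a genuinely different route from the paper's. The paper argues by contradiction: assuming a strictly better competitor $u$ exists, it builds a comparison function $\widetilde{u_n}$ which equals $u$ on $\La$, equals $u^{+,K}_n$ where $\operatorname{dist}(\cdot,\La)>1$, and interpolates via a cut-off $\Psi$ in between; it then estimates several error terms (the proof's $I_1, I_2, I_3$ and bounds \eqref{est1bjt2}--\eqref{est4bjt2}) one by one to show $\widetilde{u_n}$ beats $u^{+,K}_n$ in $\La_n$ for $n$ large. Your argument instead isolates the algebraic \emph{locality of energy differences}: for $v,w$ agreeing outside $\La$ and any $\La'\supseteq\La$, the difference $G_1(v,\om,\La')-G_1(w,\om,\La')$ equals $G_1(v,\om,\La)-G_1(w,\om,\La)$, because after expanding $\int_{\La'}\int_{\La'}+2\int_{\La'}\int_{(\La')^c}$ the terms that survive the cancellation involve only pairs $(x,y)$ with both points in $\La^c$. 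This identity converts the $\La_n$-minimality of $u^{+,K}_n$ directly into $\La$-minimality with boundary datum $u^{+,K}_n|_{\La^c}$, and the passage to the limit becomes a clean $\liminf$/recovery argument: Fatou for the lower bound, and the shifted competitor $u_n=u+h_n$ (with $h_n=u^{+,K}_n-u^{+,K}$) for the recovery. This avoids the cut-off $\Psi$ entirely and reduces the whole convergence question to showing that the quadratic Gagliardo form of $h_n$ over $\La\times\R^d$ vanishes, which you handle by the same interpolation of $\|h_n\|_{L^\infty}$ against the uniform H\"older bound that the paper relies on in Proposition \ref{convest}.

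One slip worth fixing: you assert $\|h_n\|_{\infty}\to 0$, but $h_n=u^{+,K}_n-u^{+,K}$ does \emph{not} converge to zero in $L^\infty(\R^d)$, since $u^{+,K}_n\equiv K$ on $\La_n^c$ while $\|u^{+,K}\|_\infty\le 1+C_0\theta\|g\|_\infty<K$. The far-field piece $\int_\La\int_{\R^d\setminus Q}|h_n(x)-h_n(y)|^2|x-y|^{-d-2s}$ should instead be bounded using only the uniform global bound $\|h_n\|_\infty\le 2K$, giving $\le CK^2|\La|\,\rho^{-2s}$ with $\rho=\dist(\overline\La,\R^d\setminus Q)$, and one then picks $Q$ (hence $\rho$) large first, and $n$ large afterwards; the $Q\times Q$ piece is then controlled exactly as you wrote, since $h_n\to0$ uniformly on compacts and is bounded in $C^{0,\alpha}(Q)$ with $\alpha>s$. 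With this two-step choice of $(Q,n)$ — the same order used in the paper's Steps 2--3 — your argument closes.

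Two cosmetic points. First, you should record explicitly that $G_1^{u^{+,K}}(u^{+,K},\om,\La)<\infty$ (Lemma \ref{boun1}, using the H\"older regularity of $u^{+,K}$) so that the target inequality is meaningful. Second, the locality identity is valid as a statement about \emph{differences of finite quantities}: after the cancellation, each side is a finite sum of integrals over $\La^c\times\La^c$ pairs, so one never subtracts infinities — worth a sentence so a reader does not suspect you are manipulating $[v]^2_{H^s(\R^d)}$ directly.
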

   
  \begin{proof}
 Denote shortly $u^{+,K}= u^*$. 
 We argue by contradiction. Assume that there exists a bounded     set $ \Lambda$  and a measurable  function $u$ so that
 $ G_1^{u^*}(u, \om, \La)<  G_1^{u^*}(u^*, \om, \La)$.    
       Let  $ \Lambda_n$ be so large that  $ \Lambda \subset \Lambda_n$  and let  $u_n^{+,K}$  be  the $K-$maximal minimizer of $G_1$ in $\Lambda_n$,  see  Definition \ref {exmin}.

For simplicity we drop the dependence on $\omega $ and  denote 
  $$E_1:=G_1^{u^*}(u^*,\Lambda), \qquad E_2:=G_1^{u^*}(u,\Lambda), \qquad E_n:=G_1^{K}(u^{+,K}_n,\Lambda_n). $$ 
By  assumption
  there exists a $\delta>0$ such that $E_2+\delta<E_1.$ The aim is to construct a function $\widetilde{u_n}$ such that  if  $E_2+\delta<E_1 $ then 
  $G_1^{K}(\widetilde{u_n},\Lambda_n)<E_n$ for some   $n$ large enough,  which  gives   a contradiction. 
  \vskip0.5cm 
  \noindent 
  \emph{ Step 1}:  By \eqref {funct11}
  \begin{eqnarray}\label{split1jt2} E_1\!\!&=&\!\!\KK_1(u^*,\Lambda)+\WW((u^*,\Lambda),(u^*,\Lambda^c))   \\
  \label{split2jt2} E_2\!\!&=&\!\!\KK_1(u,\Lambda)+\WW ((u,\Lambda),(u^*,\Lambda^c))  \\
  \label{split3jt2} E_n\!\!&=&\!\!\KK_1(u^{+,K}_n,\Lambda)+\WW((u^{+,K}_n,\Lambda),(u^{+,K}_n,\Lambda_n\setminus\Lambda))+
  \KK_1(u^{+,K}_n,\Lambda_n\setminus\Lambda) \cr &+&\WW((u^{+,K}_n,\Lambda_n\setminus\Lambda),(K,\Lambda_n^c)).  
  \end{eqnarray}

 % Further note that $v_n^+\to v_n$ in $C^{0,\alpha}_{\rm loc}$ for all $\alpha<{\rm min}(2s,1):$ 
 % $v_n^+$  is a minimizer on $\Lambda_n,$ hence it solves the Euler-Lagrange equation \eqref{...} in $\Lambda_n,$ %and by the
 % interior regularity estimate it is in $C^{0,\alpha}$ on any compact $C\subset\Lambda.$ The norm  does not depend on $n,$ only on the
 % coefficients of the Euler-Lagrange equation and the distance from the boundary, see ... So the family $(v_n^+))_{n%\ge n_0}$ is uniformly H\"older-continuous on a fixed compact $C$ (for $n_0$ sufficiently large such that $C\subset\Lambda_{n_0})$ for any $\alpha'<2s.$ By compact embedding of
% H\"older spaces (or Arzela-Ascoli), there exists a convergent subsequence in $C^{0,\alpha}$ for $\alpha<\alpha'<2s.%$ By pointwise convergence to
 %$v^+,$ the limit of any such subsequence must be $v^+,$ so the entire sequence converges to $v^+$ in $\alpha$ %and in particular 
 %$v^+\in \alpha.$  (This information on $v^+$ cannot be obtained directly, as we do not yet know that $v^+$ is a %minimizer in any sense, so
% we cannot use regularity for the Euler-Lagrange equation.) 
 \vskip0.5cm 
  \noindent
  \emph{ Step 2:}  Next we show that for any $\eps>0$ there exists  $n_\eps$ s.t. for $ n \ge n_\e$
   \begin{eqnarray}
  \label{est1bjt2} |\KK_1(u_n^{+,K},\Lambda)-\KK_1(u^*,\Lambda)|<\eps,\\
  \label{est2bjt2}
A\equiv  | \WW ((u^*,\Lambda),(u^*,\Lambda_n\setminus\Lambda))- \WW ((u_n^{+,K},\Lambda),(u_n^{+,K},\Lambda_n\setminus\Lambda))|<\eps.
  \end{eqnarray}
  The  bound  \eqref {est1bjt2}  follows immediately from    Proposition \ref{convest} with $D=\Lambda$,   the regularity properties of the minimizers and Remark \ref {ra3}.  
      To show  \eqref {est2bjt2},  fix   $R>0$ so that  $\Lambda \subset B_{R/2}(0)$  and require $n$ so large that $B_R(0)\subset \Lambda_n.$  Note that we can choose such $R$ to be bounded uniformly in $n.$ 
  We  upper bound $A$ in \eqref {est2bjt2} as following:
  \begin{eqnarray*}
 &&  A \le |I_1|+|I_2|, \\
 &&I_1=\int_{\Lambda}\int_{B_R(0) \setminus\Lambda}\frac{|u^*(z)-u^*(z')|^2-|u_n^{+,K}(z)-u_n^{+,K}(z')|^2}{|z-z'|^{d+2s}}dz dz',\\
 &&I_2=\int_{\Lambda}\int_{\Lambda_n\setminus B_R(0)}\frac{|u^*(z)-u^*(z')|^2-|u_n^{+,K}(z)-u_n^{+,K}(z')|^2}{|z-z'|^{d+2s}}dz dz'.
 \end{eqnarray*}
 $I_1$ is estimated (in a very rough way) by Proposition  \ref{convest} with $D=B_R$. 
For $I_2$,   since  $|u^*|\le K,\ |u_n^{+,K}|\le K$   
  we  have
  $$
  |I_2|\le\int_{\Lambda}\int_{\R^d\setminus B_R(0)}\frac{8K}{|z-z'|^{d+2s}}dz dz'\le 8KC(d)|\Lambda|\int_{R/2}^\infty r^{-2s-1}\le K
  |\Lambda|C'(d)R^{-2s}.
  $$
Here we used the  integrability of the kernel at infinity. In conclusion, by choosing first $R$ sufficiently large, depending on $\eps,$ and then $n_\eps$ large
depending on $R$ we obtain  \eqref{est1bjt2} and \eqref{est2bjt2}  for all $n\ge n_\eps.$
 \vskip0.5cm 
\noindent
\emph{ Step 3:} In the same way as $I_2$ above we use the integrability of the kernel at infinity to get  
  $$
  | \WW ((u^*,\Lambda),(u^*,\Lambda_n\setminus\Lambda))-\WW_1((u^*,\Lambda),(u^*,\R^d\setminus\Lambda))|
  \le 4KC(d)|\Lambda|\int_{R/2}^\infty r^{-2s-1}\le K
  |\Lambda|C'(d)R^{-2s}
<\eps
  $$for $R$ and $n$ sufficiently large. So
   \begin {equation} \label {ag6} \begin{split}
  E_n >&E_1-3\eps  +\KK_1(u_n^{+,K},\Lambda_n\setminus\Lambda)+\WW ((u_n^{+,K},\Lambda_n\setminus\Lambda),(K,\Lambda_n^c)) \\
   >&E_2 +\KK_1(u_n^{+,K},\Lambda_n\setminus\Lambda)+\WW_1((u_n^{+,K},\Lambda_n\setminus\Lambda),(K,\Lambda_n^c))+\delta -3\eps . 
  \end{split}\end {equation}
  
  \noindent 
  \emph{ Step 4}  Now we  construct  a function on $\Lambda_n$ such that its energy in this cube with $K$ b.c. approximates the first three  terms in the
 last line
of \eqref{ag6}, which will lead to a contradiction. 
  Define   a function  $\widetilde{u_n}$ which is equal to $u$ in $\Lambda$ and equal to $u_n^{+,K} $ outside a boundary layer of width 1 of    $\Lambda $: 
   \begin {equation} \label {ag2}\widetilde{u_n}(x):=\left\{\begin{array}{ll}u(x),&\ {\rm if\ } x\in \Lambda,\\ u_n^{+,K}(x)&  {\rm if\ }    x\in\R^d:\ {\rm dist}(x,\Lambda)>1, \\  u^*(x)+\Psi(x)(u_n^{+,K}(x)-u^*(x)) & {  \rm else} 
  \end{array}\right.
  \end {equation} 
   where 
  $\Psi:\R^d\to [0,1]$ is a   smooth cut-off function nondecreasing in ${\rm dist}(x, \Lambda)$  with $\Psi(x)=0$ if ${\rm dist}(x,\Lambda)<1/2$ and 
$\Psi(x)=1$ if ${\rm dist}(x,\Lambda)>1.$    Notice that    $\widetilde {u_n} -   u^* \to 0$ in $C^{0,\alpha} (\Lambda_n\setminus\Lambda)$ for $\alpha<2s.$ 
 By  the equality    $ u^*(x)+\Psi(x)(u_n^{+,K}(x)-u^*(x))   =u_n^{+,K}(x)+[1-\Psi(x)](u^*(x)-u_n^{+,K}(x)) $ which  we will use   in the following we  get  also that 
 $\widetilde {u_n} -   u_n^{+,K} \to 0$ in $C^{0,\alpha} (\Lambda_n\setminus\Lambda)$ for $\alpha<2s.$
  Set
 \begin{eqnarray*}
 &&I_3=|\WW ((u,\Lambda),(u^*,\Lambda^c)) -\WW ((u,\Lambda),(\widetilde{u_n},\Lambda^c))|\\&&=\left|\int_{\Lambda } dz \int_{\Lambda^c} dz'\frac{|u(z)-u^*(z')|^2-|u(z)-\tilde u_n (z') | ^2}{|z-z'|^{d+2s}}\right|\\&& =\left |\int_{\Lambda } dz \int_{\Lambda^c} dz'\frac{2    u(z) [ \widetilde{u_n}(z')-u^*(z')] +   [ (\widetilde{u_n}(z'))^2 -(u^*(z'))^2]   
}{|z-z'|^{d+2s}}\right |.
 \end{eqnarray*}
 %\begin{eqnarray*}
% &&I_3:=|\WW_1((u,\Lambda),(v^+,\Lambda^c)) -\WW_1((u,\Lambda),(\widetilde{u_n},\Lambda^c))|\%\&&=\left|\int_{\Lambda }\int_{\Lambda^c}\frac{|u(z)-v^+(z')|^2-|u(z)-(v^+(z)+\Psi(v^+_n(z)-v^+(z)))| ^2}{|z-z'|^{d+2s}}\right|\\&&^ \le \int_{\Lambda }\int_{\Lambda^c}\frac{2(u(z)-v^+(z))\Psi(v^+_n(z')-v^+(z'))+
% (\Psi(v^+_n(z')-v^+(z'))^2}{|z-z'|^{d+2s}}
 %\end{eqnarray*}
As $\widetilde {u_n}(x)=u^*(x)$ for  $x \in \La^c$ and  ${\rm dist}(x,\Lambda)<1/2,$ the integrand vanishes unless $|z-z'|>1/2.$  For $R$ as in Step 2  we estimate  $I_3$ by splitting $\Lambda^c=(\Lambda^c\cap B_R(0))\cup (\Lambda^c\setminus B_R(0))$
 $$
 I_3\le C(d)|\Lambda|R^d\|u^*-u^{+,K}_n\|_{L^\infty(B_R)}+|\Lambda|C(d)R^{-2s} K.
 $$Choosing first $R$ large and then $n_0$ depending on $R$ and $\epsilon,$ we obtain  that for $n \ge n_0$,  $|I_3|<\epsilon$ and hence, see \eqref {split2jt2}, 
  \begin {equation} \label {ag3} E_2\ge {\mathcal K}_1(u, \Lambda)+\WW ((u,\Lambda),(\widetilde{u_n},\Lambda^c))-\epsilon.  
 \end {equation} 
By definition of $\widetilde{u_n}$
$$\WW ((u,\Lambda),(\widetilde{u_n},\Lambda^c))= 
  \WW ((u,\Lambda),(\widetilde{u_n},\Lambda_n\setminus\Lambda))+\WW_1
 (u,\Lambda),(K,\Lambda_n^c), $$  we therefore obtain
  \begin {equation} \label {ag5}
 E_2\ge {\mathcal K}_1(u, \La)+\WW_1((u,\Lambda),(\widetilde{u_n},\Lambda_n\setminus\Lambda))+\WW
 ((u,\Lambda),(K,\Lambda_n^c))- \epsilon.
\end  {equation}
 
 \noindent
  \emph{ Step 5} 
  By  \eqref {ag2} and \eqref {ag5}
\begin{eqnarray*}
G_1^{K}(\widetilde{u_n},\Lambda_n)&=&
{\mathcal K}_1( u, \Lambda )+\WW ((u,\Lambda),(\widetilde{u_n},\Lambda_n\setminus\Lambda))+\WW(
 (u,\Lambda),(K,\Lambda_n^c))\\&& +{\mathcal K}_1(\widetilde{u_n}, \Lambda_n\setminus\Lambda)+\WW(
 (\widetilde{u_n},\Lambda_n\setminus\Lambda),(K,\Lambda_n^c))\\&\le& E_2+ \eps+
 {\mathcal K}_1(\widetilde{u_n}, \Lambda_n\setminus\Lambda)+\WW(
 (\widetilde{u_n},\Lambda_n\setminus\Lambda),(K,\Lambda_n^c)). 
 \end{eqnarray*} 
 Therefore
 \begin {equation} \label {g1}     E_2 \ge   G_1^{K}(\widetilde{u_n},\Lambda_n) - \eps   -  {\mathcal K}_1(\widetilde{u_n}, \Lambda_n\setminus\Lambda)- \WW(
 (\widetilde{u_n},\Lambda_n\setminus\Lambda),(K,\Lambda_n^c)).  \end  {equation}
By  \eqref {ag6}  if we show 
 that
\begin{eqnarray}\label{est3bjt2}
&&\left| {\mathcal K}_1(\widetilde{u_n}, \Lambda_n\setminus\Lambda) - \KK_1(u_n^{+,K},\Lambda_n\setminus\Lambda)\right|<\eps\\ 
\label{est4bjt2}
&&\left| \WW(
 (\widetilde{u_n},\Lambda_n\setminus\Lambda),(K,\Lambda_n^c)) - \WW ((u_n^{+,K},\Lambda_n\setminus\Lambda),(K,\Lambda_n^c))\right|<\eps,
\end{eqnarray}
then
$$
E_n>-6\epsilon+\delta+G_1^{K}(\widetilde{u_n},\Lambda_n)
$$for $n$ sufficiently large. As $\eps$ was arbitrary and $E_n$ is  minimal  value with $K$-boundary conditions, this means $\delta=0$
and hence $u^*$ is a minimizer under compact perturbations.
Next we prove  \eqref{est3bjt2} and  \eqref{est4bjt2}.  
The \eqref{est3bjt2} follows by applying  Proposition  \ref{convest}   since 
$\widetilde {u_n} -   u_n^{+,K} \to 0$ in $C^{0,\alpha} (\Lambda_n\setminus\Lambda)$ for $\alpha<2s.$    Note that the difference is equal to zero for
${\rm dist}(x,\Lambda)>1$.   
The  \eqref{est4bjt2}  follows by
\begin{eqnarray*}
&&\int_{(\Lambda_n\setminus\Lambda)  \times\Lambda_n^c}  dz dz'\frac{\big|| \widetilde{ u_n} (z)-K|^2-|u_n^{+,K}(z)-K|^2\big|}{|z-z'|^{d+2s}}\le  4 K
\int\limits_{\{{\rm dist}(x,\Lambda)\le 1\}\cap(\Lambda_n\setminus\Lambda) \times B_R(0)^c}  dz dz' \frac{|\widetilde{ u_n} (z)-u_n^{+,K}(z)|}{|z-z'|^{d+2s}}\\ &&\le
|2\Lambda| 4K\|u_n^{+,K}-u^*\|_{L^\infty(2\Lambda)}C(d)\int_R^\infty r^{-2s-1}dr\le
|2\Lambda| 4K \| u_n^{+,K}-u^* \|_{L^\infty(2\Lambda)}C'(d)R^{-2s}
\end{eqnarray*}
where we used that $\widetilde{u_n}=u_n^{+,K} $ for ${\rm dist}(x,\Lambda)>1$ and   $R$ is chosen as large as possible with $B_R(0)\subseteq \Lambda_n.$

    \end{proof} 
  
  Next we   show   that $\|u^{+,K} \|_\infty $   is  bounded uniformly   on $K$.
 
 \begin{lem} \label {Ma1}
Let  $u^{\pm,K }(x,\omega)$ the functions  constructed in Proposition \ref {ag1}, see  \eqref  {ag8}.
 Then 
 {\em uniformly in $K$ }   
  \begin {equation}  \label {eq:bound1}  \|  u^{\pm,K}  (\om)\|_\infty  \le 1+ C_0 \theta \|g\|_\infty  \qquad \Pr- a.s.,\end{equation}    where $C_0$ is the constant in 
  \eqref {V.1}.
  %  where $C^+$ is any constant such that $C^+\ge 1+C_0 \theta \|g\|_\infty.$ 
\end{lem}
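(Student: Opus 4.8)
\emph{Strategy.} The plan is to argue by contradiction, using that $u^{\pm,K}(\cdot,\om)$ is a minimizer under compact perturbations (Proposition \ref{jt2}), is translation covariant (Proposition \ref{ag1}), and that $(\Om,\BB,\Pr,T)$ is ergodic; by \eqref{f25} it suffices to treat $u^{+,K}$. Write $t_0:=1+C_0\theta\|g\|_\infty$ and suppose $\Pr\big(\|u^{+,K}(\cdot,\om)\|_\infty>t_0\big)>0$. Since $u^{+,K}$ is translation covariant, $\om\mapsto\|u^{+,K}(\cdot,\om)\|_\infty$ is $T$-invariant, hence a.s.\ equal to a deterministic constant $M$, which by assumption satisfies $M>t_0$. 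Fix $t_0<t_1<t_2<M$. As $u^{+,K}(\cdot,\om)$ is continuous (Proposition \ref{ag1}), the set $\{x:|u^{+,K}(x,\om)|>t_2\}$ is a.s.\ open and nonempty, hence of positive Lebesgue measure; by translation covariance and the spatial ergodic theorem it has a strictly positive density $p=p(t_2)>0$ a.s. This positive density is what will beat the ``boundary cost'' of the competitor below.

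\emph{The competitor.} For $\rho$ large, compare $u^{+,K}$ on the ball $B_\rho$ of radius $\rho$ with the truncation $(u^{+,K})^{t_1}$. When $s\in(0,\tfrac12)$ one may take $w:=(u^{+,K})^{t_1}$ on $B_\rho$ and $w:=u^{+,K}$ on $B_\rho^c$ directly, since a jump across $\partial B_\rho$ is harmless in $H^s$ for $s<\tfrac12$. When $s\in[\tfrac12,1)$ one must smooth the transition over a shell of width one, exactly as in \eqref{ag2}: $w=(u^{+,K})^{t_1}$ on $B_{\rho-1}$, $w=u^{+,K}$ off $B_\rho$, and $w=\Psi\,u^{+,K}+(1-\Psi)(u^{+,K})^{t_1}$ in between, with a smooth cut-off $\Psi$. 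In both cases $w\in H^s_{loc}(\R^d)\cap L^\infty(\R^d)$ coincides with $u^{+,K}$ off a compact set, so Proposition \ref{jt2} gives $G_1^{u^{+,K}}(u^{+,K},\om,B_\rho)\le G_1^{u^{+,K}}(w,\om,B_\rho)$, and it remains to reach a contradiction.

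\emph{The estimate.} Passing from $u^{+,K}$ to $w$: the Gagliardo part on $B_{\rho-1}\times B_{\rho-1}$ does not increase (truncation never raises the Gagliardo seminorm); the potential-plus-field part on $B_{\rho-1}$ decreases by at least $c_1\int_{B_{\rho-1}\cap\{|u^{+,K}|>t_1\}}(|u^{+,K}|-t_1)$ with $c_1:=C_0^{-1}(t_1-1)-\theta\|g\|_\infty>0$, by the elementary computation of Lemma \ref{A}; and the whole remaining cost — the potential change on the shell, the Gagliardo cross-terms among $B_{\rho-1}$, the shell and $B_\rho^c$, and the $\WW$-terms, where the relevant integrand is bounded using $|w|,|u^{+,K}|\le M$ and $|u^{+,K}(x)|-t_1\le M-t_1$ on the set where $w\neq u^{+,K}$ — is estimated by the scale-splitting of Steps 2--5 of the proof of Proposition \ref{jt2} together with the boundary estimates of Lemma \ref{boun1} and the locally uniform Hölder bounds for $u^{+,K}$ (Proposition \ref{Lip}); it is $O(|B_\rho|^{(d-2s)/d})$ for $s\in(0,\tfrac12)$, $O(|B_\rho|^{(d-1)/d})$ for $s\in(\tfrac12,1)$, and $O(|B_\rho|^{(d-1)/d}\log|B_\rho|)$ for $s=\tfrac12$, hence $o(|B_\rho|)$ in all cases. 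On the other hand the gain is at least $c_1(t_2-t_1)\,|\{|u^{+,K}|>t_2\}\cap B_{\rho-1}|\ge c_1(t_2-t_1)\tfrac p2|B_\rho|$ for $\rho$ large. Thus $G_1^{u^{+,K}}(w,\om,B_\rho)<G_1^{u^{+,K}}(u^{+,K},\om,B_\rho)$ for $\rho$ large, contradicting Proposition \ref{jt2}. Therefore $\|u^{+,K}(\cdot,\om)\|_\infty\le t_0$ a.s.; since $t_0$ is independent of $K$ the bound is uniform in $K$, and \eqref{f25} gives the same conclusion for $u^{-,K}$.

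\emph{Main obstacle.} The delicate point is the uniform control of the transition-layer cost when $s\ge\tfrac12$: one must show that gluing $u^{+,K}$ to its truncation across a shell of bounded width produces only boundary-order ($o(|B_\rho|)$) nonlocal energy, uniformly in $\rho$, in $K$, and in the random position of $B_\rho$. This is exactly the type of estimate already carried out in Lemma \ref{boun1} and in Steps 2--5 of the proof of Proposition \ref{jt2}, so the only genuinely new ingredient is combining it with the positive-density statement furnished by ergodicity.
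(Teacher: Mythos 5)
Your proof is correct and follows essentially the same route as the paper: use translation covariance and ergodicity to obtain a positive spatial density of points where $|u^{+,K}|$ exceeds the threshold, build a competitor that truncates in the bulk and interpolates in a unit shell back to $u^{+,K}$, then beat the $o(|B_\rho|)$ boundary/shell cost (via Lemma \ref{boun1} / Proposition \ref{tec1}) with the $\Theta(|B_\rho|)$ gain from Lemma \ref{A}, contradicting Proposition \ref{jt2}. One small refinement worth noting: by first observing that $\|u^{+,K}(\cdot,\om)\|_\infty$ is $T$-invariant and hence a.s.\ constant, and then interposing two levels $t_0<t_1<t_2<M$, you make precise that the truncation gain $\int_{\{|u^{+,K}|>t_1\}}(|u^{+,K}|-t_1)$ is bounded below by $(t_2-t_1)$ times the measure of $\{|u^{+,K}|>t_2\}$, whereas the paper's line \eqref{ag10} asserts a gain of $c|B^n|$ directly, which strictly speaking needs this kind of two-level (or "choose $C^+$ strictly below the sup") argument; your version closes that small gap cleanly. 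The paper bounds the boundary/shell contribution by the cruder but sufficient $\eps|\Lambda_n|$ of Proposition \ref{tec1}, while you give the sharper $O(|B_\rho|^{(d-1)/d})$-type rates from Lemma \ref{boun1}; either suffices.
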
 
\begin {proof}
 Take   $\Lambda_0= [-\frac 12, \frac 12]^d$    and     $\La_n= (-\frac n2, \frac n2)^d,$  i.e. $|\La_n|=n^d|\La_0|$.   Define for $z\in \Z^d$ and for any $C^+ \ge 1+ C_0 \theta \|g\|_\infty  $
$$
\Lambda_z:=\Lambda_0+z,\quad  B^n(\omega)=  |\{x\in\Lambda_n : |u^{+,K }(x, \omega)|>C^+\}| = \sum_{ z \in \La_n \cap \Z^d}   |\{x\in\Lambda_z : |u^{+,K }(x, \omega)|>C^+\}|.  $$ 
Since 
$u^{+,K}$  is  translation  covariant
$$|\{x\in\Lambda_z : |u^{+,K }(x, \omega)|>C^+\}|  =  |\{x\in\Lambda_0 : |u^{+,K }(x, T_z \omega)|>C^+\}|.  $$
Hence we obtain 
$$ B^n(\omega)=     \sum_{ z \in \La_n \cap \Z^d}  |\{x\in\Lambda_0 : |u^{+,K }(x, T_z \omega)|>C^+\}|. $$
If  $ |\{x\in\Lambda_0 : |u^{+,K }(x, \omega)|>C^+\}| =0,$  $ \Pr-$ almost surely then  $B^n(\omega)=0 $,    $ \Pr-$ a. s.     for all $n$, and  we obtain the claim.
 Suppose that the claim is false.  Assume  that    for some
$\eta>0$ $$\E \left \{ |\{x\in\Lambda_0 : |u^{+,K }(x, \omega)| \right \}=|\Lambda_0|\eta.$$  Therefore by the ergodic theorem 
$
 \frac { B^n(\omega)} {n^d}  \to \eta   
$ almost surely.  Fix an $\omega$ in the set of full measure where this holds, and treat it from now on as parameter. There exists $n_0 $ (depending on
$\omega$), such that for $n\ge n_0$,  $B^n(\omega)>  n^d \eta/2  >0.$ 

Now   define a function 
  \begin {equation} \label {ag9}v(x):=\left\{\begin{array}{ll}C^+ \wedge u^{+,K} \vee (- C^+),&\ {\rm if\ } \{x:\ {\rm dist}(x,\Lambda_n^c)>2\} ,\\ u^{+,K}&  {\rm if\ }    x\in\R^d:\  \{x:\ {\rm dist}(x,\Lambda_n^c)\le 1\}  \cup  \Lambda_n^c \\  \Phi(x) &{\rm else,}
  \end{array}\right.
  \end {equation} 
where $\Phi(x)$ is a smooth interpolation between     $u^{+,K}$ and   $C^+ \wedge u^{+,K} \vee (- C^+).$  
By Lemma \ref{A} there exists a constant $c >0$ { which 
 depends on   $K$, $\theta$, $C_0$ and $\|g\|_\infty$ } such that
 \begin {equation} \label {ag10}  \KK_1(u^{+,K},\Lambda_n) \ge \KK_1(v,\Lambda_n)+c|B^n| > \KK_1(v,\Lambda_n)+ c  \frac \eta 2 n^d . \end {equation} 
Note that $u^{+,K}$ is  minimizer and has therefore higher regularity. The cutting and interpolation procedure retains Holder regularity. (For the cutting, note that it is the application of a Lipschitz function. For the interpolation, note that the cut-off can be chosen smooth, with a uniform bound on the first derivative.) So we have sufficient regularity to apply Lemma \ref{tec1} for $\Lambda_n,$ and  we know that for any given $\epsilon$  there exists  $n_\e$ sufficiently large so that for $n \ge n_\e$ 
 \begin {equation} \label {ag11}\WW((u^{+,K},\Lambda_n), (u^{+,K},\Lambda_n^c))-\WW((v,\Lambda_n), (u^{+,K},\Lambda_n^c))\ge -2\eps|\Lambda_n|.
 \end {equation} 
 From \eqref {ag10} adding and subtracting $\WW((u^{+,K},\Lambda_n), (u^{+,K},\Lambda_n^c))$
 we get
  \[    G_1^{u^{+,K}}(u^{+,K},\Lambda_n) \ge \KK_1(v,\Lambda_n)  +  \WW((u^{+,K},\Lambda_n), (u^{+,K},\Lambda_n^c))  + c \eta/2 (2n)^d . \]
  Taking into account \eqref  {ag11} we obtain
  \[ G_1^{u^{+,K}}(u^{+,K},\Lambda_n) \ge G_1^{u^{+,K}}(v,\Lambda_n)   -2\eps|\Lambda_n|  + c \eta/2 (2n)^d . \]
 Choosing $n$ sufficiently large  we get 
$G_1^{u^{+,K}}(v,\Lambda_n)<G_1^{u^{+,K}}(u^{+,K},\Lambda_n),$ which contradicts the fact that $u^{+,K}$ is a minimizer under compact perturbations.
Note that the proof works for all $C^+ \ge 1+C_0 \theta \|g\|_\infty,$ which proves \eqref{eq:bound1}. 
\end{proof}

\begin {defin} \label {feb23}  {\bf Infinite volume  states}   Let       $K \in \R$,  $K\ge   1 + C_0 \theta \|g\|_\infty$  and $u^{\pm,K} (\cdot, \om)$    the functions constructed  in  Proposition   \ref {ag1}.
 We define  the infinite volume states  $v^{\pm} (\cdot, \om)$ be
the  following   point-wise   limit:
\begin {equation}     \label {ag12}    \lim_{K \to \infty} u^{\pm, K} =v^{\pm}, \qquad \Pr-a.s.   \end {equation}
The limit is well defined since  $\|u^{\pm,K} (\cdot, \om)\|_\infty \le  1+ C_0 \theta \|g\|_\infty $   and the sequence $ \{u^{+,K} (\cdot, \om)\}_K$ is increasing ($\{u^{-,K} (\cdot, \om)\}_K$ decreasing) in $K$.
\end {defin} 
   In the next lemma  we show  that the  $v^{\pm}$ inherit    the  regularity  of $u^{\pm,K}$ and that convergence in \eqref  {ag12}  holds in a stronger norm. 
  
   \begin{lem} \label {maye3}
   Let       $K \in \R$,  $K\ge   1 + C_0 \theta \|g\|_\infty$  and $u^{\pm,K} (\cdot, \om)$    the functions constructed  in  Proposition   \ref {ag1}.  Then  $v^{\pm} (\cdot, \om)$  defined in \eqref {ag12}   are in $C^{0,\alpha}_{loc} (\R^d) $ for any $\alpha<2s$ for $2s\le 1,$ and 
   $C^{1,\alpha}_{loc}(\R^d)$ for any $\alpha<2s-1$ for $2s> 1$.
    Further  for any   $\La \Subset \R^d$ the convergence in \eqref {ag12}  holds  in $C^{0,\beta}(\La)$,  $ \beta <\alpha< 2s$ when $s \in (0, \frac 12 ],$  and in   $C^{1,\beta}(\La)$,  $ \beta <\alpha< 2s-1$ when $s \in (\frac 12,1)$.
 \end{lem}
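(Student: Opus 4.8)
The plan is to turn the uniform $L^\infty$ bound of Lemma \ref{Ma1} into a uniform interior regularity bound and then invoke compactness. First, recall that $\|u^{\pm,K}(\cdot,\om)\|_\infty\le M:=1+C_0\theta\|g\|_\infty$ for every $K\ge 1+C_0\theta\|g\|_\infty$, $\Pr$-almost surely, with $M$ independent of $K$. By Proposition \ref{jt2} each $u^{\pm,K}(\cdot,\om)$ is a minimizer under compact perturbations, hence solves the Euler--Lagrange problem \eqref{EL.1} in every bounded subdomain with inhomogeneity $-\tfrac12 W'(u^{\pm,K})$, which, since $W\in C^2$, is bounded by a constant depending only on $M$. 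Thus Proposition \ref{Lip} applies on any $\Lambda\Subset\R^d$ and gives, exactly as in the proof of Proposition \ref{ag1}, a bound $\|u^{\pm,K}\|_{C^{0,\alpha}(\Lambda)}\le C$ for every $\alpha<2s$ when $2s\le 1$ (respectively $\|u^{\pm,K}\|_{C^{1,\alpha}(\Lambda)}\le C$ for every $\alpha<2s-1$ when $2s>1$), where $C=C(\Lambda,M,d,s,\alpha)$ is \emph{independent of $K$}, because $K$ enters only through the $L^\infty$ bound $M$.

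Since Hölder (and $C^1$-Hölder) seminorms are lower semicontinuous under pointwise convergence and $u^{\pm,K}\to v^\pm$ pointwise by Definition \ref{feb23}, the previous bound passes to the limit: $\|v^\pm\|_{C^{0,\alpha}(\Lambda)}\le C$ (resp. $\|v^\pm\|_{C^{1,\alpha}(\Lambda)}\le C$) for all $\Lambda\Subset\R^d$ and all admissible $\alpha$. This yields $v^\pm\in C^{0,\alpha}_{loc}(\R^d)$ when $2s\le 1$ and $v^\pm\in C^{1,\alpha}_{loc}(\R^d)$ when $2s>1$, for the stated ranges of exponents.

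Finally, to upgrade \eqref{ag12} to convergence in the stronger norm, fix $\beta<\alpha<2s$ (resp. $\beta<\alpha<2s-1$). The embedding $C^{0,\alpha}(\Lambda)\hookrightarrow C^{0,\beta}(\Lambda)$ is compact (cf. Remark \ref{ra2}), and likewise $C^{1,\alpha}(\Lambda)\hookrightarrow C^{1,\beta}(\Lambda)$; hence the family $\{u^{\pm,K}\}_K$, bounded in $C^{0,\alpha}(\Lambda)$ (resp. $C^{1,\alpha}(\Lambda)$) by the first step, is precompact in $C^{0,\beta}(\Lambda)$ (resp. $C^{1,\beta}(\Lambda)$). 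Any sequence $K_j\to\infty$ then has a subsequence converging in that norm, and by the already established pointwise convergence its limit must be $v^\pm$; since the limit is the same along every subsequence, the whole family converges, $u^{\pm,K}\to v^\pm$ in $C^{0,\beta}(\Lambda)$ (resp. $C^{1,\beta}(\Lambda)$) as $K\to\infty$. The argument is essentially soft; the only point that requires attention is the uniformity in $K$ of the regularity estimate in the first step, which rests precisely on the fact that the constant furnished by Proposition \ref{Lip} depends on the boundary data only through its $L^\infty$ norm — the same uniformity (there in the variable $n$) already used in the proof of Proposition \ref{ag1}.
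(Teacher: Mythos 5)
Your proof is correct and follows essentially the same path as the paper's: uniform Hölder (resp.\ $C^{1,\alpha}$) bounds on the family $\{u^{\pm,K}\}_K$, compactness via the compact Hölder embeddings, and identification of the limit through the already-known pointwise convergence. You are slightly more explicit than the paper about \emph{why} the regularity estimate is uniform in $K$ — tracing it through Lemma~\ref{Ma1} (the $K$-independent $L^\infty$ bound) and Proposition~\ref{Lip} (whose constant depends on the boundary datum only through its sup-norm) — which is a genuine clarification, since the paper just cites Proposition~\ref{ag1} for the regularity without dwelling on the $K$-uniformity. One small imprecision: ``lower semicontinuity of the $C^{1,\alpha}$ seminorm under pointwise convergence'' is not quite correct as stated, because pointwise convergence of the functions does not by itself control the derivatives; the rigorous route is the one the paper (and your last step) actually uses — extract a subsequence converging in $C^{1,\beta}$ via the compact embedding, which forces derivative convergence, and then pass the $C^{1,\alpha}$ bound to the limit. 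Since you invoke this compactness anyway, the argument is sound; you could simply replace the lower-semicontinuity remark for the $C^{1,\alpha}$ case by the compactness argument to avoid the gap.
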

  \begin {proof}   
 By  Proposition  \ref {ag1}  $\{u^{\pm,K} (\cdot, \om)\}_K $ is  bounded and in  $C^{0,\alpha}_{loc}$ for any $\alpha<2s$ for $2s\le 1,$ and 
   $C^{1,\alpha}_{loc}$ for any $\alpha<2s-1$ for $2s> 1.$    
 This implies that subsequences converge
locally uniformly to an Holder function of order  $\alpha<2s$    when  $2s\le 1$  and when  $2s> 1$ to a function  in $C^{1,\alpha}_{loc}$  for $\alpha<2s-1$.
  As the entire sequence converges point-wise, the
limit of any subsequence must coincide with  $v^\pm$ , which is therefore a locally Holder continuous function of order  $\alpha<2s$    when  $2s\le 1$  or  when  $2s> 1$   a function  in $C^{1,\alpha}_{loc}$  with $\alpha<2s-1$. 
  From this and the compact embedding of Holder spaces,  see    Remark  \ref {ra2},  we deduce that   $u^{\pm,K} $ converge to $v^\pm $  on any compact set $\La$   in $C^{0,\beta} (\La)$, $ \beta <\alpha<2s$ when  $2s\le 1$   and on  $C^{1,\beta} (\La)$, $ \beta <\alpha$ when  $\alpha<2s-1$. 
 \end {proof} 
 
 The following Lemma states that point-wise limits of minimizers under compact perturbations are minimizers under compact perturbations. As we could not find an appropriate result in the literature, we prove it here in the form needed for this paper. 
  \begin{lem} \label {maye2}
 Let $\Psi^k:\R^d \to\R$ be a family of uniformly bounded (in $L^\infty$) minimizers under compact perturbations  of $G_1$, see Definition \eqref {min0}. 
  Assume that $\{\Psi^k\}$ converges   point-wise to a function $\Psi:\R^d\to \R.$ Then $\Psi$ is a minimizer  of $G_1$ under compact perturbations.
 \end{lem}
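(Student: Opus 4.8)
The plan is to first upgrade the pointwise convergence to a local H\"older convergence, then to reduce the minimality of $\Psi$ on an arbitrary compact subdomain $U$ to the same statement on a slightly larger domain $U_1$, and finally to test the minimality of each $\Psi^k$ on $U_1$ against a competitor built from $v$ and $\Psi^k$, letting $k\to\infty$.

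Since the $\Psi^k$ are uniformly bounded minimizers under compact perturbations, they solve the Euler--Lagrange equation and the regularity estimates of Proposition~\ref{Lip} apply uniformly in $k$: the family $\{\Psi^k\}$ is bounded in $C^{0,\alpha}_{loc}(\R^d)$ for every $\alpha<2s$ when $2s\le 1$ (respectively in $C^{1,\alpha}_{loc}$ when $2s>1$). By Arzel\`a--Ascoli together with the pointwise convergence, $\Psi^k\to\Psi$ locally uniformly, $\Psi$ inherits the same regularity, and in particular $\Psi\in H^s_{loc}(\R^d)\cap L^\infty(\R^d)$ with $\|\Psi\|_\infty\le M:=\sup_k\|\Psi^k\|_\infty<\infty$. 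By Lemma~\ref{boun1} this already gives $G_1(\Psi,\om,U)<\infty$ for every compact subdomain $U$, which is the first requirement in Definition~\ref{min0}(1). Fix from now on an $\om$ in the full-measure set where all the above holds.

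For the minimality, fix a compact subdomain $U$ and a function $v$ coinciding with $\Psi$ on $\R^d\setminus U$; we may assume $v\in H^s_{loc}\cap L^\infty$, otherwise $G_1(v,\om,U)=+\infty$ and there is nothing to prove. Choose a bounded Lipschitz domain $U_1$ (e.g. a large ball) with $\overline U\subset U_1$ and $\dist(\overline U,\partial U_1)>1$, and set $V:=U_1\setminus\overline U$. From the definition of $G_1$ one checks that for any admissible $w$ and disjoint $A,B$
\[
G_1(w,\om,A\cup B)=G_1(w,\om,A)+G_1(w,\om,B)-2\int_A\!\int_B\frac{|w(x)-w(y)|^2}{|x-y|^{d+2s}}\,dx\,dy ,
\]
so taking $A=U$, $B=V$ gives $G_1(w,\om,U_1)=G_1(w,\om,U)+\rho(w;\om)$ with
\[
\rho(w;\om):=\int_V\!\int_V\frac{|w(x)-w(y)|^2}{|x-y|^{d+2s}}\,dx\,dy+2\int_V\!\int_{U_1^c}\frac{|w(x)-w(y)|^2}{|x-y|^{d+2s}}\,dx\,dy+\int_V W(w)\,dx-\theta\int_V g_1 w\,dx ,
\]
which depends only on the restriction $w|_{U^c}$. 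Since $v=\Psi$ on $U^c$, we get $\rho(v;\om)=\rho(\Psi;\om)$, and this quantity is finite by Lemma~\ref{boun1}; hence it suffices to prove $G_1(\Psi,\om,U_1)\le G_1(v,\om,U_1)$. For this, pick a smooth $\chi:\R^d\to[0,1]$ with $\chi\equiv 0$ on $\{\dist(\cdot,U)\le 1/2\}$ and $\chi\equiv 1$ on $\{\dist(\cdot,U)\ge 1\}$, and define $v^k:=v$ on $U$, $v^k:=\Psi^k$ on $U_1^c$, and $v^k:=v+\chi(\Psi^k-\Psi)=\Psi+\chi(\Psi^k-\Psi)$ on $V$ (recall $v=\Psi$ on $V$). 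Then $v^k-v$ is locally H\"older (it vanishes on a neighbourhood of $\partial U$ and matches continuously across $\partial U_1$), so $v^k\in H^s_{loc}\cap L^\infty$ with $\|v^k\|_\infty\le\max\{\|v\|_\infty,M\}$, $v^k$ coincides with $\Psi^k$ on $U_1^c$, and $v^k\to v$ locally uniformly. Since $\Psi^k$ is a minimizer under compact perturbations, $G_1(\Psi^k,\om,U_1)\le G_1(v^k,\om,U_1)$; letting $k\to\infty$ and combining with the identity above yields $G_1(\Psi,\om,U_1)\le G_1(v,\om,U_1)$, hence $G_1(\Psi,\om,U)\le G_1(v,\om,U)$, and since $U$ and $v$ were arbitrary $\Psi$ is a minimizer under compact perturbations.

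The main obstacle is precisely the passage to the limit $k\to\infty$ in $G_1(\Psi^k,\om,U_1)\le G_1(v^k,\om,U_1)$: this is not a soft lower-semicontinuity statement but a genuine continuity of $G_1(\cdot,\om,U_1)$ along the sequences $\Psi^k$ and $v^k$. For the local parts $\KK_1(\cdot,\om,U_1)$ one uses Proposition~\ref{convest} on $U_1$ (the convergence $\Psi^k\to\Psi$, respectively $v^k\to v$, holds in $C^{0,\beta}(U_1)$ for some $\beta\in(s,2s)$ — for $v^k$ because $v^k\equiv v$ on $U$ and $v^k-v\to 0$ in $C^{0,\beta}(V)$) together with dominated convergence for the potential and the linear term; the singularity of the Gagliardo kernel near $\partial U$ is harmless because $v^k\equiv v$ on a neighbourhood of $\partial U$ inside $U_1$. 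For the exterior interaction terms $\WW(\cdot,U_1)$ one argues as in Steps~2--3 of Proposition~\ref{jt2} and Lemma~\ref{tec1}: splitting $U_1^c=(U_1^c\cap B_R)\cup(U_1^c\setminus B_R)$, one uses local uniform convergence on $B_R$ and bounds the tail by $CM^2|U_1|R^{-2s}$ uniformly in $k$ via the integrability of the kernel at infinity, then sends first $R\to\infty$ and then $k\to\infty$.
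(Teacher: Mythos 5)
Your proof is correct and rests on the same core ingredients as the paper's: the uniform local H\"older bounds from Proposition \ref{Lip} to upgrade pointwise to locally uniform convergence, a boundary-layer interpolation producing a competitor that agrees with $\Psi^k$ outside an enlarged domain, minimality of $\Psi^k$ there, and Proposition \ref{convest} plus the integrable tail of the kernel to pass to the limit. The one genuinely different move is organizational and nice: rather than the paper's argument by contradiction with the auxiliary energy $B_k = \KK_1(\Psi^k,\Lambda_1\setminus\Lambda)+\WW((\Psi^k,\Lambda_1\setminus\Lambda),(\Psi^k,\Lambda_1^c))$ that must be shown to cancel up to $\eps$ (Steps 2--5 of the paper's proof), you isolate the exact additivity identity $G_1(w,\om,U_1)=G_1(w,\om,U)+\rho(w|_{U^c})$ and observe that $\rho(v)=\rho(\Psi)$ because the competitor agrees with $\Psi$ on $U^c$; this reduces minimality on $U$ to minimality on $U_1$ with no loss, and then you only need continuity of $G_1(\cdot,\om,U_1)$ along $\Psi^k\to\Psi$ and $v^k\to v$. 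Your $\rho$ is in fact exactly the paper's $B_k$ evaluated on the limiting functions, so the two proofs are doing the same bookkeeping, but the identity makes the cancellation exact and the argument direct, avoiding the $\delta$--$\eps$ juggling. One small remark: your $v^k$ need not be H\"older across $\partial U$ (since $v$ is only in $H^s_{loc}\cap L^\infty$), but your argument correctly works with $v^k-v$, which vanishes near $\partial U$ and is H\"older where it is nonzero, so the limit passage is sound.
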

  \begin {proof}  In the following      $\omega $ is a parameter, so we avoid to write it explicitly. 
 We show the lemma by contradiction.
    Assume  that $\Psi$ is not a minimizer under compact perturbation.
     Then  there exists a compact set  (which we may assume to be a cube)  $ \Lambda$  and a measurable  function $u$ so that
 $ G_1^{\Psi}(u, \om, \La)<  G_1^{\Psi}(\Psi, \om, \La)$.    
 Denote  $\Lambda_1= \Lambda \cup \{x \in \R^d: dist (x, \La)\le 2\}$
  $$E_1:= G_1^{\Psi}(\Psi, \om, \La), \qquad E_2:=G_1^{\Psi}(u, \om, \La), \qquad E_k:=G_1^{\Psi^k}(\Psi^{k},\om, \Lambda_1). $$ 
By  assumption
  there exists a $\delta>0$ such that $E_2+\delta<E_1.$ The aim is to construct a function $\widetilde{\Psi^k}$, for some   $k$ large enough,  such that  if  $E_2+\delta<E_1 $ then 
  $G_1^{\Psi^k}(\widetilde{\Psi^k},\Lambda_1)<E_k $,  which  gives   a contradiction, since $ \Psi^k$ is by assumption a minimizer under compact perturbations.   The proof is similar to the one in  Proposition \ref {jt2}. 
   
 \noindent 
  \emph{ Step 1}:  By \eqref {funct11}
  \begin{eqnarray}\label{split1} E_1\!\!&=&\!\!\KK_1(\Psi,\Lambda)+\WW((\Psi,\Lambda),(\Psi,\Lambda^c))   \\
  \label{split2} E_2\!\!&=&\!\!\KK_1(u,\Lambda)+\WW ((u,\Lambda),(\Psi,\Lambda^c))  \\
  \label{split3} E_k\!\!&=&\!\!\KK_1(\Psi^k,\Lambda_1)+\WW((\Psi^k,\Lambda_1),(\Psi^k,\Lambda_1 ^c))  \end{eqnarray}
  We write $E_k$ as
$$E_k  =    \KK_1(\Psi^k,\Lambda)+ \WW((\Psi^k,\Lambda),(\Psi^k,\Lambda^c)) +  B_k, $$ 
 where  $$B_k= \KK_1(\Psi^k,\Lambda_1\setminus \La)  + 
  \WW((\Psi^k,\Lambda_1\setminus \La),(\Psi^k,\Lambda_1 ^c)). 
    $$ 
    \vskip0.5cm 
  \noindent
  \emph{ Step 2:}  Next we show that for any $\eps>0$ there exists  $k_\eps$ s.t. for $ k \ge k_\e$
   \begin{eqnarray}
  \label{est1b} |\KK_1(\Psi^k,\Lambda)-{\KK_1}(\Psi,\Lambda)|<\eps,\\
  \label{est2b}
A\equiv  | \WW ((\Psi,\Lambda),(\Psi,\Lambda^c))- \WW ((\Psi^k,\Lambda),(\Psi^k,\Lambda^c))|<\eps.
  \end{eqnarray}
  The   \eqref{est1b}  follows immediately from    Proposition \ref{convest} with $D=\Lambda$,
  the regularity property of the minimizers, see  Lemma \ref {maye3}. 
      For  \eqref {est2b},  fix   $R>0$ so that  $\Lambda \subset B_{R/2}(0)$.
    We  upper bound $A$ in \eqref {est2b} as following:
  \begin{eqnarray*}
 &&  A \le |I_1|+|I_2|, \\
 &&I_1=\int_{\Lambda}\int_{B_R(0) \setminus\Lambda}\frac{|\Psi(z)-\Psi(z')|^2-|\Psi^k(z)-\Psi^k (z')|^2}{|z-z'|^{d+2s}}dz dz',\\
 &&I_2=\int_{\Lambda}\int_{\Lambda^c\setminus B_R(0)}\frac{|\Psi(z)-\Psi(z')|^2-|\Psi^k(z)-\Psi^k(z')|^2}{|z-z'|^{d+2s}}dz dz'.
 \end{eqnarray*}
 $I_1$ is estimated (in a very rough way) by Proposition  \ref{convest} with $D=B_R$. 
For $I_2$,   since  $|\Psi|\le C^+,\ |\Psi^k |\le C^+$   
  we  have
  $$
  |I_2|\le\int_{\Lambda}\int_{\R^d\setminus B_R(0)}\frac{8C^+}{|z-z'|^{d+2s}}dz dz'\le 8C^+C(d)|\Lambda|\int_{R/2}^\infty r^{-2s-1}\le C^+
  |\Lambda|C'(d)R^{-2s}.
  $$
Here we used the  integrability of the kernel at infinity. In conclusion, by choosing first $R$ sufficiently large, depending on $\eps,$ and then $k_\eps$ large
depending on $R$ we obtain  \eqref{est1b} and \eqref{est2b}  for all $k\ge k_\eps.$
 \vskip0.5cm 
\noindent
\emph{ Step 3:} By  \eqref{est1b} and  \eqref{est2b} for  $k$ sufficiently large
   \begin {equation} \label {ag6b}    E_k > E_1-2\eps +B_k  > E_2 + \delta -2\eps +B_k. 
 \end {equation}
 \emph{ Step 4}    Define   a function  $\widetilde{ \Psi^k}$ which is equal to $u$ in $\Lambda$ and equal to $\Psi^k $ outside a boundary layer of width 1 of    $\Lambda $. 
   \begin {equation} \label {ag2b}\widetilde { \Psi^k}(x):=\left\{\begin{array}{ll}u(x),&\ {\rm if\ } x\in \Lambda, \\  \Psi^k (x)&  {\rm if\ }    x\in\R^d:\ {\rm dist}(x,\Lambda)>1 \\  \Psi (x)+\Phi(x)(\Psi^k (x)-\Psi (x)) & {  \rm else} 
  \end{array}\right.
  \end {equation} 
   where 
  $\Phi:\R^d\to [0,1]$ is a   smooth cut-off function nondecreasing in ${\rm dist}(x, \Lambda)$  with $\Phi(x)=0$ if ${\rm dist}(x,\Lambda)<1/2$ and 
$\Phi(x)=1$ if ${\rm dist}(x,\Lambda)>1.$  
      Then 
 \begin{eqnarray*}
 &&I_3:=|\WW ((u,\Lambda),( \Psi,\Lambda^c)) -\WW ((u,\Lambda),(\widetilde{\Psi^k},\Lambda^c))|\\&&=\left|\int_{\Lambda } {\rm d} z \int_{\Lambda^c} {\rm d} z' \frac{|u(z)- \Psi(z')|^2-|u(z)-\widetilde{\Psi^k} (z') | ^2}{|z-z'|^{d+2s}}\right|\\&& =\left |\int_{\Lambda } {\rm d} z\int_{\Lambda^c}  {\rm d} z' \frac{2    u(z) [ \widetilde{\Psi^k}(z')-\Psi (z')] +   [ \widetilde{\Psi^k}(z')- \Psi(z')]   [ \widetilde{\Psi^k}(z')+ \Psi(z')] 
}{|z-z'|^{d+2s}}\right |  \\&&=
\left |\int_{\Lambda } {\rm d} z \int_{\Lambda^c}  {\rm d} z'  \1_{|z-z'|>1/2} |\frac{2    u(z) [ \widetilde{\Psi^k}(z')-\Psi (z')] +   [ \widetilde{\Psi^k}(z')- \Psi(z')]   [ \widetilde{\Psi^k}(z')+ \Psi(z')] 
}{|z-z'|^{d+2s}}\right |. 
\end{eqnarray*}
The last equality holds since  $\widetilde{\Psi^k}(x)= \Psi(x)$ for $x \in \Lambda^c$ and  ${\rm dist}(x,\Lambda)<1/2,$ therefore the  integrand vanishes unless $|z-z'|>1/2.$  
Take   $R$ so large  that $ \La \subset B_{\frac R 2} (0)$   and split $\Lambda^c=(\Lambda^c\cap B_R(0))\cup (\Lambda^c\setminus B_R(0))$. We obtain 
 $$
 I_3\le C(d)|\Lambda|R^d\| \Psi- \Psi^k \|_{ L^\infty(B_R)}+|\Lambda|C(d, \theta, C_0, \|g\|_\infty) R^{-2s}.
 $$  For any $\e$ take   $ R_0(\e)$  so that for $R \ge R_0(\e)$     $|\Lambda|C(d, \theta, C_0, \|g\|_\infty) R^{-2s}. \le \frac \e 2$,       then take $K_0  $ depending on   $\epsilon,$ so that  that for $K \ge K_0$,  $|I_3|<\epsilon$ and hence
  \begin {equation} \label {ag3b} E_2 = {\mathcal K}_1(u, \Lambda)+\WW ((u,\Lambda),( \Psi,\Lambda^c)) \ge {\mathcal K}_1(u, \Lambda)+\WW ((u,\Lambda),(\widetilde{\Psi^k},\Lambda^c))-\epsilon.  
 \end {equation} 
By definition of $\widetilde{ \Psi^k}$
$$\WW ((u,\Lambda),(\widetilde{\Psi^k},\Lambda^c))= 
 \big[\WW ((u,\Lambda),(\widetilde{\Psi^k},\Lambda_1 \setminus\Lambda))+\WW_1
 (u,\Lambda),(\Psi^k, \Lambda_1^c)\big], $$  we therefore obtain
  \begin {equation} \label {ag5b}
 E_2\ge {\mathcal K}_1(u, \La)+ \big[\WW ((u,\Lambda),(\widetilde{\Psi^k},\Lambda_1 \setminus\Lambda))+\WW_1
 (u,\Lambda),(\Psi^k, (\Lambda_1)^c)\big] - \epsilon.
\end  {equation}
\emph{ Step 5}
  By  \eqref {ag2b} and \eqref {ag5b}
\begin{equation}  \label {mae1} \begin {split}  &
G_1^{\Psi^k}(\widetilde{\Psi^k},\Lambda_1) = 
{\mathcal K}_1( u, \Lambda )+\WW ((u,\Lambda),(\widetilde{\Psi^k},\Lambda_1\setminus\Lambda))+\WW(
 (u,\Lambda),(\Psi^k, \Lambda_1^c))\\&  +{\mathcal K}_1(\widetilde{\Psi^k}, \Lambda_1\setminus\Lambda)+\WW(
 (\widetilde{\Psi^k},\Lambda_1\setminus\Lambda),(\Psi^k,\Lambda_1^c))\\&\le  E_2+ \eps+
 {\mathcal K}_1(\widetilde{\Psi^k}, \Lambda_1  \setminus\Lambda)+\WW(
 (\widetilde{\Psi^k}, \Lambda_1\setminus\Lambda),(\Psi^k, \Lambda_1^c)). 
\end {split} \end{equation} 
Next we   show that for any $\e>0$ there exists $k_\e$ so that for $k \ge k_\e$
\begin{eqnarray}\label {est3b}
&&\left| {\mathcal K}_1(\widetilde{\Psi^k}, \Lambda_1\setminus\Lambda) - \KK_1(\Psi^k,  \Lambda_1\setminus\Lambda)\right|<\eps\\ 
\label{est4b}
&&\left| \WW(
 (\widetilde{\Psi^k},\Lambda_1\setminus\Lambda),(\Psi^k, \Lambda_1^c)) - \WW ((\Psi^k, \Lambda_1\setminus\Lambda),(\Psi^k, \Lambda_1^c))\right|<\eps.
\end{eqnarray}
Assuming that  \eqref {est3b} and  \eqref {est4b} hold, we obtain 
from  \eqref  {ag6b}  and \eqref {mae1}  that 
$$
E_k>  E_2 + \delta -2\eps +B_k \ge  -4\epsilon+\delta+G_1^{\Psi^k}(\widetilde{\Psi^k},\Lambda_1)
$$for $k$ sufficiently large.   As $\eps$ was arbitrary   and $E_k$ is  minimal  value with $\Psi^k$-boundary conditions,   $\delta=0$
and hence $\Psi$ is a minimizer under compact perturbations.

To prove \eqref{est3b}, we notice that 
    $\widetilde {\Psi^k} (x)=  \Psi(x)+\Phi(x)(\Psi^k(x)- \Psi(x))   =\Psi^k(x)+(1-\Phi(x))( \Psi(x)-\Psi^k(x)) $   and $\Phi (x)=1$ when ${\rm dist}(x, \La) \ge 1$   and  $\| \widetilde{\Psi^k}-\Psi^k \|_{C^{0, \beta}  (\La_1 \setminus \La)} \to 0 $
 for $\beta <\alpha <2s$ when $ s \in (0, \frac 12]$ and $\|\widetilde{\Psi^k}-\Psi^k \|_{C^{1, \beta} (\La_1 \setminus \La)} \to 0$ 
for $\beta <\alpha$ when $ s \in  ( \frac 12,1)$. Therefore by Proposition \ref {convest} for $k$ large enough 
$$ \left|{\mathcal K}_1(\widetilde{\Psi^k}, \Lambda_1\setminus\Lambda) - \KK_1(\Psi^k,  \Lambda_1\setminus\Lambda)\right| \le \e
  $$
 Note that the difference is equal to zero for
${\rm dist}(x,\Lambda)>1.$    
Next we  prove  \eqref{est4b}. We have 
\begin{eqnarray*}
&&\int_{(\Lambda_1\setminus\Lambda) \times\Lambda_1^c}\frac{\big||\widetilde {\Psi^k} (z)-\Psi^k(z')|^2-|{\Psi^k}(z)-\Psi^k (z')|^2\big|}{|z-z'|^{d+2s}}  \\&&=
\int_{(\Lambda_1\setminus\Lambda) \times\Lambda_1^c} \1_{\{ dist (z, \La) \le 1\}} \frac{\big||\widetilde {\Psi^k} (z)-\Psi^k(z')|^2-|{\Psi^k}(z)-\Psi^k (z')|^2\big|}{|z-z'|^{d+2s}} 
 \\&& \le  C 
 \int_{(\Lambda_1\setminus\Lambda) \times\Lambda_1^c}  \1_{\{ dist (z, \La) \le 1\}}  \frac{ |\Psi(z)-\Psi^k(z)|}{|z-z'|^{d+2s}}\\ &&\le C
|\Lambda|^{\frac {d-1} d}  \|\Psi -\Psi^k \|_{L^\infty(\Lambda_1)} \int_1^\infty r^{-2s-1}dr\le
 C|\Lambda|^{\frac {d-1} d}  \|\Psi -\Psi^k \|_{L^\infty(\Lambda_1)} \le \e
\end{eqnarray*}
if $k \ge k_\e$. 
 
 \end{proof} 
 Now we can prove the main theorem:

{\bf  Proof of Theorem \ref {infvol}}  Let $v^\pm$  be the infinite volume  states defined  in \eqref {ag12}. The existence and the first  three  properties of $v^{\pm}$ are established in  Proposition  \ref {ag1}
for $ u^{\pm, K}$ and they are   inherited by the limit.  
 Lemma \ref  {Ma1} establishes the  $L^\infty$ bound for $ u^{\pm, K}$ which  is inherited by the limit as well. 
  The proof  that $v^{\pm}$ are minimizers under compact perturbation   is done in Lemma \ref {maye2}.   Next we  prove  \eqref {M1a}.  We have 
\begin {equation}  \label {M2} \begin {split} 
 & \int_{\Lambda_n}v^\pm(x, \omega) {\rm d} x=  \sum_{z \in \Lambda_n \cap \Z^d} \int_{\{ z+ [-\frac 12, \frac 12]^d\}} v^\pm(x, \omega) {\rm d} x \cr &
 = \sum_{z \in \Lambda_n \cap \Z^d} \int_{   [-\frac 12, \frac 12]^d} v^\pm(T_z x, \omega) {\rm d} x = \sum_{z \in \Lambda_n \cap \Z^d} \int_{   [-\frac 12, \frac 12]^d} v^\pm(x, T_{-z}\omega) {\rm d} x. 
\end {split} \end {equation} 
Since $|v^\pm(x,  \omega)| \le (1+C_0\theta \|g\|_\infty) $,  by the  Birkhoff's ergodic theorem, see for example  \cite{GK},   we have $\Pr-$ a.s 
\begin {equation}  \label {M3aa} \begin {split} 
 & \lim\frac 1 {n^d}\int_{\Lambda_n}v^\pm(x, \omega) {\rm d} x =  \lim \frac 1 {n^d} \sum_{z \in \Lambda_n \cap \Z^d}  
\int_{   [-\frac 12, \frac 12]^d}     v^\pm(x, T_{-z}\omega)  {\rm d} x \cr &
= \E \left [  \int_{   [-\frac 12, \frac 12]^d}   v^\pm(x, \cdot )  {\rm d} x \right ] = m^{\pm}.
 \end {split} \end {equation}

It remains to show \eqref{diseq1}. Let $  \bar w_n$ be as in the statement of the theorem and
 fix  $x\in\Lambda_n.$ Denote $K=  \max \{ \|\bar v_0\|_\infty, (1+C_0\theta \|g\|_\infty) \}$. 
Let $u_n^{\pm,K}$ the $K-$ maximal and the $K-$ minimal minimizer of $G_1$ in $\Lambda_n$,
see Definition \ref {exmin}. 
By Lemma \ref{FGK}
we get that $u_n^{-,K}(x,\omega)\le \bar w_n(x,\omega)
\le u_n^{+,K}(x,\omega)$  for $x \in \R^d$. Then, by \eqref {ag8},  uniformly for any   compact set of $\R^d$ containing $x$ we have 
$$ v^{-}(x,\omega) \le   u^{-,K}(x,\omega)  \le    \liminf_n \bar w_n(x,\omega) \le  \limsup_n \bar w_n(x,\omega) \le u^{+,K}(x,\omega)  \le v^{+}(x,\omega).$$
The first and last inequality hold since   
 $ \{u^{+,K}\}_K$ is increasing ($\{u^{-,K}\}_K$ decreasing) in $K$.
The  \eqref{diseq1} follows. 
\qed

In the next Lemma we  bound  uniformly in   $ \om$ the   difference   between the energy of the two extrema macroscopic minimizers  $v^{\pm} $.

 \begin{lem}\label {A1}  Let     $ \Lambda \Subset \R^d$, cube-like,   $v^{\pm} $ be  the     infinite volume   states  constructed in Theorem \ref {infvol}. 
 There exists a positive  constant $C$   depending on $\theta$, $d$, $s$,  $C_0$ and $\|g\|_\infty$,
  so that $\Pr-$ a.s.
    \begin{equation} \label{m3}\left |  G_1 (v^+ , \om,\La) - G_1 (v^-, \om, \La)\right | \le  \left \{ \begin {split} & C | \Lambda|^{\frac  {d-2s} d},   \quad s \in (0, \frac 12),
      \cr &   C | \Lambda|^{\frac  {d-1} d} ,  \quad \quad s \in (\frac 12, 1), \cr &  
   C   |\La|^{\frac {d-1} d}  \log | \Lambda| , \qquad s = \frac 12 . \end {split} \right.
\end{equation}    
 \end{lem}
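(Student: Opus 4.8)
The plan is to establish the two one-sided bounds $\pm\big(G_1(v^+,\om,\La)-G_1(v^-,\om,\La)\big)\le C\,\Theta(\La)$, where $\Theta(\La)$ denotes the right-hand side of \eqref{m3}; the two are proved identically, so I only treat $G_1(v^+,\om,\La)-G_1(v^-,\om,\La)\le C\,\Theta(\La)$. Fix a cut-off $\chi\in C^\infty(\R^d)$ with $0\le\chi\le1$, $\chi\equiv1$ on $\{x\in\La:\ d_{\partial\La}(x)>2\}$, $\chi\equiv0$ on $\{d_{\partial\La}(x)<1\}\cup\La^c$, and $\|\nabla\chi\|_\infty\le C$, and set $w:=v^++\chi\,(v^--v^+)$. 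Then $w\in H^s_{loc}(\R^d)\cap L^\infty(\R^d)$ with $v^-\le w\le v^+$, $w\equiv v^+$ on $\La^c$ and on $\{d_{\partial\La}<1\}$, $w\equiv v^-$ on $\{d_{\partial\La}>2\}$, and $\phi:=w-v^-=(1-\chi)(v^+-v^-)$ is supported in the shell $S:=\{x\in\La:\ d_{\partial\La}(x)\le2\}$, which for cube-like $\La$ satisfies $|S|\le C\,{\mathcal H}^{d-1}(\partial\La)\le C|\La|^{(d-1)/d}\le C\,\Theta(\La)$. Since $v^+$ is a minimizer under compact perturbations (Theorem \ref{infvol}, Definition \ref{min0}(1)) and $w$ agrees with $v^+$ outside $\La$,
\[
G_1(v^+,\om,\La)=G_1^{v^+}(v^+,\om,\La)\le G_1^{v^+}(w,\om,\La)=\KK_1(w,\om,\La)+\WW\big((w,\La),(v^+,\La^c)\big).
\]

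First I would bound the exterior term using $|w(x)-v^+(y)|^2\le2|v^+(x)-v^+(y)|^2+2\chi(x)^2|v^+(x)-v^-(x)|^2$. The first summand contributes $2\,\WW(v^+,\La)\le C\,\Theta(\La)$ by Lemma \ref{boun1} (in the Hölder form, legitimate since $v^+\in C^{0,\alpha}_{loc}$ for some $\alpha>s-\tfrac12$ when $s\ge\tfrac12$); the second is supported in $\{d_{\partial\La}(x)\ge1\}$, where $|x-y|\ge d_{\partial\La}(x)$ for $y\in\La^c$, hence is $\le C\int_{\{d_{\partial\La}(x)\ge1\}}d_{\partial\La}(x)^{-2s}\,dx\le C\,\Theta(\La)$, exactly the computation in the proof of Lemma \ref{boun1}. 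So $\WW((w,\La),(v^+,\La^c))\le C\,\Theta(\La)$. This is the reason $w$ must already equal $v^+$ in a neighbourhood of $\partial\La$ and not merely on $\La^c$: otherwise the short-range part of this interaction would be infinite.

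Next I would show $\KK_1(w,\om,\La)\le\KK_1(v^-,\om,\La)+C\,\Theta(\La)$. Since $\phi$ vanishes off $S$, the potential and random parts of the difference, $\int_S\big(W(w(x))-W(v^-(x))\big)\,dx$ and $-\theta\int_S g_1\,\phi\,dx$, are bounded by $C|S|\le C\,\Theta(\La)$ using $\|v^\pm\|_\infty\le1+C_0\theta\|g\|_\infty$, $\|g\|_\infty=A$, and $W\in C^2$. For the Gagliardo part I expand pointwise
\[
|w(x)-w(y)|^2-|v^-(x)-v^-(y)|^2=2\big(v^-(x)-v^-(y)\big)\big(\phi(x)-\phi(y)\big)+|\phi(x)-\phi(y)|^2=:t_1+t_2.
\]
The term $\int_\La\!\int_\La t_2/|x-y|^{d+2s}=[\phi]_{H^s(\La)}^2$ is $\le C|S|$: $\phi$ is bounded, Hölder of some order $\beta>s$ (a smooth interpolation of the Hölder functions $v^\pm$), and supported in $S$, so splitting the integral into $|x-y|<1$ and $|x-y|\ge1$ confines the outer integral to $S$ and leaves a convergent inner integral. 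For the cross term I would \emph{not} apply Cauchy--Schwarz on $\La\times\La$ (which would lose a factor $|\La|^{1/2}$ and give only the useless $|\La|^{(2d-1)/(2d)}$); instead, since $\phi(x)-\phi(y)=0$ unless $x$ or $y$ lies in $S$, I symmetrize and apply Cauchy--Schwarz on $S\times\La$:
\[
\left|\int_\La\!\int_\La\frac{t_1}{|x-y|^{d+2s}}\right|\le4\left(\int_S\!\int_\La\frac{|v^-(x)-v^-(y)|^2}{|x-y|^{d+2s}}\right)^{1/2}\left(\int_S\!\int_\La\frac{|\phi(x)-\phi(y)|^2}{|x-y|^{d+2s}}\right)^{1/2}.
\]
Both inner double integrals are $\le C|S|$ by the same splitting (the $x$-variable being confined to $S$), so the cross term is $\le C|S|\le C\,\Theta(\La)$.

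Adding these bounds and using $\WW(v^-,\La)\ge0$ one gets $G_1(v^+,\om,\La)\le\KK_1(v^-,\om,\La)+C\,\Theta(\La)\le G_1(v^-,\om,\La)+C\,\Theta(\La)$; the reverse inequality is obtained verbatim with the competitor $\tilde w:=v^-+\chi(v^+-v^-)$, the minimality of $v^-$, and $\WW(v^+,\La)\ge0$. All the estimates hold simultaneously for every cube-like $\La$ on the full-measure set where $v^\pm$ have the properties of Theorem \ref{infvol}, giving \eqref{m3} with $C=C(d,s,\theta,C_0,\|g\|_\infty)$. The single delicate point is the cross term $t_1$: restricting the Cauchy--Schwarz inequality to $S\times\La$ is what reduces its bound from $|\La|^{(2d-1)/(2d)}$ to the surface order $|S|$; all remaining terms are of surface order and are handled exactly as in Lemma \ref{boun1}.
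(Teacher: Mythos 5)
Your proof is correct and takes essentially the same route as the paper's: both arguments build a smooth cut-off interpolant between $v^+$ and $v^-$ supported in a boundary shell of $\Lambda$, invoke the fact that one of the extreme states is a minimizer under compact perturbations, and reduce everything to surface-order error terms estimated via the uniform Hölder regularity of $v^\pm$ and the computation of Lemma~\ref{boun1}. (The paper interpolates from $v^-$ on $\La^c$ to $v^+$ inside and uses minimality of $v^-$; you do the mirror version with $v^+$, which is equivalent.) The only genuine point of variation is in handling the Gagliardo part of the difference: the paper collects the kinetic error on $\partial_\La\times\partial_\La$ and bounds $\big|(\tilde u(x)-\tilde u(y))^2-(v^+(x)-v^+(y))^2\big|\le C|x-y|^{2\alpha}$ directly from Hölder regularity (so the cross term is never isolated), whereas you split off $[\phi]_{H^s}^2$ and control the cross term $2(v^--v^-)(\phi-\phi)$ by Cauchy--Schwarz restricted to $S\times\La$. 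Both reductions yield the same $O(|S|)$ bound, so this is an organizational choice rather than a different mechanism; your remark that a naive Cauchy--Schwarz on $\La\times\La$ would lose a volume factor is apt and is implicitly why the paper also confines the kinetic error to the shell.
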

 \begin {proof} 
 Let the cut-off function 
  $\Psi: \R^d \to \R$ be a smooth  nondecreasing  function in $\dist (x, \La^c)$  with $\Psi(x)=1$ if $\dist (x, \La^c) \ge 1 $ and 
$\Psi(x)=0$ if $\dist (x, \La^c)=0.$  
   Set 
 \begin{equation} \label{gt1} \tilde u:=\Psi   v^+  + 
\left(1-\Psi\ \right) v^-. 
\end{equation}   The function  $\tilde u$ is  equal to $v^-$  when $x \in  \Lambda^c$ and     equal to $ v^+ $  when $x \in \La$,  $\dist (x, \La^c) >1$  and interpolates in a smooth way  between these values.
Since $v^-$ is the  minimal $-$ minimizer in $\La$ we have
    \begin{equation}  \label{s1}   
     G_1(v^{-} , \om, \La)  \le   G_1^{v^-}(\tilde u , \om, \La). \end{equation}  
    We will show that 
    \begin{equation}  \label{s1g}  G_1^{v^-}(\tilde u , \om, \La) \le   G_1 (  v^{+} , \om, \La)  + M(s)  \end{equation}  
    where we denote shortly by $M(s)$ the right hand side of \eqref {m3}. 
    Therefore  from \eqref {s1}
    \begin{equation} \label{g1c}  G_1 (v^{-} , \om, \La) -    G_1 (  v^{+} , \om, \La) \le  M(s).
     \end{equation} 
    In a similar way we can  show that
   \begin{equation} \label{g2} G_1 (  v^{+} , \om, \La)- G_1(v^{-} , \om, \La)   \le  M(s).
     \end{equation} 
    Then,  from  \eqref {g1c} and \eqref {g2} we  get \eqref {m3}. 
    Next we show \eqref {s1g}. 
     By definition
  \begin{equation}  \label{s1b}   
        G_1^{v^-}(\tilde u , \om, \La)  =  \KK_1(\tilde u , \om, \La)  +  \WW((\tilde u, \La) (v^-, \La^c)). \end{equation} 
        Denote by $$ \partial_{\La}= \{ x \in \La: \dist(x, \La^c)\le 1 \}. $$
By definition of $\tilde u$, see \eqref {gt1},  we have 
       \begin{equation} \label{s2}     \KK_1 (\tilde u , \om, \La)      =
        \KK_1 (v^+ , \om, \La \setminus   \partial_{\La})  +  \KK_1 (  \tilde u , \om,     \partial_{\La})  + \WW (  ( v^+ , \La \setminus     \partial_{\La}),  (\tilde  u,     \partial_{\La}) ).  \end{equation}    
     By adding and subtracting   $\KK_1 (  v^+ , \om,     \partial_{\La})$ and  the interaction term   $ \WW (  ( v^+ , \La \setminus   \partial_{\La} ),  ( v^+,     \partial_{\La}) )$ we  get 
         \begin{equation}  \label{s3} \begin {split}  \KK_1 (\tilde u , \om, \La) &=
 \KK_1 (v^+, \om, \La) +  \left [ \KK_1 (  \tilde u , \om,    \partial_{\La}) - \KK_1 (  v^+ , \om,    \partial_{\La})\right]  \cr & +  \left [ \WW (  ( v^+ , \La \setminus     \partial_{\La} ), ( \tilde u,      \partial_{\La}) )-  \WW (  ( v^+ , \La \setminus    \partial_{\La} ), (v^+,     \partial_{\La}) )\right]. 
\end {split}  \end{equation}    
For the second term of \eqref {s1b}    we add and subtract  $ \WW( (v^+, \La), (v^+, \La^c))$  obtaining 
\begin{equation}  \label{s4}     \WW((\tilde u, \La) (v^-, \La^c))   =  \WW( (v^+, \La), (v^+, \La^c)) + \left [  \WW((\tilde u,\La),(v^-, \La^c))-   \WW( (v^+, \La), (v^+, \La^c))  \right ]. \end{equation}   
Taking into  account
  \eqref {s1b}, \eqref {s3}  \eqref {s4} we get that
   \begin{equation}  \label{s6a}   G_1^{v^-}(\tilde u , \om, \La)  =  G_1^{v^+}(v^+ , \om, \La) + \RR_1 +    \RR_2 +\RR_3\end{equation}    
where    
  \begin{equation}  \label{s6} \begin {split}   \RR_1&=
   \left [ \KK_1 (  \tilde u , \om,   \partial_{\La}) - \KK_1 (  v^+ , \om,      \partial_{\La} )\right],  \cr  
   \RR_2  &=   \left [ \WW (  ( v^+ , \La \setminus     \partial_{\La} ), ( \tilde u,       \partial_{\La}) )-  \WW (  ( v^+ , \La \setminus   \partial_{\La} ), (v^+,      \partial_{\La}) )\right], \cr  
\RR_3&=\left [  \WW((\tilde u,\La),(v^-, \La^c))-   \WW( (v^+, \La), (v^+, \La^c))  \right ].
 \end {split}  \end{equation}     
    Since  $ \RR_2$ and $ \RR_3$ are difference of positive terms and     $\tilde u$, $v^-$ and  $v^+$ are smooth enough  we can apply  \eqref {t3c} of Lemma \ref {boun1} to each single term   obtaining 
  $$ \left | \RR_2 \right|  \le  M(s), \qquad  \left |  \RR_3 \right | \le M(s). $$ 
 
Next we  estimate  $ \RR_1$.
   We have
       \begin{equation}    \label{s8}\begin {split}   & \left | \RR_1  \right |   \le 
  \int_{  \partial_{\La} }  dx \int_{  \partial_{\La} } dy \frac {  \left | 
 ( \tilde u(x) - \tilde u(y) )^2  -   (  v^+(x) - v^+(y) )^2  \right |}  { |x-y|^{d+2s}} \cr &   + \int_{  \partial_{\La}} 
 \left | W(\tilde u(x))  -   
 W(v^+(x))  \right |  \rm {d }x 
+   \theta  \int_{   \partial_{\La} }   \left | g_1 (x,\omega) \left [  \tilde u(x)- v^+(x)  \right ]  \right | \rm {d }x \cr  & \le 
 \int_{   \partial_{\La} }  dx \int_{  \partial_{\La} } dy \frac {  \left | 
 ( \tilde u(x) - \tilde u(y) )^2  -   (  v^+(x) - v^+(y) )^2  \right | }  { |x-y|^{d+2s}} \cr &   + 
C(C_0,\theta, \|g\|_\infty) |\La|^{\frac {d-1} d}   
\end {split}  \end{equation}   
where $C(C_0,\theta, \|g\|_\infty)$ is a constant which depends only on $\theta$, the bound on the random field, see \eqref {eq:ass}  and the interaction $W$.
We need some care to estimate the integral term in \eqref {s8} since the integral  might be singular.
We exploit the regularity of the minimizers.  Recall   that  for   $s \in (0, \frac 12] $,  $v^+ \in C^{0,\alpha}_{loc} (\R^d)$  for $\alpha  <2s$  and 
for  $s \in (\frac 12,1)  $,  $v^+ \in C^{1,\alpha} _{loc} (\R^d) $  for $\alpha  <2s -1$.  The same regularity holds by construction for $ \tilde u$. 
Therefore     
   \begin{equation}    \label{s8g}\begin {split} &
 \int_{   \partial_{\La} }  \int_{   \partial_{\La} } \frac {  \left [ 
 ( \tilde u(x) - \tilde u(y) )^2  -   (  v^+(x) - v^+(y) )^2  \right ] }  { |x-y|^{d+2s}}  \cr & \le  \left \{ \begin {split} &
  2 C \int_{    \partial_{\La} }  \int_{   \partial_{\La} } \frac1  { |x-y|^{d+2s -2\alpha}} \quad   s \in (0, \frac 12]  \cr &
   2 C \int_{   \partial_{\La} }  \int_{    \partial_{\La}} \frac1  { |x-y|^{d+2s -2}}  \quad     s \in ( \frac 12, 1). \end {split} \right. 
  \end {split}  \end{equation}  
  We have that when $ s \in (0, \frac 12]$,   $2s -2\alpha  <0 $    
  and  when $ s \in (\frac 12,1)$,   $2s -2 <0 $. 
  Therefore both terms on the right hand side of \eqref {s8g} are integrable and bounded by 
  $ C | \Lambda|^{\frac  {d-1} d} $.

    \end {proof} 

\vskip0.5cm \noindent 
The  quantity  defined  next plays a fundamental  role. 
  \begin{defin}  \label{def1}{\qquad } 
 \begin{enumerate}
 \item For a cube  $\Lambda\subseteq \R^n$ we define ${\mathcal B}_\Lambda$ as the $\sigma$-algebra generated by the random field in $\Lambda.$
 \item  
Let  $v^\pm (\om)$ be  the infinite volume states constructed before.    We define
  \begin{equation} \label{m4}   F_n (\om):=    \E \left [  \left \{G_1(v^+(\cdot), \cdot, \La_n) - G_1(v^-(\cdot), \cdot, \La_n) \right \} | \BB_{\La_n} \right ]. 
 \end{equation}   
   \end{enumerate}
 \end {defin} 
 \vskip0.5cm \noindent 
 \begin {rem}  By definition $F_n (\cdot)$ is $\BB_{\La_n}$ measurable and by the symmetry  assumption on the random field  $\{g(z,\cdot), z \in \Z^d\}$ 
 \begin{equation} \label{m9} \E\left [  F_n (\cdot)\right ] =0. \end{equation} 
 Namely   $ v^+ (x, \om) = -v^- (x, -\om) $ for   $x \in \R^d$.  This implies that 
  \begin{equation} \label{m8a}    G_1(v^+(\om), \om, \La_n)= G_1(v^-(-\om), -\om, \La_n) 
  \end{equation}
 and by  the symmetry of the random field  we get  \eqref {m9}. 
\end {rem}
Next we want to  quantify  how   $v^\pm (\om)$ changes  when  the random field is modified  only in one site, for example at the site $i$.   We introduce the following notation:  
$$  \om^{(i)}: \om^{(i)}  (z)= \om (z) \quad z \neq i, \qquad   \om= (\om(i),  \om^{(i)})   \quad i,z \in \Z^d. $$
The $v^+(\cdot, (\om(0), \om^{(0)}))$ is then  the state $v^+$  when  the 
random field  at the origin is   $\om(0)$,  and   $v^+(\cdot, (\om(0)-h, \om^{(0)}))$   the state $v^+$ when the 
random field at the origin is   $\om(0)-h$, and the same definition is used for the infinite volume  state $v^-(\cdot, (\cdot, \om^{(0)}))$ and  for  the finite volume minimizers      $v^\pm_n(\cdot,  (\cdot, \om^{(0)}))$.

Now we are able to state the
following lemma:  

\begin{lem}   \label {A2}   For    $ \La \Subset  \R^d$, $ 0 \in \La$,      $h>0$ we have 
 \begin{equation} \label{LL.4}  \begin {split}
  \theta h\int_{Q (0)}v^+(\om(0), \om^{(0)}) {\rm d} x  & \ge
 G_1(v^+(\om (0)-h,\om^{(0)}),(\om(0)-h, \om^{(0)}), \La)- G_1(v^+(\om(0),\om^{(0)}),(\om (0), \om^{(0)}), \La) \cr & \ge
 \theta h \int_{Q(0)}v^+(\om (0)-h, \om^{(0)} )  {\rm d} x  \end {split}
 \end{equation}
where   $Q(0)=[-1/2,1/2]^d$.
The same  inequalities hold for   $v^-$. 
\end{lem}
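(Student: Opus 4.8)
The plan is to record how $G_1$ depends on the single coordinate $\om(0)$ of the disorder, reduce \eqref{LL.4} to two one-sided inequalities, and obtain those by comparison with the finite-volume maximal minimizers $u^{\pm,K}_n$, which — unlike $v^\pm$ — carry a deterministic boundary datum; the infinite-volume statement is then reached by a limiting procedure.

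First I would set $\om=(\om(0),\om^{(0)})$, $\tilde\om=(\om(0)-h,\om^{(0)})$, $u=v^+(\om)$, $\tilde u=v^+(\tilde\om)$, and observe that, since $0\in\La$ (so $Q(0)\subseteq\La$), replacing $\om(0)$ by $\om(0)-h$ changes $g_1(\cdot,\cdot)$ only on $Q(0)$, there by the additive constant $-h$; hence, for every $v\in H^s_{loc}(\R^d)\cap L^\infty(\R^d)$,
\[
G_1(v,\tilde\om,\La)=G_1(v,\om,\La)+\theta h\int_{Q(0)}v(x)\,{\rm d}x .
\]
Using this identity with $v=u$ and with $v=\tilde u$, one checks that \eqref{LL.4} is equivalent to the two one-sided estimates
\[
G_1(\tilde u,\tilde\om,\La)\le G_1(u,\tilde\om,\La)\qquad\text{and}\qquad G_1(\tilde u,\om,\La)\ge G_1(u,\om,\La),
\]
that is: for the $\tilde\om$-functional $\tilde u$ costs no more than $u$ on $\La$, and for the $\om$-functional $u$ costs no more than $\tilde u$ on $\La$.

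These two inequalities are immediate at finite volume. Fix $K\ge 1+C_0\theta\|g\|_\infty$ and $\La\subseteq\La_n$, and let $a_n:=u^{+,K}_n(\om)$, $b_n:=u^{+,K}_n(\tilde\om)$ be the $K$-maximal minimizers of $G_1(\cdot,\om,\La_n)$, $G_1(\cdot,\tilde\om,\La_n)$ given by Proposition \ref{ag1}; both equal the constant $K$ on $\La_n^c$, a boundary datum independent of the disorder. Since $b_n$ minimizes $G_1(\cdot,\tilde\om,\La_n)$ among all competitors equal to $K$ outside $\La_n$ and $a_n$ is admissible, $G_1(b_n,\tilde\om,\La_n)\le G_1(a_n,\tilde\om,\La_n)$; symmetrically $G_1(b_n,\om,\La_n)\ge G_1(a_n,\om,\La_n)$. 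Combined with the elementary identity written for $\La_n$, this gives \eqref{LL.4} with $(v^\pm,\La)$ replaced by $(u^{\pm,K}_n,\La_n)$. To reach the claim for $v^\pm$ on a fixed $\La$ I would then let $n\to\infty$ and $K\to\infty$: by Proposition \ref{ag1} and Lemma \ref{maye3}, $a_n\to u^{+,K}(\om)\to v^+(\om)$ and $b_n\to u^{+,K}(\tilde\om)\to v^+(\tilde\om)$ locally uniformly and with the regularity of minimizers, while Lemma \ref{Ma1} provides a uniform $L^\infty$-bound, so $\int_{Q(0)}a_n\to\int_{Q(0)}v^+(\om)$ and likewise for $b_n$. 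The assertion for $v^-$ follows either in the same way or from $v^+(\cdot,-\om)=-v^-(\cdot,\om)$, cf. \eqref{c1}.

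The step I expect to be the real obstacle is the passage to the limit in the energy \emph{difference} $G_1(b_n,\tilde\om,\La_n)-G_1(a_n,\om,\La_n)$: the two energies individually diverge like $|\La_n|$, and, for a fixed $\La\subsetneq\La_n$, the contributions of the annulus $\La_n\setminus\La$ to the two terms do not cancel trivially, since $a_n$ and $b_n$ differ there by a nonzero amount. To handle this I would keep $\La$ fixed and let the boundary recede, using the splitting $G_1(v,\cdot,\La_n)=G_1(v,\cdot,\La)+\KK_1(v,\cdot,\La_n\setminus\La)+\WW((v,\La_n\setminus\La),(v,\La_n^c))$ and estimating the far-field contributions exactly as in Steps 2--3 of the proof of Proposition \ref{jt2}: using the integrability of the kernel $|z-z'|^{-d-2s}$ at infinity, the uniform $L^\infty$-bounds, the interior H\"older/$C^{1,\alpha}$ regularity (Proposition \ref{convest}, Lemma \ref{boun1}), the monotonicity $\tilde u\le u$ (a consequence, for $h>0$, of Lemma \ref{FGK0}, Lemma \ref{FGK} and the maximality of $u^{+,K}_n$), and the decay at infinity of $u-\tilde u$ — the difference of two minimizers forced by a compactly supported change of the random field — one shows that all the terms supported far from $\La$ contribute $o(1)$ to the difference as the box exhausts $\R^d$, which upgrades the finite-volume identity to \eqref{LL.4}.
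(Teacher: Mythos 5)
Your proposal follows essentially the same route as the paper's proof: isolate the explicit change in $G_1$ produced by shifting $\om(0)$ by $-h$ (supported on $Q(0)$), reduce \eqref{LL.4} to two one-sided comparison inequalities, prove them at finite volume via the $K$-maximal minimizers $u^{+,K}_n$ on $\La_n$ (whose deterministic boundary datum $K$ makes the minimality argument transparent), and pass to the limit first in $n$, then in $K$, using the monotonicity from Lemma \ref{FGK}, the interior regularity from Proposition \ref{Lip}, the uniform $L^\infty$-bound, and dominated convergence.

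The one organisational difference is where the energies in the intermediate sandwich are taken. The paper writes its decomposition \eqref{v3b} with all three energies over the \emph{fixed} compact $\La$ while invoking the minimality of $u^{+,K}_n(\tilde\om)$ on the larger $\La_n$; with that choice, the passage to the limit in the middle term is a routine consequence of the locally uniform convergence of $u^{+,K}_n$ and the integrability of the kernel at infinity (precisely the convergence the paper records via Proposition \ref{Lip} and dominated convergence), so the delicate point sits in the minimality step, where the two competitors have different data on $\La_n\setminus\La$. You instead compare energies over $\La_n$, which makes the minimality step entirely unambiguous — both competitors equal $K$ on $\La_n^c$ — but then, as you correctly flag, the burden shifts to the passage from the $\La_n$-energy difference to the $\La$-energy difference, a nontrivial cancellation over the annulus $\La_n\setminus\La$. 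So the same real subtlety is present in both arguments, just located at a different step; your reading makes it visible rather than implicit, and your proposed remedy (far-field estimates via the splitting, kernel decay, the $L^\infty$-bound, interior regularity, and the monotone ordering $u^{+,K}_n(\tilde\om)\le u^{+,K}_n(\om)$) is exactly the toolkit the paper deploys elsewhere (Proposition \ref{jt2}, Lemma \ref{maye2}).
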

 \begin {proof} Let $ \La_n$ be  a cube centered at the origin  so that $ \La \subset \La_n$, $ K \ge (1+C_0\theta \|g\|_\infty)$.   Let  $v^+_n  = v^{+,K} $ be the  $K-$ maximal minimizer of $G_1$ in $\Lambda_n$  see Definition \ref {exmin}.  
 Remark  that $v^+_n $   is measurable with respect to  the random field $g (z,\om)$,   $ z \in \La_n \cap \Z^d$. 
We have
\begin{equation} \begin{split}\label{v3b}  &
   G_1(v^+_n(\om (0),\om^{(0)}),(\om(0), \om^{(0)}), \La)
-G_1(v^+_n(\om (0)-h,\om^{(0)}),(\om(0)-h, \om^{(0)}), \La) \cr &=
  G_1(v^+_n(\om (0),\om^{(0)}),(\om(0), \om^{(0)}), \La)-   G_1(v^+_n(\om (0),\om^{(0)}),(\om (0)-h, \om^{(0)}), \La)\cr &+
  G_1(v^+_n(\om(0),\om^{(0)}),(\om(0)-h, \om^{(0)}), \La)- G_1(v^+_n(\om(0)-h,\om^{(0)}),(\om(0)-h, \om^{(0)}), \La).  
 \end {split}  \end{equation}
By explicit computation,  see \eqref{functional2}, we have  that
$$ G_1(v^+_n(\om (0),\om^{(0)}),(\om (0), \om^{(0)}), \La)-   G_1(v^+_n(\om (0),\om^{(0)}),(\om(0)-h, \om^{(0)}), \La)= -  h\theta  \int_{Q(0)}v^+_n (\om (0), \om^{(0)} ) dx. $$
The  last line in \eqref{v3b}    is nonnegative, because $v^+_n(\om(0)-h,\om^{(0)})$
is a minimizer of $G_1$ in $\La_n$  when  the random field  is $(\om(0)-h, \om^{(0)} )$. Therefore 
% $$ G_1(v^+_n(\sigma,\om^{(0)}),(-\sigma, \om^{(0)}), \La)- G_1(v^+_n(-\sigma,\om^{(0)}),(-\sigma, \om^{(0)}), \La) \ge %0 .$$ 
%So we have 
 $$G_1(v^+_n(\om(0)-h,\om^{(0)}),(\om(0)-h, \om^{(0)}), \La) - G_1(v^+_n(\om (0),\om^{(0)}),(\om (0), \om^{(0)}), \La)\le
    h \theta  \int_{Q(0)}v^+_n (\om (0), \om^{(0)} ) dx .$$
By splitting   
   \begin{equation*} \begin{split}  &
   G_1(v^+_n(\om (0),\om^{(0)}),(\om (0), \om^{(0)}), \La)
-G_1(v^+_n(\om (0)-h,\om^{(0)}),(\om (0)-h, \om^{(0)}), \La) \cr &=
  G_1(v^+_n(\om (0),\om^{(0)}),(\om (0), \om^{(0)}), \La)-   G_1(v^+_n(\om (0)-h,\om^{(0)}),(\om (0), \om^{(0)}), \La)\cr &+
  G_1(v^+_n(\om (0)-h,\om^{(0)}),(\om(0), \om^{(0)}), \La)- G_1(v^+_n(\om (0)-h,\om^{(0)}),(\om (0)-h, \om^{(0)}), \La)
 \end {split}  \end{equation*} we obtain in a similar way
 $$G_1(v^+_n(\om (0)-h,\om^{(0)}),(\om (0)-h, \om^{(0)}), \La) - G_1(v^+_n(\om(0),\om^{(0)}),(\om (0), \om^{(0)}), \La)\ge
   h \theta  \int_{Q (0)}v^+_n (\om(0)-h, \om^{(0)} ) dx .$$
   To pass to the limit 
note  that the cube $Q (0)$ remains fixed. 
Denote by $M$
the smallest integer such that $\Lambda\subseteq B_M(0)$, where $ B_M(0)$ is a ball centered at the origin of radius $M$.   
   
By the smoothness of the minimizers, see Proposition  \ref {Lip}  $  v^+_n \in C^{0, \alpha} (B_M(0)) $     with $ \alpha< 2s$ when $2s<1$  and in $ C^{1, \alpha}(B_M(0)) $, $ \alpha < 1-2s$  when $ s \in [\frac 12, 1)$.  Further the sequence  $\{ v^+_n\}_n $ uniformly converges to $v^{+,K}$ in $B_M(0)$ and  $|v^{+,K}| \le 1+C_0 \theta  \|g\|_\infty$
    uniformly  in $n$ and $K$.   By
Lebesgue's  Theorem on dominated convergence, 
we may pass to the limit under the integral as $n \to \infty$.  By Definition \ref {feb23}  $\{v^{+,K}\}_K $ point-wise converges to $v^+$ when $K  \to \infty$ then  applying again the Lebesgue's  Theorem on dominated convergence we pass to the limit as $K\to \infty$  and the claim is shown.
The  corresponding statement for
$v^-$ is  proved in the same way.
 \end{proof}  
 \begin{rem} \label {R1}  From  Lemma \ref {A2} we have that
 $$ \om(0)\mapsto \int_{Q(0)}v^+(\om(0), \om^{(0)}) {\rm d} x % \ge \int_{Q_1(0)}v^+(-1, \om^{(0)}) {\rm d} x. 
 $$ is nondecreasing.
% This means that (in the continuous case)
% $$ \frac {\partial } {\partial {g(0)} } ( \int_{Q_1(0)}v^+(g(0), \om^{(0)}) {\rm d} x)  >0 $$

 \end{rem}

 %$$
 %\frac{\partial G_1(v, \om)}{\partial g(x)}=\frac{\partial G_1(v, \om)}%{\partial g(x)}{|(u^*(\om),\om)}+
% \underbrace{\frac{\partial G_1(v, \om)}{\partial v (x)}{|(u^*(\om),%\om)}}_{=0}\frac{\partial u(\om)(x)}{\partial g(x)}
%$$where the last term  is zero as $u^*$ is a minimizer.
%Since 
%$$ \frac{\partial G_1(v, \om)}{\partial g(x)}{|(u^*(\om),\om)} = \theta  %u^*(x,\om) $$
%for any minimizer $u^*(x,\om)$ we have that
 %\begin{equation} \label{m90}    \frac{\partial  F_\e (\om)}{\partial g(x)} %= \theta [  u^+(x,\om)- u^-(x,\om) ] 
 %\end{equation} 
\begin{cor}\label{A2b} Let  $\om(i)$ be the random field in the site $i$ which has  probability distribution absolutely continuous w.r.t the Lebesgue measure. 
We have that  $G_1(v^+(\om),\om, \La)$ is ${\mathbb P}$-a.e. differentiable w.r.t to $\om(i)$ and 
$$
\frac {\partial G_1(v^\pm(\om),\om, \La) } {\partial {\om (i)} } = -\theta \int_{Q(i)}v^\pm(x,\om) {\rm d} x.
$$
\end{cor}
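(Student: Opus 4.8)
I would prove this by freezing all coordinates of the random field except $\om(i)$ and studying the resulting function of one real variable, then returning to $\Pr$ by a Fubini argument. By translation covariance of $v^{\pm}$ and translation invariance of $\Pr$ it is enough to treat the site $i=0$ (equivalently, the analogue of Lemma \ref{A2} at site $i$ holds verbatim, since only the single cube $Q(i)$ enters the identity $G_1(v,(\om(i),\om^{(i)}),\La)-G_1(v,(\om(i)-h,\om^{(i)}),\La)=-h\theta\int_{Q(i)}v$). So fix $\om^{(0)}$ and set
\[
\phi(t):=G_1\big(v^{+}((t,\om^{(0)})),(t,\om^{(0)}),\La\big),\qquad
\psi(t):=\int_{Q(0)}v^{+}\big(x,(t,\om^{(0)})\big)\,{\rm d}x .
\]
By Lemma \ref{boun1} and the uniform bound $\|v^{+}\|_\infty\le 1+C_0\theta\|g\|_\infty$, $\phi$ is a finite real number for each $t$ and $\psi$ is bounded uniformly in $t$ and $\om^{(0)}$; by Remark \ref{R1} the function $\psi$ is nondecreasing, hence continuous at every $t$ outside a countable set $N=N(\om^{(0)})$.

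Next I would rewrite Lemma \ref{A2} twice. Applied with ``$\om(0)$''$=t$ and increment $h>0$ it reads $\theta h\,\psi(t-h)\le\phi(t-h)-\phi(t)\le\theta h\,\psi(t)$, i.e.
\[
-\theta\,\psi(t)\ \le\ \frac{\phi(t)-\phi(t-h)}{h}\ \le\ -\theta\,\psi(t-h),
\]
the left difference quotient at $t$; applied with ``$\om(0)$''$=t+h$ and the same $h$ it reads $\theta h\,\psi(t)\le\phi(t)-\phi(t+h)\le\theta h\,\psi(t+h)$, i.e.
\[
-\theta\,\psi(t+h)\ \le\ \frac{\phi(t+h)-\phi(t)}{h}\ \le\ -\theta\,\psi(t),
\]
the right difference quotient. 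For $t\notin N$ one has $\psi(t\pm h)\to\psi(t)$ as $h\downarrow 0$, so both squeezes force
\[
\phi'(t)=-\theta\,\psi(t)=-\theta\int_{Q(0)}v^{+}\big(x,(t,\om^{(0)})\big)\,{\rm d}x .
\]
Thus, for every fixed $\om^{(0)}$, the map $t\mapsto\phi(t)$ is differentiable at Lebesgue-almost every $t$ with the asserted derivative.

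Finally I would pass back to $\Pr$. Since the marginal law of $\om(0)$ is absolutely continuous with respect to Lebesgue measure, the countable set $N(\om^{(0)})$ is a null set for the law of $\om(0)$, for every fixed $\om^{(0)}$; by the product structure of $\Pr$ and Fubini's theorem, the set of $\om$ with $\om(0)\in N(\om^{(0)})$ has $\Pr$-measure zero. Hence, $\Pr$-a.e., $G_1(v^{+}(\om),\om,\La)$ is differentiable in $\om(0)$ with $\partial G_1(v^{+}(\om),\om,\La)/\partial\om(0)=-\theta\int_{Q(0)}v^{+}(x,\om)\,{\rm d}x$, and translating back gives the statement for general $i$. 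The statement for $v^{-}$ follows in exactly the same way from the $v^{-}$ version of Lemma \ref{A2} and Remark \ref{R1}. The only point needing care is the passage from the two one-sided squeezes (valid for every $h>0$) to genuine two-sided differentiability, which rests precisely on the monotonicity of $\psi$ supplied by Remark \ref{R1}, and the Fubini step, where the absolute-continuity hypothesis on the single-site law is exactly what is used.
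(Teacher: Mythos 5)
Your proof is correct and follows essentially the same route as the paper's: apply Lemma \ref{A2} with $\om(0)$ and $\om(0)\pm h$ to sandwich the one-sided difference quotients, use the monotonicity from Remark \ref{R1} to get continuity of $\psi$ off a Lebesgue-null set, and then invoke absolute continuity of the single-site marginal (plus the product structure and Fubini) to upgrade to $\Pr$-a.e. You have simply written out, with the $\phi$/$\psi$ notation and the explicit two-sided squeeze, what the paper's terse two-sentence proof compresses.
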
 
\begin {proof}It is sufficient to consider the case $i=0.$
By applying Lemma \ref{A2} for $\om(0)$ and $\tilde \om(0)=\om(0)+h$ we see that left and right derivatives
exist and are equal if
$
s\mapsto \int_{Q(0)}v^+(s, \om^{(0)}) {\rm d} x 
$ is continuous at $s=\om(0).$ By Remark \ref{R1} this happens for Lebesgue almost all $s,$ hence by
the assumptions on the random field ${\mathbb P} $-a.e.
\end {proof}
\begin {rem}  \label {AC1} When the distribution of $g$ is not absolutely continuous with respect to Lebesgue measure   Corollary \ref {A2b} does not hold.  We still can show Lemma \ref {A2}  but  we can only estimate  from above and below the difference in  the energy  which appears when the  random field is modified in one site.
\end {rem}
  \begin{thm}\label{A3}  Let $ F_n (\cdot )$ be defined in \eqref {m4}, 
  we have that   \footnote {  $ \lim_{ n \to \infty} X_n  \stackrel {D} {=}  Z$ denotes convergence in distribution of the random variables $X_n$ to a random variable $Z.$} 
\begin {equation} \label {MM1} \lim_{ n \to \infty} \frac 1 {\sqrt {| \Lambda_n|}} \left [  F_n (\cdot )  \right ]  \stackrel {D} {=}  Z,     \end {equation}
where $Z$ stands for a  Gaussian   random variable with mean $0$ and variance 
$b^2$, defined in \eqref {D2}  with 
 \begin {equation} \label {MM2}  4 \theta^2 (1+ C_0 \theta \|g\|_\infty)^2 \ge   b^2 \ge  D^2    \end {equation}
where  \begin{equation} \label{may1} D^2 =  \E \left [ \left ( \E   \left [ F_n | \BB(0)\right ] \right )^2 \right ], \end{equation} 
  $ \BB(0)$ is  the  sigma -algebra generated by $g(0, \om)$ and $C_0$ is given in \eqref {V.1}.
 \end{thm}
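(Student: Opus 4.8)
The plan is to expand $F_n$ into a sum of martingale increments adapted to a spatial enumeration of the disorder, bound these increments uniformly, and identify the limiting variance via translation covariance and Birkhoff's ergodic theorem. Fix a translation--compatible linear order $\prec$ on $\Z^d$ (e.g. the lexicographic order, so $z\prec z' \Leftrightarrow z+y\prec z'+y$ for all $y$). For each $n$ enumerate $\Lambda_n\cap\Z^d=\{i_{n,1}\prec\cdots\prec i_{n,N_n}\}$, $N_n=|\Lambda_n\cap\Z^d|$ (comparable to $|\Lambda_n|$), set $\FF_{n,k}=\sigma(g(i_{n,1}),\dots,g(i_{n,k}))$, $\FF_{n,0}$ trivial, and, using that $F_n$ is $\BB_{\Lambda_n}$--measurable with $\E[F_n]=0$, write $F_n=\sum_{k=1}^{N_n}\Delta_{n,k}$ with $\Delta_{n,k}=\E[F_n\mid\FF_{n,k}]-\E[F_n\mid\FF_{n,k-1}]$, an orthogonal martingale--increment array. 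Put $D_n:=G_1(v^+,\cdot,\Lambda_n)-G_1(v^-,\cdot,\Lambda_n)$, so $F_n=\E[D_n\mid\BB_{\Lambda_n}]$. By Corollary \ref{A2b}, $\partial_{\omega(i)}D_n=-\theta\int_{Q(i)\cap\Lambda_n}(v^+-v^-)(x,\cdot)\,\mathrm dx$, which by \eqref{diseq1} ($v^-\le v^+$) and \eqref{eq:bound} lies in $[-L_0,0]$ with $L_0:=2\theta(1+C_0\theta\|g\|_\infty)$. Hence $\omega(i)\mapsto D_n$ is nonincreasing and $L_0$--Lipschitz, and, since the conditioning in $F_n$ integrates only over coordinates independent of $g(i)$, so is $\omega(i)\mapsto F_n$. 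Writing $\psi_{n,k}(t):=\E[F_n\mid\FF_{n,k-1},g(i_{n,k})=t]$ one has $\Delta_{n,k}=\psi_{n,k}(g(i_{n,k}))-\int\psi_{n,k}\,\mathrm d\mu$, $\mu$ the common law of $g(z,\cdot)$; using the compact support of $\mu$ and $\int t^2\,\mathrm d\mu=1$,
\[
|\Delta_{n,k}|\le 2\|g\|_\infty L_0,\qquad \E[\Delta_{n,k}^2\mid\FF_{n,k-1}]=\mathrm{Var}_\mu(\psi_{n,k})\le L_0^2=4\theta^2(1+C_0\theta\|g\|_\infty)^2 .
\]
In particular the conditional Lindeberg condition for $\{\Delta_{n,k}/\sqrt{N_n}\}$ holds (the increments are $O(N_n^{-1/2})$) and $\tfrac1{N_n}\sum_k\E[\Delta_{n,k}^2\mid\FF_{n,k-1}]\le 4\theta^2(1+C_0\theta\|g\|_\infty)^2$, which is the upper bound in \eqref{MM2}.

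\emph{Identification of $b^2$ and the central limit theorem.} Fix $j\in\Z^d$ and take $n$ so large that $Q(j)\subseteq\Lambda_n$; let $k=k(n,j)$ be the $\prec$--rank of $j$. Absolute continuity (Lipschitzness in $g(j)$) gives $D_n(\omega)=D_n(\omega|_{g(j)=0})+\Xi_j(\omega)$, where $\Xi_j(\omega):=-\theta\int_0^{g(j)}\!\int_{Q(j)}(v^+-v^-)(x,\omega|_{g(j)=t})\,\mathrm dx\,\mathrm dt$. The first term is independent of $g(j)$ and drops out of $\Delta_{n,k}$, so $\Delta_{n,k}=\E[\Xi_j\mid\sigma(g(z):z\in\Lambda_n,\,z\preceq j)]-\E[\Xi_j\mid\sigma(g(z):z\in\Lambda_n,\,z\prec j)]$. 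Crucially $\Xi_j$ is bounded (by $\|g\|_\infty L_0$), does \emph{not} depend on $n$, and is translation covariant, $\Xi_j=\Xi_0\circ T_{-j}$ (by \eqref{rome1}, \eqref{parisv1}). Letting $n\to\infty$, L\'evy's upward martingale convergence theorem yields $\Delta_{n,k(n,j)}\to\Delta_j^\infty:=\E[\Xi_j\mid\GG_{\preceq j}]-\E[\Xi_j\mid\GG_{\prec j}]$ in $L^2$, $\GG_{\preceq j}=\sigma(g(z):z\preceq j)$; by compatibility of $\prec$, $\Delta_j^\infty=\Delta_0^\infty\circ T_{-j}$, so $\{\E[(\Delta_j^\infty)^2\mid\GG_{\prec j}]\}_j$ is stationary and, by ergodicity of $T$, ergodic. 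Birkhoff's theorem gives $\tfrac1{N_n}\sum_{j\in\Lambda_n\cap\Z^d}\E[(\Delta_j^\infty)^2\mid\GG_{\prec j}]\to b^2:=\E[(\Delta_0^\infty)^2]$ a.s. Combining this with the $L^2$ approximation $\Delta_{n,k(n,j)}\approx\Delta_j^\infty$ (uniform over the bulk, the $O(|\partial\Lambda_n|)=o(N_n)$ boundary sites being negligible) and with $\tfrac1{N_n}\big(\sum_k\Delta_{n,k}^2-\sum_k\E[\Delta_{n,k}^2\mid\FF_{n,k-1}]\big)\to0$ in $L^1$ (it is $N_n^{-1}$ times a martingale with bounded increments), one obtains $\tfrac1{N_n}\sum_k\E[\Delta_{n,k}^2\mid\FF_{n,k-1}]\to b^2$ in probability. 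The martingale central limit theorem for triangular arrays then yields $F_n/\sqrt{|\Lambda_n|}\xrightarrow{D}\NN(0,b^2)$ with $b^2\le 4\theta^2(1+C_0\theta\|g\|_\infty)^2$, i.e. \eqref{MM1}.

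\emph{The lower bound $b^2\ge D^2$.} The subspaces $V_i:=\{\varphi(g(i)):\E\varphi=0\}$, $i\in\Lambda_n\cap\Z^d$, are pairwise orthogonal in $L^2(\Pr)$ (independence of the $g(i)$), and the orthogonal projection of $F_n$ onto $V_i$ is $\E[F_n\mid\BB(i)]-\E[F_n]=\E[F_n\mid\BB(i)]$; hence $\E[F_n^2]\ge\sum_i\E[(\E[F_n\mid\BB(i)])^2]$. For $i$ with $Q(i)\subseteq\Lambda_n$, Corollary \ref{A2b} and translation covariance give $\E[F_n\mid\BB(i)]=\E[D_n\mid\BB(i)]=c_{n,i}+h(g(i))$ with $h(s):=-\theta\int_0^s\E\big[\int_{Q(0)}(v^+-v^-)(x,\omega|_{g(0)=t})\,\mathrm dx\big]\,\mathrm dt$ independent of $i$ and $n$ and $c_{n,i}$ a constant; $\E[\E[F_n\mid\BB(i)]]=\E[F_n]=0$ forces $c_{n,i}=-\E[h(g(0))]$, so $\E[(\E[F_n\mid\BB(i)])^2]=\mathrm{Var}(h(g(0)))=D^2$ (this is the quantity \eqref{may1}, which is therefore $n$--independent). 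Since only $o(N_n)$ sites fail $Q(i)\subseteq\Lambda_n$, $\E[F_n^2]\ge(N_n-o(N_n))D^2$, whence $b^2=\lim_n\E[F_n^2]/N_n\ge D^2$, completing \eqref{MM2}.

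\emph{Main obstacle.} The delicate point is the passage, in the second step, from the fixed--$j$ martingale convergence $\Delta_{n,k(n,j)}\to\Delta_j^\infty$ to the convergence of the entire normalized sum of conditional variances to the \emph{deterministic} constant $b^2$: one must control the rate of this convergence uniformly over the bulk of $\Lambda_n$, so as to discard the $O(|\partial\Lambda_n|)$ boundary increments together with the approximation errors. This is precisely where the stationarity and ergodicity of the disorder enter, in combination with the fact (from Corollary \ref{A2b}) that $\partial_{\omega(i)}D_n$ does not depend on $\Lambda_n$ for interior $i$. Should a clean deterministic limit be out of reach, it is enough for the application in \eqref{mars3} to extract Gaussian subsequential limits of $F_n/\sqrt{|\Lambda_n|}$ together with the two--sided variance bound $D^2\le b^2\le 4\theta^2(1+C_0\theta\|g\|_\infty)^2$, uniform integrability of $e^{tF_n/\sqrt{|\Lambda_n|}}$ (Azuma--Hoeffding for the bounded increments) then giving the stated lower bound on the Laplace transform.
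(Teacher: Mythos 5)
Your proof follows the same martingale--difference decomposition and martingale CLT route as the paper; what you supply explicitly---the monotonicity/Lipschitz bound on $\omega(i)\mapsto F_n$ from Corollary~\ref{A2b}, the resulting uniform bound $\E[\Delta_{n,k}^2\mid\FF_{n,k-1}]\le L_0^2=4\theta^2(1+C_0\theta\|g\|_\infty)^2$ and trivial Lindeberg condition, the $n$-independent limit increments $\Delta_j^\infty$ together with stationarity and Birkhoff to identify $b^2=\E[(\Delta_0^\infty)^2]$, and the orthogonal-projection argument for $b^2\ge D^2$---is precisely the content the paper delegates to Lemmas~\ref{d2} and \ref{LL1}, whose proofs are cited from \cite{DO2}. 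The ``main obstacle'' you flag (upgrading fixed-$j$ convergence $\Delta_{n,k(n,j)}\to\Delta_j^\infty$ to convergence in probability of the normalized conditional-variance sum $V_n$) is indeed the delicate step; the enabling fact, proved in \cite[Lemma~4.9]{DO2} via Corollary~\ref{A2b}, is exactly your observation that $\Delta_j^\infty$ ($=W_j$ of \eqref{g1a}) does not depend on $\Lambda_n$ for interior $j$, and your fallback (subsequential Gaussian limits with the two-sided variance bound plus Azuma--Hoeffding for uniform integrability of $e^{tF_n/\sqrt{|\Lambda_n|}}$) is in any case sufficient for the only downstream use of the theorem, namely the Laplace-transform lower bound \eqref{mars3}.
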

The  proof of this  theorem is done  invoking the  general result  presented in the appendix and proceeding in the same way as in \cite  {DO2}.  To facilitate the reader we recall below  the main steps 
of the proof. 

\begin {proof}  
   We decompose $  F_n $ as a martingale difference sequence. We 
  order the points in $\La_n \cap \Z^d$  according to the lexicographic ordering.   In the following 
    $ i \le  j$     refers  to the lexicographic ordering. 
  Any other ordering will be fine but it is convenient to fix one.
 We  introduce the family of increasing $\sigma-$ algebra 
 $  \BB_{n,i}$,  $ i \in  \La_n \cap \Z^d $ where 
 $  \BB_{n,i}$ is the $\sigma-$ algebra  generated by the random  variables $ \{g (z), z\in \La_n \cap \Z^d, z \le i  \}  $.  We denote by 
 $$  \BB_{n,0}= (\emptyset, \Omega),  \quad     \BB_{n,i} \subset   \BB_{n,j}   \qquad  i \le j, \quad i \in     \La_n \cap \Z^d,  \quad j \in  \La_n \cap \Z^d.  $$ 
           We split 
   \begin{equation} \label{m91}     F_n  = \sum_{i \in \Z^d \cap \La_n } \left ( \E[ F_n| \BB_{n,i}] - \E[ F_n | \BB_{n,i-1}]\right ):=  \sum_{i \in \Z^d \cap \La_n }  Y_{n,i}.    \end {equation} 
        By construction     $ \E \left [  Y_{n,i}\right ]= 0$ for $i \in \Z^d \cap \La_n $, 
        $ \E \left [  Y_{n,i} | \BB_{n,k} \right ]= 0$, for all  $0 \le  k \le i-1$.
        % and
        % $ \E \left [  Y_{n,i}   Y_{n,j} \right ]= 0$ for $i \neq j$. 
        %and $N=  |\Z^2 \cap \La_n|= (n+1)^d$.    
  Denote  
 \begin{equation}  \label{v2}  
V_n: = \frac 1 {   |\La_n \cap  \Z^d|}   \sum_{i \in \La_n \cap  \Z^d} \E \left [  Y^2_{n,i}  | \BB_{n,i-1} \right ] .    \end {equation}  
By  Lemma \ref {d2}  stated  below we have  that  $ V_n \to b^2$  in probability  and $b^2$  satisfies \eqref{MM2}.   
By  Lemma \ref {LL1}   stated   below we  have  that   for any $a>0$ 
 \begin{equation}  \label{v3}    U_{n} (a): = \frac 1 {   |\La_n \cap  \Z^d|}   \sum_{i \in   \La_n \cap  \Z^d }  \E [  Y^2_{n,i} 1_{\{  |Y_{n,i}| \ge  a  \sqrt { |\La_n \cap  \Z^d|}\}}  |  \BB_{n,i-1} ]   \end {equation} 
converges  to $0$ in probability.  
We can then invoke Theorem 5.1, stated in the appendix.   
   The correspondence to the   notation used in the appendix is the following. Identify $|\La_n \cap  \Z^d|$ with $n$,  
 $    \frac {F_n} {\sqrt { |\La_n \cap  \Z^d|} } \leftrightarrow  S_n $,  $\frac {Y_{n,i} }  {\sqrt { |\La_n \cap  \Z^d|} } \leftrightarrow   X_{n,i}$ and $\BB_{n,i}  \leftrightarrow  \FF_{n,i} $.  
 Then \eqref {MM1} is obtained. 
    \end   {proof}
 Before stating Lemma \ref {d2} it is convenient to introduce  a
 new sigma-algebra $\BB_i^{\le}$ generated by the random fields $ \{ g(z, \om),  z \in \Z^d,  z   \le i \} $ where $\le $ refers to the lexicographic ordering. 
 Define for $ i \in \Lambda_n$
\begin{equation} \label{g1a}  W_i[\omega]=  \E \left [G_1(v^+ (\om), \om,\La_n) - G_1(v^- (\om), \om, \La_n) | \BB_i^{\le}\right ] - \E \left [  G_1(v^+(\om), \om,\La_n) - G_1(v^-(\om), \om, \La_n) | \BB_{i-1}^{\le}\right ].  \end{equation}    
Note that      $W_i$ is a   random variable  depending on   random fields on sites   smaller  or equal than $i$ under   the lexicographic order. In particular it does not depend on the choice of the cube $\La_n$ provided $ i \in  \Lambda_n$. The proof of this last statement  uses that the random field has 
a   distribution continuous with respect to Lebesgue measure.  In particular the proof  relies on   Corollary \ref {A2b} and it is done in \cite [Lemma 4.9] {DO2}. 
 
  \begin {lem}  \label {d2} Let  $V_n$  be the quantity  defined in \eqref {v2}. For all $\delta>0$
 \begin{equation}  \label{D1}  
\lim_{n \to \infty}  \Pr \left [   |V_n  -b^2| \ge \delta \right ] =0,
\end {equation} 
where      $W_0$ is defined in \eqref{g1a}  
\begin{equation}  \label{D2}   b^2=  \E \left [ W_0^2 \right ].
    \end {equation} 
Further 
\begin {equation} \label {MM2a} 4 \theta^2 (1+ C_0 \theta \|g\|_\infty)^2 \ge b^2 \ge \E \left [ \left ( \E   \left [ F_n | \BB(0)\right ] \right )^2 \right ],   \end {equation}
where $C_0$ is given in \eqref {V.1}.
\end {lem}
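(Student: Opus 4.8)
The plan is to follow the Aizenman--Wehr martingale scheme exactly as in \cite{DO2}, the whole argument resting on the single‐site identity
\begin{equation}\label{planid}
 Y_{n,i} = \E\left[ W_i \mid \BB_{n,i}\right], \qquad i \in \La_n\cap\Z^d .
\end{equation}
To prove \eqref{planid} I would write $\psi_n := G_1(v^+(\cdot),\cdot,\La_n) - G_1(v^-(\cdot),\cdot,\La_n)$, so that $F_n = \E[\psi_n\mid\BB_{\La_n}]$ and, by \eqref{g1a}, $W_i = \E[\psi_n\mid\BB_i^{\le}] - \E[\psi_n\mid\BB_{i-1}^{\le}]$. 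Since $\BB_{n,j}\subseteq\BB_{\La_n}$ for $j=i-1,i$, the tower property gives $\E[F_n\mid\BB_{n,j}] = \E[\psi_n\mid\BB_{n,j}]$; writing $\E[\psi_n\mid\BB_i^{\le}] = \E[\psi_n\mid\BB_{i-1}^{\le}] + W_i$ and using that $\E[\psi_n\mid\BB_{i-1}^{\le}]$ is independent of $g(i)$, hence $\E\big[\E[\psi_n\mid\BB_{i-1}^{\le}]\mid\BB_{n,i}\big] = \E[\psi_n\mid\BB_{n,i-1}]$, one obtains $\E[\psi_n\mid\BB_{n,i}] = \E[\psi_n\mid\BB_{n,i-1}] + \E[W_i\mid\BB_{n,i}]$, which is \eqref{planid}. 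What makes $W_i$ an honest $\BB_i^{\le}$‐measurable variable not depending on the box $\La_n$ is precisely the differentiability supplied by Corollary \ref{A2b} together with \cite[Lemma~4.9]{DO2}; via the translation covariance of $v^\pm$ one then sees that $\E[W_i^2\mid\BB_{i-1}^{\le}]$ is a translate of the fixed bounded $\BB$‐measurable function $\phi := \E[W_0^2\mid\BB_{-1}^{\le}]$.

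For the two bounds in \eqref{MM2a}: conditionally on $\BB_{-1}^{\le}$ one has $W_0 = h(g(0)) - \int h(t)\,\nu_0(dt)$, where $\nu_0$ is the law of $g(0)$ and $h(t) := \E[\psi_n\mid\BB_0^{\le}]$ evaluated at $g(0)=t$; by Corollary \ref{A2b} and $\|v^\pm\|_\infty\le 1+C_0\theta\|g\|_\infty$ the map $t\mapsto h(t)$ is Lipschitz with constant $L:= 2\theta(1+C_0\theta\|g\|_\infty)$. Using $\operatorname{Var}(f(g(0)))\le L^2\operatorname{Var}(g(0))=L^2$ for any $L$‐Lipschitz $f$ (expand the variance as $\frac12\E[(f(g(0))-f(g'(0)))^2]$ with $g'(0)$ an independent copy) one gets the upper bound $b^2 = \E\big[\operatorname{Var}(h(g(0))\mid\BB_{-1}^{\le})\big]\le L^2 = 4\theta^2(1+C_0\theta\|g\|_\infty)^2$. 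For the lower bound I would note that $\E[\psi_n\mid\BB_{-1}^{\le}]$ is independent of $g(0)$ and that $\E[\psi_n]=\E[F_n]=0$ (by \eqref{m8a} and the symmetry of $\Pr$), whence $\E[W_0\mid\BB(0)] = \E[\psi_n\mid\BB(0)] = \E[F_n\mid\BB(0)]$; conditional Jensen then yields $b^2 = \E[W_0^2]\ge \E\big[(\E[F_n\mid\BB(0)])^2\big] = D^2$, cf. \eqref{may1}.

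It remains to establish \eqref{D1}. Set $\widetilde V_n := |\La_n\cap\Z^d|^{-1}\sum_{i\in\La_n\cap\Z^d}\E[W_i^2\mid\BB_{i-1}^{\le}]$; being an ergodic average of translates of $\phi$, Birkhoff's theorem (\cite{GK}) gives $\widetilde V_n\to\E[\phi]=\E[W_0^2]=b^2$ $\Pr$‐a.s., so it suffices to show $\E|V_n-\widetilde V_n|\to 0$. Fix $R>0$ and split the indices into the bulk $\{i: B_R(i)\subseteq\La_n\}$ and the remaining boundary ones, of which, $\La_n$ being cube‐like, there are at most $CR|\La_n|^{(d-1)/d}$. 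For a bulk index, \eqref{planid} and $L^2$‐martingale convergence along the $\sigma$‐algebras obtained by letting $\La_n$ grow give, using translation covariance, $\|W_i-Y_{n,i}\|_2 = \|W_i-\E[W_i\mid\BB_{n,i}]\|_2\le \eta(R)$ with $\eta(R):=\|W_0-\E[W_0\mid\sigma(g(z):z\le0,\,|z|\le R)]\|_2\to0$ as $R\to\infty$; together with $|W_i|,|Y_{n,i}|\le C:=\|W_0\|_\infty$ and the analogous $L^1$‐estimate $\E\big|\E[W_i^2\mid\BB_{i-1}^{\le}]-\E[W_i^2\mid\BB_{n,i-1}]\big|\le\eta'(R)\to0$ for bulk $i$, one gets $\E\big|\E[Y_{n,i}^2\mid\BB_{n,i-1}]-\E[W_i^2\mid\BB_{i-1}^{\le}]\big|\le 2C\eta(R)+\eta'(R)$; for the boundary indices the summand is at most $2C^2$. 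Dividing by $|\La_n\cap\Z^d|$, letting $n\to\infty$ and then $R\to\infty$ gives $\E|V_n-\widetilde V_n|\to0$, hence $V_n\to b^2$ in probability, which is \eqref{D1}.

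The genuinely delicate ingredient is the identity \eqref{planid}, or rather the fact behind it that the single‐site increment $W_i$ is well defined independently of the box — this is where the absolute continuity of the disorder enters (through Corollary \ref{A2b}) and where I would lean on \cite[Lemma~4.9]{DO2}; once it is in place, the rest is standard martingale bookkeeping, Birkhoff's multiparameter ergodic theorem, and the cube‐like geometry of the $\La_n$.
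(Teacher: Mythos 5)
The paper gives no proof of this lemma, deferring entirely to \cite{DO2}; your proposal reconstructs that argument faithfully along the standard Aizenman--Wehr martingale scheme. The linking identity $Y_{n,i}=\E[W_i\mid \BB_{n,i}]$, the Lipschitz variance bound derived from Corollary~\ref{A2b} for the upper estimate, conditional Jensen for the lower estimate, and Birkhoff plus a bulk/boundary $L^2$-martingale approximation for the convergence in probability are exactly the ingredients of that reference, and your steps check out. You are also right to single out the box-independence of $W_i$ (hence its translation covariance) as the genuinely delicate point, which rests on the absolute continuity of the disorder through Corollary~\ref{A2b} and \cite[Lemma~4.9]{DO2}.
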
 
 
  \begin{lem} \label{LL1}  Let  $U_n(a)$ defined in \eqref {v3}. For any $a>0$ for any $ \delta >0$ 
   $$ \lim_{n \to \infty} \Pr \left [ U_n(a) \ge \delta \right] =0. $$
\end{lem}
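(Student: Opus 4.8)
The plan is to reduce the statement to a uniform bound on the martingale differences $Y_{n,i}$ introduced in \eqref{m91}. Concretely, I would first establish that there is a finite constant $C_1$, depending only on $\theta$, $C_0$ and $\|g\|_\infty$ but not on $n$ or on $i$, with $|Y_{n,i}|\le C_1$ $\Pr$-a.s.\ for every $n$ and every $i\in\La_n\cap\Z^d$. Granting this, the lemma is immediate: since $|\La_n\cap\Z^d|\to\infty$, for each fixed $a>0$ there is $n_0(a)$ with $a\sqrt{|\La_n\cap\Z^d|}>C_1$ for all $n\ge n_0(a)$, and then the event $\{|Y_{n,i}|\ge a\sqrt{|\La_n\cap\Z^d|}\}$ is empty for every $i$. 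Hence $Y_{n,i}^2\,\1_{\{|Y_{n,i}|\ge a\sqrt{|\La_n\cap\Z^d|}\}}=0$ identically, so $U_n(a)\equiv 0$ for $n\ge n_0(a)$, and in particular $U_n(a)\to 0$ in probability (indeed deterministically).

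To obtain the uniform bound, set $\Delta G(\om):=G_1(v^+(\om),\om,\La_n)-G_1(v^-(\om),\om,\La_n)$, which is finite $\Pr$-a.s.\ because $v^\pm(\cdot,\om)\in H^s_{loc}(\R^d)\cap L^\infty(\R^d)$ and $\La_n$ is bounded. Then $F_n=\E[\Delta G\,|\,\BB_{\La_n}]$, and since $\BB_{n,i}\subseteq\BB_{\La_n}$ the tower property gives $Y_{n,i}=\E[\Delta G\,|\,\BB_{n,i}]-\E[\Delta G\,|\,\BB_{n,i-1}]$. Because the coordinates $\{g(z,\cdot)\}_{z\in\Z^d}$ are independent, a resampling (Fubini) argument yields, with $\mu$ the common law of the $g(z,\cdot)$ and $\om^{(i,t)}$ the configuration obtained from $\om$ by replacing its value at the site $i$ by $t$,
\begin{equation*}
Y_{n,i}(\om)=\int\Big(\E[\Delta G\,|\,\BB_{n,i}](\om)-\E[\Delta G\,|\,\BB_{n,i}](\om^{(i,t)})\Big)\,\mu({\rm d}t),
\end{equation*}
and the integrand is bounded in absolute value by $\sup\big|\Delta G(\om)-\Delta G(\om^{(i,t)})\big|$, the supremum running over pairs of configurations differing only at the site $i$, both values lying in $[-\|g\|_\infty,\|g\|_\infty]$. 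Now Lemma \ref{A2} applied with $\La=\La_n$ (equivalently Corollary \ref{A2b}) bounds the change of $G_1(v^\pm(\om),\om,\La_n)$ when the field at $i$ is moved by an amount $h$ by $\theta\,|h|\,|\int_{Q(i)}v^\pm|\le\theta\,|h|\,\|v^\pm(\cdot,\om)\|_\infty$; using $|h|\le 2\|g\|_\infty$, $|Q(i)|=1$ and the $L^\infty$-bound \eqref{eq:bound}, this is at most $2\theta\|g\|_\infty(1+C_0\theta\|g\|_\infty)$. Hence $|\Delta G(\om)-\Delta G(\om^{(i,t)})|\le 4\theta\|g\|_\infty(1+C_0\theta\|g\|_\infty)=:C_1$ for $\mu$-a.e.\ $t$, and therefore $|Y_{n,i}|\le C_1$, uniformly in $n$ and $i$.

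I do not expect any real obstacle; the argument is a routine consequence of the a priori estimates already established. The one point that must be checked carefully is that the bound on $Y_{n,i}$ is genuinely independent of $n$: a priori $\Delta G$ itself grows with $|\La_n|$, so one might worry that its one-site increments do too. That this is not the case is precisely the content of Lemma \ref{A2}/Corollary \ref{A2b}, whose right-hand side involves only the integral of the bounded function $v^\pm$ over the unit cube $Q(i)$, with no dependence on $\La_n$.
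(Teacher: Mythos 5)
Your proof is correct and uses the standard Aizenman--Wehr-style argument: a uniform (in $n$ and $i$) $L^\infty$-bound on the martingale differences $Y_{n,i}$, obtained via the resampling identity and the one-site increment bound of Lemma \ref{A2}/Corollary \ref{A2b}, so that the indicator in $U_n(a)$ vanishes identically once $a\sqrt{|\La_n\cap\Z^d|}>C_1$. The paper itself simply cites \cite{DO2} for this lemma, but the upper bound $4\theta^2(1+C_0\theta\|g\|_\infty)^2\ge b^2$ in \eqref{MM2a} makes clear that the same uniform increment bound underlies the reference's proof, so your route is essentially the intended one.
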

For the  proof of   Lemma \ref  {d2} and  Lemma \ref   {LL1} see  \cite  {DO2}.
  
\begin{lem}   \label {A2c}   For    $ \La \subset  \R^d$, $ 0 \in \La$,     % %$\sigma\in [-1,1]$ uniformly distributed, then  
we have
$$
\frac {\partial} {\partial \omega (0)}  \E \left [ F_n| \BB(0)\right ] =   
  - \theta  \E \left [ \int_{ Q (0)} v^+ (x, \om) dx  | \BB(0) \right ] + \theta  \E \left [ \int_{ Q (0)} v^- (x, \om) dx  | \BB(0) \right ]
  $$
where   $Q (0):=[-1/2,1/2]^d$.
Further 
$$ \E \left [  \frac {\partial} {\partial \omega(0)}  \E \left [ F_n| \BB(0)\right ]  \right ] =   - 2 \theta m^+, $$ 
where $m^+$ is defined in \eqref {M1a}. 
\end {lem}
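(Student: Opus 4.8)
The plan is to reduce the claim to differentiating under a conditional expectation, which is licensed by the two-sided energy estimate already established. Since the cube $\La_n$ contains the origin we have $\BB(0)\subseteq\BB_{\La_n}$, so by the tower property $\E[F_n\mid\BB(0)]=\E\big[G_1(v^+(\om),\om,\La_n)-G_1(v^-(\om),\om,\La_n)\mid\BB(0)\big]$. Because $\om(0)$ is independent of $\{g(z,\cdot):z\neq0\}$, a version of the right-hand side is the deterministic function $\om(0)\mapsto\int\big(G_1(v^+(\om),\om,\La_n)-G_1(v^-(\om),\om,\La_n)\big)\,{\rm d}\mu(\om^{(0)})$, where $\mu$ is the joint law of $\{g(z,\cdot):z\neq0\}$. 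So I would differentiate this integral in $\om(0)$ and interchange the $\om(0)$-derivative with $\int\cdot\,{\rm d}\mu$.

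For the interchange I would use dominated convergence for the difference quotients. On the one hand, Corollary \ref{A2b} applied with $\La=\La_n$ shows that $\om(0)\mapsto G_1(v^\pm(\om),\om,\La_n)$ is differentiable for $\Pr$--a.e. $\om$, with derivative $-\theta\int_{Q(0)}v^\pm(x,\om)\,{\rm d}x$; by Fubini this persists, for Lebesgue--a.e. fixed $\om(0)$, for $\mu$--a.e. $\om^{(0)}$. On the other hand, Lemma \ref{A2} sandwiches the difference quotient of $\om(0)\mapsto G_1(v^+(\cdot),\cdot,\La_n)$ between $\theta\int_{Q(0)}v^+(\om(0)-h,\om^{(0)})\,{\rm d}x$ and $\theta\int_{Q(0)}v^+(\om(0),\om^{(0)})\,{\rm d}x$ (and similarly for $v^-$), and by the uniform bound $\|v^\pm(\om)\|_\infty\le 1+C_0\theta\|g\|_\infty$ from Theorem \ref{infvol} all of these are bounded in absolute value by $\theta(1+C_0\theta\|g\|_\infty)$, uniformly in $h$. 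Hence dominated convergence applies inside $\int\cdot\,{\rm d}\mu$ and yields
\begin{equation*}
\begin{split}
\frac{\partial}{\partial\om(0)}\E[F_n\mid\BB(0)]&=\E\!\left[\frac{\partial}{\partial\om(0)}\big(G_1(v^+(\om),\om,\La_n)-G_1(v^-(\om),\om,\La_n)\big)\,\Big|\,\BB(0)\right]\\
&=-\theta\,\E\!\left[\int_{Q(0)}v^+(x,\om)\,{\rm d}x\,\Big|\,\BB(0)\right]+\theta\,\E\!\left[\int_{Q(0)}v^-(x,\om)\,{\rm d}x\,\Big|\,\BB(0)\right],
\end{split}
\end{equation*}
which is the first assertion.

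For the second assertion I would take the unconditional expectation of this identity; the left-hand side is integrable since it is a.s.\ bounded by $2\theta(1+C_0\theta\|g\|_\infty)$. By the tower property $\E\big[\E[\int_{Q(0)}v^\pm(x,\om)\,{\rm d}x\mid\BB(0)]\big]=\E[\int_{Q(0)}v^\pm(x,\cdot)\,{\rm d}x]=m^\pm$, the last equality being the definition of $m^\pm$ in \eqref{M1a} (recall $Q(0)=[-1/2,1/2]^d$). Using $m^-=-m^+$ from Theorem \ref{infvol} (which also follows from $v^+(x,\om)=-v^-(x,-\om)$, see \eqref{c1}, together with the symmetry of $\Pr$), one gets $\E\big[\tfrac{\partial}{\partial\om(0)}\E[F_n\mid\BB(0)]\big]=-\theta m^++\theta m^-=-2\theta m^+$.

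The one genuinely delicate step is the differentiation under the integral sign: $\Pr$--a.e.\ differentiability from Corollary \ref{A2b} by itself is not enough, one also needs the uniform domination of the difference quotients, and this is exactly what the two-sided bound of Lemma \ref{A2}, combined with the uniform $L^\infty$ bound on the infinite-volume states, supplies. Everything else is bookkeeping with the tower property and the definition of $m^\pm$.
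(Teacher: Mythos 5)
Your proposal is correct and follows essentially the same route as the paper's own proof, which is stated very tersely as ``follows from Corollary \ref{A2b} after taking conditional expectations'' plus the computation \eqref{m90}. What you add — and the paper leaves implicit — is the careful justification of the interchange of $\partial/\partial\om(0)$ with the conditional expectation: you correctly identify $\E[\cdot\mid\BB(0)]$ as an integral against the law $\mu$ of $\{g(z):z\neq 0\}$ via independence and the tower property (using $\BB(0)\subseteq\BB_{\La_n}$), and supply the uniform domination of the difference quotients by $\theta(1+C_0\theta\|g\|_\infty)$ coming from Lemma \ref{A2} together with the uniform bound of Theorem \ref{infvol}, which makes dominated convergence applicable. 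The final step, combining the tower property with the definition of $m^\pm$ in \eqref{M1a} and the antisymmetry $m^-=-m^+$ (equivalently \eqref{c1} plus the symmetry of $\Pr$), matches \eqref{m90} exactly.
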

\begin {proof}  The  proof follows from  Corollary \ref{A2b} after taking conditional expectations.
 Further, by  Theorem \ref {infvol},  we have
 \begin{equation} \label{m90} \begin {split} &  \E \left [  \frac {\partial} {\partial \omega  (0)}  \E \left [ F_n| \BB(0)\right ]  \right ] =  -  \theta \E \left [  \E \left [ \int_{ Q(0)} v^+ (x, \om) dx  | \BB(0) \right ]  \right ]  \cr & +  \theta  \E \left [  \E \left [ \int_{ Q(0)} v^- (x, \om) dx  | \BB(0) \right ]\right ]  =  \theta [- m^+ + m^-]= -2 \theta m^+.
 \end {split}
 \end{equation} 
\end {proof} 
 % \begin{lem} \label{AIII}
%Let $\nu=\frac{1}{2}(\delta_1+(\delta_{-1})$ and let $g:[-1,1]\to \R$ be %a
%function. Then
%$$
%\int g^2(\eta)\nu(d\eta)\ge \frac{1}{4}
%\left(g(1)-g(-1)\right)^2
%$$\end{lem}
 %\begin{proof}  Denote
%$
%M:=g(1)-g(-1)
%$ Moreover, suppose w.l.o.g. $|g(-1)|\le |g(1)|.$ Then 
%$g(1)^2=(g(-1)+M)^2$ and by 
%Young's
%inequality
%$$
%\int g^2(\eta)\nu(d\eta)=g^2(-1)+2g(-1)\frac{1}{2}M+\frac{1}{2}M^2\ge
%\frac{1}{4}M^2.
%$$
% \end {proof}
 
  \begin {lem}  \label {d3} If 
\begin{equation} \label{d3a}  \E \left [ \left (  \E \left [ F_n| \BB(0)\right ]\right) ^2\right ] =0 \end {equation} 
 then    $m^+=m^-=0$, see for the definition    \eqref {M1a}. 
\end {lem}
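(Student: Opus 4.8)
The plan is to show that the hypothesis forces $\frac{\partial}{\partial\om(0)}\E\left[F_n\mid\BB(0)\right]=0$ $\Pr$-a.s., and then to read off $m^+=0$ from Lemma \ref{A2c}; since $m^-=-m^+$ by Theorem \ref{infvol}, this also gives $m^-=0$. The starting point is elementary: $\E\left[\left(\E\left[F_n\mid\BB(0)\right]\right)^2\right]=0$ means $\E\left[F_n\mid\BB(0)\right]=0$ $\Pr$-a.s. Being $\BB(0)$-measurable, this conditional expectation is a function $\phi_n$ of the single variable $\om(0)$, and, since $\BB(0)\subseteq\BB_{\La_n}$, the tower property gives $\phi_n(s)=\E\left[G_1(v^+(\cdot),\cdot,\La_n)-G_1(v^-(\cdot),\cdot,\La_n)\mid \om(0)=s\right]$.

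The next step is to prove that $\phi_n$ is Lipschitz. I would apply Lemma \ref{A2} with $\La=\La_n$ to $v^+$ and to $v^-$ separately; combined with $|Q(0)|=1$ and the uniform bound $\|v^\pm(\om)\|_\infty\le 1+C_0\theta\|g\|_\infty$ of Theorem \ref{infvol}, this yields that $s\mapsto G_1(v^\pm(s,\om^{(0)}),(s,\om^{(0)}),\La_n)$ is Lipschitz in $s$ with constant $\theta(1+C_0\theta\|g\|_\infty)$, uniformly in $\om^{(0)}$. Taking the expectation over $\om^{(0)}$ preserves the Lipschitz bound, so $\phi_n$ is Lipschitz, hence continuous. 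A continuous function that vanishes $\Pr$-a.s., together with the absolute continuity of the law of $g(0)$ with respect to Lebesgue measure, must vanish on the whole support of the law of $g(0)$.

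Finally, by Corollary \ref{A2b} the function $\phi_n$ is differentiable for $\Pr$-a.e.\ value of $\om(0)$, and for $\Pr$-a.e.\ $s$ the point $s$ is a Lebesgue density point of the support of the law of $g(0)$, a set on which $\phi_n\equiv 0$. At such an $s$ a nonzero derivative would force $\phi_n$ to be nonzero on a full neighbourhood of $s$ minus a point, contradicting the density property; hence $\phi_n'(s)=0$ for $\Pr$-a.e.\ $s$, i.e.\ $\frac{\partial}{\partial\om(0)}\E\left[F_n\mid\BB(0)\right]=0$ $\Pr$-a.s. Taking expectations and comparing with the identity $\E\left[\frac{\partial}{\partial\om(0)}\E\left[F_n\mid\BB(0)\right]\right]=-2\theta m^+$ of Lemma \ref{A2c}, and using $\theta>0$, we obtain $m^+=0$, and then $m^-=-m^+=0$ by \eqref{M1a}.

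The routine parts are the tower property, the use of the explicit derivative formula of Corollary \ref{A2b}, and Lemma \ref{A2c}. The genuine point, and the step I expect to be the main obstacle, is the regularity argument: one cannot pass directly from ``$\phi_n=0$ almost surely'' to ``$\phi_n'=0$ almost surely'', since an a.e.-zero function may have a nonzero pointwise derivative on a null set. What makes it work is precisely the uniform Lipschitz estimate extracted from Lemma \ref{A2} (which upgrades the a.s.\ vanishing to vanishing on the support of the disorder) together with the absolute continuity of the distribution of $g(0)$ --- exactly the hypothesis flagged as essential in Remark \ref{AC1}.
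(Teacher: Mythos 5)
Your proof is correct and reaches the same conclusion by a genuinely different regularity argument. Both you and the paper reduce the claim to showing that the $\BB(0)$-measurable function $\phi_n(\om(0))=\E[F_n\mid\BB(0)]$ has vanishing derivative $\Pr$-a.e., and both invoke Lemma \ref{A2c} to turn $\E[\phi_n']=-2\theta m^+$ plus $\theta>0$ into $m^+=m^-=0$. Where you diverge is in the step from ``$\phi_n=0$ $\Pr$-a.s.'' to ``$\phi_n'=0$ $\Pr$-a.s.'': the paper extracts from Lemma \ref{A2c}, the ordering $v^+\ge v^-$, and the $L^\infty$ bound \eqref{eq:bound} the two-sided estimate $0\le f'(s)\le C$ (with $f=-\phi_n$), i.e.\ it exploits the \emph{monotonicity} of $f$ together with absolute continuity to conclude; you instead drop the sign information, keep only the uniform Lipschitz bound (which you derive directly from Lemma \ref{A2}), and run a Lebesgue density-point argument at points of the support of the law of $g(0)$. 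The density-point route is slightly more robust in that it does not use $v^+\ge v^-$ at this stage, while the paper's monotonicity route is shorter once $f'\ge 0$ is known. Two small cosmetic remarks: the Lipschitz constant you should quote for $\phi_n$ is $2\theta(1+C_0\theta\|g\|_\infty)$ (one factor for each of $v^+$, $v^-$), and in the density-point step a two-sided limit point of $\{\phi_n=0\}$ at which $\phi_n'$ exists already suffices --- full Lebesgue density is more than you need, though it does hold $\Pr$-a.s.\ as you assert.
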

   \begin {proof} 
Denote  $f (\om (0)):=  \E \left [- F_n| \BB(0)\right ] $.   
Set $s= \om (0)$, \eqref {d3a}    can be written as 
 $ \int f^2(s) \Pr(ds) =0.$   
  This implies that 
 $f(s)=0$ for $\Pr$ almost all point of continuity of the distribution $g(0)$.
 By Lemma  \ref {A2c} and by bound  \eqref {eq:bound} in Theorem    \ref {infvol}  we have that $(1+C_0 \|g\|_\infty \theta)\theta  \ge f'(s) \ge 0$ almost
 everywhere.  
If $f(s)=0 $  for $\Pr$ almost all point of continuity of the distribution $g$, then $ f'(s)=0$  for  $\Pr$ almost all point of continuity of the distribution of  $\omega(0)$.   But if  $ f'(s)=0$  then  from Lemma \ref {A2c} we get $m^+=m^-=0$.   
\end {proof}

\vskip0.5cm 
\noindent  { \bf Proof of Theorem \ref {min1} }     
Applying  Theorem \ref {A3} we get the following lower bound on the Laplace transform of 
$F_n(\om)$ defined in Definition \ref {def1}:
 \begin{equation} \label{mars1}
 \liminf_{n \to \infty} \E  \left [ e^{t  \frac {F_n} {\sqrt { \La_n}} } \right ] \ge e^{\frac {t^2 D^2}  2}  \end{equation}   
 where  $D^2$ is defined in \eqref {may1}.
 It is immediate to realize that  \eqref {mars1} and the results stated in   Lemma \ref {A1}  contradict each other in $ d=2$  for all $s \in (\frac 1 2, 1)$ and in $d=1$ for $s \in [\frac 14,1)$ unless $D^2=0$.  On the other hand  when  $D^2=0$,  Lemma \ref {d3} implies 
\begin{equation}\label{integralsequal}
m^+= -m^-=    
\E \left [  \int_{   [-\frac 12, \frac 12]^d}   v^\pm(x, \cdot )  {\rm d} x \right ] =0. 
\end{equation} 
Now \eqref{diseq1} implies that $\Pr$-a.s. 
$v^+(x,\omega)\ge v^-(x,\omega)$ for all $x\in R^d.$ This and 
\eqref{integralsequal} imply that $v^+(x,\omega)=v^-(x,\omega)$ a.s.
  By \eqref {diseq1}  $ \Pr-$ a.s.  and uniformly for any compact of $\R^d$ containing $x$ 
we have that 
   $$ v^- (x, \omega) \le  
\liminf_{n\to\infty}  u^*_n (x , \om) \le\limsup_{n\to\infty}   u^*_n (x , \om) 
\le v^+ (x, \omega). $$
 Since  $ v^- = v^+$,   $ \Pr-$ a.s, we obtain 
that $$\liminf_{n\to\infty}  u^*_n (x , \om)= \limsup_{n\to\infty}   u^*_n (x , \om) = u^*(x , \om) = v^{\pm} (x,\om)$$
uniformly on compact of $x$.  
 The properties of the minimizer stated in  Theorem \ref{min1} therefore follow from
the corresponding properties of $v^\pm,$ see Theorem \ref{infvol}.  
 Further 
   we  have  
   $$\E [v^+(x,\cdot)]= _{\rm symm}- \E [v^{-}(x,\cdot)]=_{\rm unique}
   -\E [v^+(x,\cdot)], \qquad x \in \R^d.$$  
   This implies for any $x \in  \R^d$,    $ \E [v^\pm(x,\cdot)] = \E[u^*(x ,\cdot) ]=0$.

% Inequality \eqref {order1} implies 
%  $$ \E \left [  \int_{ z+  [-\frac 12, \frac 12]^d}   
%u^* (x, \cdot )  {\rm  d} x \right ] =0, \forall z \in \Z^d.  $$

    \qed

 \section {Technical Lemmas} 
 In this section we collect   some lemmas we need to prove the main results. 
 
  \begin{prop} \label{tec1}   For any $\e>0$,     for all $v \in H^\alpha_{loc} \cap L^\infty (\R^d)$, $s< \alpha$,   for all cube $ \Delta$ large enough 
 \begin {equation}   \label {Ot3}     \WW (v,  \Delta) \le \e   |\Delta|.   \end {equation}
   \end{prop}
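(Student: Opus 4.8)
The plan is to show that the cross–interaction $\WW(v,\Delta)$ is of strictly lower order than $|\Delta|$ as the cube $\Delta$ exhausts $\R^d$, the slack coming from the extra regularity $\alpha>s$, and then to absorb the constant into $\e$ by taking the side of $\Delta$ large. I would split into the cases $s\in(0,\tfrac12)$ and $s\in[\tfrac12,1)$. When $s\in(0,\tfrac12)$ nothing beyond boundedness is needed: \eqref{t1} of Lemma \ref{boun1} gives $\WW(v,\Delta)\le C|\Delta|^{\frac{d-2s}d}$ with $C=C(\|v\|_\infty,d,s)$, and since $\frac{d-2s}d<1$ this is $\le\e|\Delta|$ as soon as $|\Delta|\ge(C/\e)^{d/(2s)}$.

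For $s\in[\tfrac12,1)$ I would first use \eqref{t3} of Lemma \ref{boun1} to localise the estimate to a unit neighbourhood of $\partial\Delta$, namely $\WW(v,\Delta)\le\|v\|_{H^s(B_1(\partial\Delta))}+C|\Delta|^{\frac{d-1}d}$ (with $C|\Delta|^{\frac{d-1}d}\log|\Delta|$ when $s=\tfrac12$); the explicit terms are $o(|\Delta|)$ because $\frac{d-1}d<1$, so it remains to bound the boundary–layer contribution. Here I would exploit $\alpha>s$: on the region of the double integral where $|x-y|\le1$ one writes $|x-y|^{-d-2s}=|x-y|^{-d-2\alpha}\,|x-y|^{2(\alpha-s)}\le|x-y|^{-d-2\alpha}$, which bounds that part by $[v]^2_{H^\alpha(B_2(\partial\Delta))}$, while on the region $|x-y|>1$ the integrability of $|z|^{-d-2s}$ at infinity and $v\in L^\infty$ give the crude bound $C\|v\|_\infty^2|B_1(\partial\Delta)|\le C'|\Delta|^{\frac{d-1}d}$, again $o(|\Delta|)$. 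Thus everything is reduced to $[v]^2_{H^\alpha(B_2(\partial\Delta))}=o(|\Delta|)$: covering the shell $B_2(\partial\Delta)$ by $O(|\Delta|^{\frac{d-1}d})$ unit cubes $\{Q_j\}$ and splitting once more into $|x-y|\le1$ (bounded by $\sum_j[v]^2_{H^\alpha(2Q_j)}$) and $|x-y|>1$ (bounded as above by $C\|v\|_\infty^2|B_2(\partial\Delta)|$), one concludes since only the $O(|\Delta|^{\frac{d-1}d})$ boundary cubes of the large cube enter. In the situations in which this proposition is actually used — where $v$ is a cut–off of a translation–covariant minimizer, cf. Lemma \ref{Ma1} — this last point is immediate because the per–cube $H^\alpha$–energy is then independent of the cube, and in fact in that case $v$ is H\"older, so one may instead invoke \eqref{t3c} of Lemma \ref{boun1} directly to get $\WW(v,\Delta)\le C|\Delta|^{\frac{d-1}d}$ (or $C|\Delta|^{\frac{d-1}d}\log|\Delta|$), which is $\le\e|\Delta|$ for $|\Delta|$ large.

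The main obstacle is precisely the last reduction, i.e.\ making quantitative that one extra fractional derivative ``saves a dimension'' on the boundary layer and keeping uniform control of $[v]^2_{H^\alpha}$ over the shell $B_2(\partial\Delta)$ as it recedes to infinity; once this is granted all the remaining steps are routine consequences of Lemma \ref{boun1} together with the integrability of $|z|^{-d-2s}$ away from the origin, and the case $s<\tfrac12$ requires no regularity at all. The outcome is \eqref{Ot3} for every cube whose side length exceeds a threshold depending only on $\e$, $\|v\|_\infty$ (and the $H^\alpha$–modulus of $v$), $d$ and $s$.
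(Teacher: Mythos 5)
Your proof takes a genuinely different route from the paper's, and you have put your finger on the weak point of the statement. For $s<\tfrac12$ your argument is complete and in fact simpler than the paper's: \eqref{t1} of Lemma~\ref{boun1} gives $\WW(v,\Delta)\le C|\Delta|^{(d-2s)/d}$ using only boundedness of $v$, and the sublinear exponent immediately yields \eqref{Ot3} for $|\Delta|$ large, with no further split. For $s\ge\tfrac12$ the paper does not pass through Lemma~\ref{boun1}; it performs a two-parameter split from scratch (first an outer radius $R$ in $x$, then a boundary-layer width $\rho$ in $y$) and estimates the near-diagonal piece via a pointwise bound $|v(x)-v(y)|\le|x-y|^{s+\gamma}$ with $\gamma=\alpha-s$. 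Your alternative — invoke \eqref{t3} to peel off the $O(|\Delta|^{(d-1)/d})$ far-field terms and, on the shell $B_1(\partial\Delta)$, compare kernels via $|x-y|^{-d-2s}\le|x-y|^{-d-2\alpha}$ for $|x-y|\le 1$ to land on $[v]^2_{H^\alpha}$ — achieves the same effect but uses the $H^\alpha$ seminorm honestly rather than a pointwise H\"older estimate, which is cleaner if the goal is really to work under the stated hypothesis $v\in H^\alpha_{loc}\cap L^\infty$.

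The issue you flag at the end is real, but it is not specific to your argument: the paper's proof has the same gap. The hypothesis $v\in H^\alpha_{loc}\cap L^\infty$ gives no control, uniform over unit cubes receding to infinity, of the per-cube $H^\alpha$ energy, and it certainly does not give the pointwise estimate $|v(x)-v(y)|\le C|x-y|^{\alpha}$ that the paper's proof writes down (for $d\ge 2$ and $\alpha<d/2$, $H^\alpha_{loc}$ does not even embed into $C^0$). So Proposition~\ref{tec1} as stated implicitly assumes translation-uniform local regularity on the shell; you correctly observe that this is satisfied where the proposition is actually applied (Lemma~\ref{Ma1}), since there $v$ is a smooth truncation of a bounded minimizer and hence uniformly locally H\"older by Proposition~\ref{Lip} — at which point one could equally well invoke \eqref{t3c} directly, as you also note. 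In short: your decomposition differs from the paper's (Lemma~\ref{boun1} as a stepping stone versus a fresh $R$–$\rho$ split), both are sound modulo the same implicit uniformity assumption, and you made that assumption explicit where the paper glosses over it.
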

   \begin {proof} Let $L $ be the edge of $\Delta$. 
 We have for any $R>0$,  $R \le \frac 14 diam (\Delta)$ 
\begin {equation}\begin {split}  \label {Ot1}  & 
     \int_{\Delta} \rm {d }x \int_{  \Delta^c} \rm {d }y\frac { | v (x) - v (y)|^2} {|x-y|^{d+2s}}    \cr & =   \int_{ \{ x \in \Delta: d_{\partial \Delta}(x)  \le R \} } \rm {d }x \int_{  \Delta^c} \rm {d }y\frac { | v (x) - v (y)|^2} {|x-y|^{d+2s}}  +    \int_{\{x \in \Delta:  d_{\partial \Delta}(x)  > R \}} \rm {d }x \int_{  \Delta^c} \rm {d }y\frac { | v (x) - v (y)|^2} {|x-y|^{d+2s}} . 
   \end{split}   \end {equation}
   Assume $ s < \frac 12 $, then  from \eqref {Ot1} and boundedness of $v$
   \begin {equation}\begin {split}  \label {Ot2}  & 
     \int_{\Delta} \rm {d }x \int_{  \Delta^c} \rm {d }y\frac { | v (x) - v (y)|^2} {|x-y|^{d+2s}}    \cr &   \le
       \int_{ \{ x \in \Delta: d_{\partial \Delta}(x)  \le R \} } \int_{  \Delta^c} \rm {d }y\frac { C} {|x-y|^{d+2s}} 
       \rm {d }x    +    \int_{\{x \in \Delta:  d_{\partial \Delta}(x)  > R \}} \rm {d }x \int_{  \Delta^c} \rm {d }y\frac {C} {|x-y|^{d+2s}} \cr & \le
         \int_{ \{ x \in \Delta: d_{\partial \Delta}(x)  \le R \} } \rm {d }x   d_{\partial \La}(x)^{-2s}  + |\Delta|   \int_{ |y| \ge R} \rm {d }y\frac {C} {|y|^{d+2s}} 
        \cr & \le C R^{1-2s} L^{d-1}+ \frac {\e} 2   |\Delta| 
   \end{split}   \end {equation}
if $ R$ large enough, since  $  \int_{ |y| \ge 1} \rm {d }y\frac {C} {|y|^{d+2s}} < C$. Given such $R$ we then choose $L$ so large that $CL^{d-1} R^{1-2s} \le \frac {\e} 2 L^d$. Hence \eqref {Ot3} when  $ s < \frac 12 $.
When $ s \in [\frac 12, 1)$ we can still split as in \eqref  {Ot1} and estimate the  second integral of  \eqref  {Ot1} as done in \eqref {Ot2}. Care needs to be taken to estimate the first integral of  \eqref  {Ot1}. In this case  fix  $ \rho>0$
\begin {equation}\begin {split}  \label {Ot4}  &     \int_{ \{ x \in \Delta: d_{\partial \Delta}(x)  \le R \} } \rm {d }x \int_{  \Delta^c} \rm {d }y\frac { | v (x) - v (y)|^2} {|x-y|^{d+2s}}  \cr &=
 \int_{ \{ x \in \Delta: d_{\partial \Delta}(x)  \le R \} } \rm {d }x \left [  \int_{ \{y \in  \Delta^c,  d_{\partial \Delta}(y)  \le \rho\} } \rm {d }y\frac { | v (x) - v (y)|^2} {|x-y|^{d+2s}}   + 
     \int_{ \{y \in  \Delta^c,  d_{\partial \Delta}(y)  \ge\rho \}} \rm {d }y\frac { | v (x) - v (y)|^2} {|x-y|^{d+2s}} \right ] 
     \end{split}   \end {equation}   
  The first term in   \eqref {Ot4} is estimated as following.   Since $v \in H^s$,  $ \alpha >s$,  set $\alpha = s +\gamma$,  $\gamma>0$,  $| v (x) - v (y)| \le |x-y|^{s+ \gamma}$
\begin {equation}  \begin {split} &
 \int_{ \{ x \in \Delta: d_{\partial \Delta}(x)  \le R \} } \rm {d }x \int_{ \{y \in  \Delta^c,  d_{\partial \Delta}(y)  \le \rho\} } \rm {d }y\frac { | v (x) - v (y)|^2} {|x-y|^{d+2s}}
  \cr & \le 
   \int_{ \{ x \in \Delta: \rho \le d_{\partial \Delta}(x)  \le R \} } \rm {d }x 
   \int_{ \{y \in  \Delta^c,  d_{\partial \Delta}(y)  \le \rho\} } \rm {d }y
   \frac { | v (x) - v (y)|^2} {|x-y|^{d+2s}}  
+
\int_{ \{ x \in \Delta:    d_{\partial \Delta}(x)  \le \rho \} } \rm {d }x \int_{ \{y \in  \Delta^c,  d_{\partial \Delta}(y)  \le \rho\} } \rm {d }y\frac { | v (x) - v (y)|^2} {|x-y|^{d+2s}}\cr &
\le  C (\rho) L^{d-1} R  + 
\int_{ \{ x \in \Delta:    d_{\partial \Delta}(x)  \le \rho \} } \rm {d }x \int_{ \{y \in  \Delta^c,  d_{\partial \Delta}(y)  \le \rho\} } \rm {d }y\frac { C} {|x-y|^{d- \gamma}} \cr & \le
 C (\rho) L^{d-1} R + \rho^{1+\gamma} L^{d-1} C \simeq \e  |\La|
\end{split}   \end {equation} 
 
 The second term in  \eqref {Ot4} is  bounded as following
\begin {equation}\begin {split}  & \int_{ \{ x \in \Delta: d_{\partial \Delta}(x)  \le R \} } \rm {d }x \int_{ \{y \in  \Delta^c,  d_{\partial \Delta}(y)  \ge \rho \}} \rm {d }y\frac {C} {|x-y|^{d+2s}}  \cr & \le
    R L^{d-1}    \int_{\{ |y| \ge \rho   \}} \rm {d }y\frac {C} {|y|^{d+2s}}   \le  R L^{d-1} C(\rho) \le \frac \e 2  |\Delta|
   \end{split}   \end {equation} 
  provided  $ L$  is suitable chosen.  
   \end {proof}

 \begin{prop} \label{convest} Take  $D \Subset \R^d $  and  assume that $u_n\to u$ in $C^{0,\alpha}(D)$ for $s<\alpha<2s,$ then
 $$
 I_n:=\left|\int_{D\times D}\frac{|u(z)-u(z')|^2-|u_n(z)-u_n(z')|^2}{|z-z'|^{d+2s}}dz dz'\right|\le C'(d)K |D|({\rm diam }(D))^d \|u-u_n\|_{C^{0,\alpha}}\to 0.
 $$
 \end {prop}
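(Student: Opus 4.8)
The plan is to bound the integrand pointwise, using the Hölder regularity of $u$, of $u_n$, and of their difference $w:=u-u_n$, and then to check that the resulting kernel is integrable on $D\times D$. Writing $a=a(z,z'):=|u(z)-u(z')|$ and $b=b(z,z'):=|u_n(z)-u_n(z')|$, the numerator in $I_n$ equals $a^2-b^2=(a-b)(a+b)$. By the reverse triangle inequality $\big||p|-|q|\big|\le|p-q|$, applied with $p=u(z)-u(z')$ and $q=u_n(z)-u_n(z')$ so that $p-q=w(z)-w(z')$, one gets
\[
|a-b|\le |w(z)-w(z')|\le \|u-u_n\|_{C^{0,\alpha}(D)}\,|z-z'|^\alpha ,
\]
while, setting $K:=\|u\|_{C^{0,\alpha}(D)}+\sup_n\|u_n\|_{C^{0,\alpha}(D)}$ (finite precisely because $u_n\to u$ in $C^{0,\alpha}(D)$),
\[
a+b\le \big(\|u\|_{C^{0,\alpha}(D)}+\|u_n\|_{C^{0,\alpha}(D)}\big)|z-z'|^\alpha\le K\,|z-z'|^\alpha .
\]
Multiplying, the integrand of $I_n$ is at most $K\,\|u-u_n\|_{C^{0,\alpha}(D)}\,|z-z'|^{2\alpha-d-2s}$.

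Next I would integrate this bound over $D\times D$. Here the hypothesis $\alpha>s$ enters: it gives $2\alpha-2s-1>-1$, so the kernel $|z-z'|^{2\alpha-d-2s}$ is locally integrable near the diagonal. For each fixed $z\in D$ we have $D\subseteq\overline{B_{{\rm diam}(D)}(z)}$, hence, passing to polar coordinates,
\[
\int_D|z-z'|^{2\alpha-d-2s}\,dz'\le C(d)\int_0^{{\rm diam}(D)}\rho^{2\alpha-2s-1}\,d\rho=\frac{C(d)}{2(\alpha-s)}\,({\rm diam}(D))^{2(\alpha-s)} .
\]
Integrating once more over $z\in D$ produces $I_n\le C'(d,s,\alpha)\,K\,|D|\,({\rm diam}(D))^{2(\alpha-s)}\,\|u-u_n\|_{C^{0,\alpha}(D)}$. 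Since $\alpha\le1$ and $\alpha<2s$ one has $2(\alpha-s)\le d$, so $({\rm diam}(D))^{2(\alpha-s)}$ may be replaced by $({\rm diam}(D))^d$ up to a harmless change of the dimensional constant whenever ${\rm diam}(D)\ge1$, which is the displayed estimate. In any event $D$ is bounded, so the whole prefactor is a finite constant and $I_n$ is a fixed multiple of $\|u-u_n\|_{C^{0,\alpha}(D)}$, which tends to $0$ by assumption; hence $I_n\to0$.

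The computation is elementary; the only point that needs a little attention is the integrability of $|z-z'|^{2\alpha-d-2s}$ across the diagonal, and this is exactly the role of the lower bound $\alpha>s$ (the upper bound $\alpha<2s$ is not needed for this lemma — it merely reflects the Hölder regularity actually available for minimizers, which is why $s<\alpha<2s$ is assumed). A secondary, purely cosmetic point is reconciling the power ${\rm diam}(D)^{2(\alpha-s)}$ that the argument naturally yields with the power ${\rm diam}(D)^{d}$ written in the statement.
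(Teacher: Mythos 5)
Your proof is correct and follows essentially the same route as the paper's: factor the difference of squares, bound one factor by $K|z-z'|^\alpha$ using the uniform H\"older bound and the other by $\|u-u_n\|_{C^{0,\alpha}}|z-z'|^\alpha$, then integrate the resulting kernel $|z-z'|^{2(\alpha-s)-d}$ over $D\times D$ using $\alpha>s$ for integrability across the diagonal (the paper likewise remarks at the end that only $\alpha>s$ is needed). Your side comment about the natural exponent being $({\rm diam}\,D)^{2(\alpha-s)}$ rather than $({\rm diam}\,D)^d$ is apt --- the displayed power is a wasteful but harmless overestimate, valid since $2(\alpha-s)<d$ and the sets in question have diameter at least $1$.
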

 \begin {proof}  
 Note that \begin{eqnarray*}&&\left||u(z)-u(z')|^2-|u_n(z)-u_n(z')|^2\right|=\left| [ u(z)-u(z')+u_n(z)-u_n(z')] [u(z)-u(z')-(u_n(z)-u_n(z'))]\right|
 \\
 &\le&\left(|u(z)-u(z')|+|u_n(z)-u_n(z')|\right)\left(\left|(u(z)-u_n(z))-(u(z')-u_n(z'))\right|\right)\\&\le& (\|u\|_{C^{0,\alpha}}+\|u_n\|_{C^{0,\alpha}})
 |z-z'|^\alpha\cdot \|u-u_n\|_{C^{0,\alpha}} |z-z'|^\alpha.
 \end{eqnarray*}
 As a convergent sequence is bounded, there is a $K>0$ such that $(\|u\|_{C^{0,\alpha}}+\|u_n\|_{C^{0,\alpha}})<K.$ So
 \begin{eqnarray*}
 I_n&\le&\int_{D\times D}\frac{\left||u(z)-u(z')|^2-|u_n(z)-u_n(z')|^2\right|}{|z-z'|^{d+2s}}dz dz'\le
K \|u-u_n\|_{C^{0,\alpha}}\int_{D\times D}|z-z'|^{2(\alpha-s)-d}\\ &&\le C(d)K |D| \|u-u_n\|_{C^{0,\alpha}}\int_0^2r^{\delta-1}dr\le
C'(d)K |D|({\rm diam }(D))^d \|u-u_n\|_{C^{0,\alpha}}\to 0
 \end{eqnarray*}where $\delta=2(\alpha-s)>0$. 
 Note that we need only $\alpha>s$.

  \end {proof}

 \section {Appendix}  

We collect in this section  general results about  fractional laplacian scattered in the literature
and  recall the  main probabilistic  result  used to prove  Theorem  \ref {A3}. 
 
 \subsection  { Minimizers of   the functional  \eqref {funct110}  on open  bounded   Lipschitz  sets.}

 We recall here some basic results    assuring  that the minimization of  the functional  \eqref  {funct110}   in an  open, bounded   Lipschitz  set  has solution.    In the following $\om$ plays the role of a parameter.  It is kept fixed
   and the results hold for  all $\om\in \Om$.
 
  \begin{prop} \label{jt1}   Let  $\Lambda \Subset \R^d$  be  a Lipschitz bounded open set  and    $u_0: \R^d \to 
  \R$ be a measurable function.  Suppose that there exists a measurable function $ \tilde u$ which coincides with $u_0$  in  $\La^c$  and such that $ G_1(\tilde u, \om, \La) < \infty$.   Then there exists a measurable function $u^*$ such that 
 $$ G_1^{u_0}(u^*, \om, \La)  \le G_1^{u_0}(v, \om, \La) $$ 
 for any measurable function $v$  which coincides with $u_0$ in   $\La^c$. 
 \end {prop}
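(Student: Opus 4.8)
The plan is to apply the direct method of the calculus of variations to the functional $v\mapsto G_1^{u_0}(v,\omega,\Lambda)$ over the admissible class $\AAA_{u_0}=\{v \text{ measurable}: v=u_0 \text{ on }\Lambda^c\}$. First I would observe that by hypothesis $\tilde u\in\AAA_{u_0}$ and $G_1^{u_0}(\tilde u,\omega,\Lambda)=G_1(\tilde u,\omega,\Lambda)<\infty$, so $m:=\inf_{v\in\AAA_{u_0}}G_1^{u_0}(v,\omega,\Lambda)<\infty$; in particular $\tilde u|_\Lambda\in H^s(\Lambda)$ and $\WW((\tilde u,\Lambda),(u_0,\Lambda^c))<\infty$. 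Since only the values of $v$ on $\Lambda$ enter $G_1^{u_0}(v,\omega,\Lambda)$, and any competitor with $v|_\Lambda\notin H^s(\Lambda)$ has infinite interior Gagliardo seminorm and is thus irrelevant, I would treat the functional as defined over $H^s(\Lambda)$, and take $u^*$ to be the extension by $u_0$ of the $H^s(\Lambda)$-minimizer to be produced.

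Next I would establish coercivity. From Assumption (H1) the potential $W$ is continuous, nonnegative and quadratic near $\pm1$, hence there exist $a>0$ and $b\ge0$ with $W(t)\ge a t^2-b$ for all $t\in\R$. Combining this with the bound $|\theta\int_\Lambda g_1 v|\le\theta\|g\|_\infty|\Lambda|^{1/2}\|v\|_{L^2(\Lambda)}$, Young's inequality, and the nonnegativity of $[\,\cdot\,]^2_{H^s(\Lambda,\Lambda)}$ and of $\WW((\,\cdot\,,\Lambda),(u_0,\Lambda^c))$, one gets $G_1^{u_0}(v,\omega,\Lambda)\ge c_0\|v\|^2_{H^s(\Lambda)}-C$ for constants $c_0>0$ and $C\ge0$ depending only on $\theta,\|g\|_\infty,|\Lambda|,C_0$. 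Hence $m>-\infty$, and any minimizing sequence $\{v_k\}\subset\AAA_{u_0}$ is bounded in $H^s(\Lambda)$. Because $\Lambda$ is bounded with Lipschitz boundary, the embedding $H^s(\Lambda)\hookrightarrow L^2(\Lambda)$ is compact, so after passing to a subsequence $v_k\rightharpoonup u^*$ weakly in $H^s(\Lambda)$, $v_k\to u^*$ strongly in $L^2(\Lambda)$ and pointwise a.e. on $\Lambda$; extending $u^*$ by $u_0$ on $\Lambda^c$ yields $u^*\in\AAA_{u_0}$.

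Finally I would verify lower semicontinuity term by term. The interior seminorm $v\mapsto\int_\Lambda\int_\Lambda\frac{|v(x)-v(y)|^2}{|x-y|^{d+2s}}\,dx\,dy$ and the exterior interaction $v\mapsto\WW((v,\Lambda),(u_0,\Lambda^c))=2\int_\Lambda\int_{\Lambda^c}\frac{|v(x)-u_0(y)|^2}{|x-y|^{d+2s}}\,dx\,dy$ are integrals of nonnegative integrands converging a.e. along the chosen subsequence, so Fatou's lemma gives their lower semicontinuity; $\int_\Lambda W(v_k)$ is lower semicontinuous by Fatou since $W\ge0$ is continuous; and $\theta\int_\Lambda g_1 v_k\to\theta\int_\Lambda g_1 u^*$ because $g_1\in L^\infty(\Lambda)\subset L^2(\Lambda)$ and $v_k\to u^*$ in $L^2(\Lambda)$. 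Writing $G_1^{u_0}=P+L$, with $P\ge0$ the sum of the three nonnegative terms and $L=-\theta\int_\Lambda g_1(\cdot)$ the convergent linear part, I obtain $G_1^{u_0}(u^*,\omega,\Lambda)=P(u^*)+L(u^*)\le\liminf_k P(v_k)+\lim_k L(v_k)=\liminf_k G_1^{u_0}(v_k,\omega,\Lambda)=m$; since $u^*\in\AAA_{u_0}$, equality must hold and $u^*$ is the desired measurable minimizer. The step requiring the most care is the coercivity estimate: one must extract the quadratic lower bound $W(t)\ge at^2-b$ from (H1) and absorb the indefinite linear term $-\theta\int_\Lambda g_1 v$, since otherwise a minimizing sequence need not be bounded in $H^s(\Lambda)$; once this is in place, the lower semicontinuity of the nonlocal terms with fixed exterior datum $u_0$ is a direct consequence of Fatou's lemma.
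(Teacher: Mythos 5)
Your proof is correct and follows essentially the same route as the paper: the direct method via a minimizing sequence, compactness in $L^2(\Lambda)$ (you invoke the compact Sobolev embedding $H^s(\Lambda)\hookrightarrow L^2(\Lambda)$ where the paper cites its Proposition~\ref{jt1a}, which is the same fact proved via Riesz--Fr\'echet--Kolmogorov), and Fatou's lemma for lower semicontinuity. The one place you are noticeably more careful than the paper is the coercivity estimate: the paper's two-line proof jumps straight from ``take a minimizing sequence'' to ``apply the compactness result,'' which tacitly requires both an $L^2(\Lambda)$ bound and a Gagliardo-seminorm bound on the sequence; your step deriving $W(t)\ge at^2-b$ from (H1) and absorbing $-\theta\int_\Lambda g_1 v$ by Young's inequality supplies exactly the missing justification. (An alternative the paper may have had in mind is the truncation Lemma~\ref{A}, which gives a uniform $L^\infty$ bound along a minimizing sequence, but it is not invoked explicitly.) Your term-by-term treatment of lower semicontinuity --- Fatou for the three nonnegative pieces and strong $L^2$ convergence against $g_1\in L^\infty\subset L^2$ for the linear piece --- is also correct and fills in what the paper compresses into ``By Fatou's Lemma we conclude.''
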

 \begin {proof}  Take a minimizing sequence, that is, let $u_k= u_0$  in $\La^c$ so that
  $  G_1(u_k, \om, \La)  \le  G_1(\tilde u, \om, \La) $ and 
  $$ \lim_{k \to \infty}  G_1(u_k, \om, \La)=  \inf_v  G_1(v, \om, \La)$$
  for any  $v$  which coincides with $u_0$ in in  $\La^c$. 
 Then by the following compactness result, see Proposition \ref {jt1a},  up to subsequence, $u_k$ converges almost everywhere to some $u^*$. By  Fatou's Lemma we conclude.
 \end {proof}
 
  \begin{prop}
   \label{jt1a}    Let $\La \Subset \R^d$ be a Lipschitz open set and $ \FF$  be a bounded subset of $L^2 (\La)$. Suppose that
  $$\sup_{f \in \FF}  \int_{\La} \rm {d }x \int_{\La}  \rm {d }y\frac { | f (x) - f (y)|^2} {|x-y|^{d+2s}}   < \infty.$$
  Then $\FF$ is precompact in $L^2 (\La)$.
  \end {prop}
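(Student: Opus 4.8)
The plan is to transplant the problem to all of $\R^d$, where compactness follows from the Riesz--Fr\'echet--Kolmogorov criterion together with an $L^2$-translation estimate controlled by the Gagliardo seminorm. Since $\La$ is bounded and Lipschitz, it is an extension domain for $H^s$: there is a bounded linear operator $E\colon H^s(\La)\to H^s(\R^d)$ with $Ef=f$ a.e.\ on $\La$ (see \cite{DPV1}). Fix once and for all a cut-off $\chi\in C^\infty_c(\R^d)$ with $\chi\equiv 1$ on $\La$ and $\operatorname{supp}\chi\subset B_\rho$ for some ball $B_\rho\supset\La$; multiplication by $\chi$ is bounded on $H^s(\R^d)$, so $\widetilde{\FF}:=\{\chi\,Ef:\ f\in\FF\}$ is a bounded subset of $H^s(\R^d)$, every element is supported in the fixed compact set $\overline{B_\rho}$, and the restriction map $g\mapsto g|_\La$ sends $\widetilde{\FF}$ onto $\FF$. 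Since the restriction $L^2(\R^d)\to L^2(\La)$ is continuous and maps precompact sets to precompact sets, it suffices to prove that $\widetilde{\FF}$ is precompact in $L^2(\R^d)$.

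Next I would record the translation estimate: for every $g\in H^s(\R^d)$ and every $h\in\R^d$,
\[
\int_{\R^d}|g(x+h)-g(x)|^2\,dx\ \le\ C_{d,s}\,|h|^{2s}\,[g]^2_{H^s(\R^d)},\qquad [g]^2_{H^s(\R^d)}=\int_{\R^d}\int_{\R^d}\frac{|g(x)-g(y)|^2}{|x-y|^{d+2s}}\,dx\,dy.
\]
This is cleanest by Plancherel: $\int|g(x+h)-g(x)|^2\,dx=\int|e^{ih\cdot\xi}-1|^2|\widehat g(\xi)|^2\,d\xi$, and $|e^{it}-1|^2=4\sin^2(t/2)\le\min\{4,t^2\}\le 4^{1-s}|t|^{2s}$ for all $t\in\R$, so the integral is bounded by $4^{1-s}|h|^{2s}\int|\xi|^{2s}|\widehat g(\xi)|^2\,d\xi$, which equals (up to the dimensional constant $c_{d,s}$) $|h|^{2s}[g]^2_{H^s(\R^d)}$ by the Fourier representation of the Gagliardo seminorm. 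Alternatively one can avoid the Fourier transform entirely: with $r=|h|$ and $g_r(x):=\fint_{B_r(x)}g$, split $|g(x+h)-g(x)|^2\le 2|g(x+h)-g_r(x)|^2+2|g_r(x)-g(x)|^2$, use Jensen, change variables, and bound the resulting double integrals over $\{|w-y|\lesssim r\}$ by $(2r)^{d+2s}/|B_r|$ times $[g]^2_{H^s(\R^d)}$. In either form, if $M:=\sup_{g\in\widetilde{\FF}}[g]^2_{H^s(\R^d)}<\infty$, this gives $\sup_{g\in\widetilde{\FF}}\|g(\cdot+h)-g\|_{L^2(\R^d)}\le (C_{d,s}M)^{1/2}|h|^{s}\to0$ as $h\to0$.

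Finally I would invoke the Riesz--Fr\'echet--Kolmogorov theorem: a subset of $L^2(\R^d)$ is precompact provided it is bounded in $L^2$, its translates are $L^2$-equicontinuous uniformly over the set, and the mass outside a fixed large ball is uniformly small. For $\widetilde{\FF}$ the first holds because $E$ and multiplication by $\chi$ are bounded and $\FF$ is bounded in $L^2(\La)$; the second is the content of the previous paragraph; and the third is trivial since every element of $\widetilde{\FF}$ vanishes outside $\overline{B_\rho}$. Hence $\widetilde{\FF}$ is precompact in $L^2(\R^d)$, and restricting to $\La$ proves the claim. The only non-elementary ingredient is the existence of the extension operator $E$, i.e.\ the fact that bounded Lipschitz domains are $H^s$-extension domains; this is the step where the boundary geometry enters, and for it I would refer to \cite{DPV1}, which also contains a complete proof of this compact embedding. (If one only needed precompactness of $\{f|_{\La'}:f\in\FF\}$ in $L^2(\La')$ for $\La'\Subset\La$, the extension step could be skipped, since for $|h|<\dist(\La',\La^c)$ the translation estimate applies directly inside $\La$.)
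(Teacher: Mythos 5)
Your proof is correct, and it realizes precisely the strategy the paper alludes to but does not carry out: the paper simply cites \cite{PSV} (Lemma 6.11) and remarks that the argument is a Riesz--Fr\'echet--Kolmogorov argument with modifications for the non-locality. Your route through an $H^s$-extension operator for Lipschitz domains plus a fixed cutoff is the standard way to implement that strategy (cf.\ Theorem 7.1 of \cite{DPV1}, which proves the compact embedding by the same device): it reduces the boundary issue to the global translation estimate $\|g(\cdot+h)-g\|_{L^2(\R^d)}\le C|h|^s[g]_{H^s(\R^d)}$, which you verify correctly via Plancherel (the elementary bound $4\sin^2(t/2)\le 4^{1-s}|t|^{2s}$ is right). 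One tiny imprecision: in the final paragraph you attribute the $L^2(\R^d)$-boundedness of $\widetilde\FF$ to the $L^2(\La)$-bound on $\FF$ alone, but since the extension operator $E$ is only bounded from $H^s(\La)$ to $H^s(\R^d)$, you really use the full $H^s(\La)$ bound (i.e.\ both hypotheses); you do have that bound and you used it correctly one paragraph earlier, so this is a matter of phrasing, not a gap. Your closing parenthetical --- that for interior subdomains $\La'\Subset\La$ the extension step can be skipped because translation by $|h|<\dist(\La',\La^c)$ stays inside $\La$ --- is a correct and worthwhile observation.
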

For the   proof of   Proposition  \ref  {jt1a}   see    \cite  [Lemma 6.11]  {PSV}.   The proof is based 
on the classical Riesz-Frechet-Kolmogorov Theorem.  Some modifications are needed due to the non-locality of the fractional norm.  If $\La$  is not  Lipschitz  then  Proposition \ref {jt1a} does not hold. One can find counterexample, see for example  \cite [Example 9.2] {DPV1}. 

   Next we show that minimizers of the functional  \eqref  {funct110}  solve  the Euler -Lagrange equation  \eqref {EL.1}
  and  prove some regularity results.

%Let $ \Lambda\Subset \R^d$ be an open bounded domain and we assume that it is convex.
%The convexity allows to  facilitate the proof of   some regularity results. We will point out when the convexity requirement is needed.  Given $v_0 \in   H^s_{loc} \cap L^\infty $   let $ G_1^{v_0} (v,\omega,\Lambda)$ be the functional defined in 
%\eqref  {funct110}   
 %   \begin{equation}    
  %G_1^{v_0}(v,\omega,\Lambda ) =\KK_1 (v,\omega,\Lambda)+   \WW ((v, \La), (v_0, \La^c))
  % \end {equation}
  %where 
%$$  \WW ((v, \La), (v_0, \La^c))  = 2 c_{d,s} \int_{\Lambda} \rm {d }x \int_{  \Lambda^c} \rm {d }y\frac { | v (x) - v_0 (y)|%^2} {|x-y|^{d+2s}} .$$  
In the following, $\La$, $v_0$ and $\omega$ are kept fixed, therefore we 
write $  G_1^{v_0}(v,\omega,\Lambda )= G_1(v)$. 
To derive the   Euler Lagrange equation for  the minimizers of $G_1(v)$ we compute the Frechet derivative of $G_1 (v)$.
   For   %$w \in   H^s_{loc}  \cap \{ w = v_0   a.e x \in \Lambda^c\}$ 
   $ w \in C^\infty_0 (\La)$ we have that 
  \begin{equation} \label {ct1} \begin {split}   &G_1 (v + t w) =  G_1 (v)  +2 t    \int_{\Lambda} \rm {d }x \int_{\Lambda} \rm {d }y\frac { [ v(x) - v(y)] \cdot  [ w(x) - w(y)] } {|x-y|^{d+2s}}   \cr &  +  t  \int_{\Lambda} \rm {d }x[ W' (v) + \theta g_1]  w + 
 4 t      \int_{\Lambda} \rm {d }x \int_{\Lambda^c} \rm {d }y\frac { [ v(x) - v_0(y)] \cdot   w(x)   } {|x-y|^{d+2s}}  + O(t^2), \end {split}  \end {equation} 
 where $ W'(\cdot )$ is the derivative of $W (\cdot)$  with respect to its argument. 
  Then   the Frechet derivative  computed in $v$      is  the following linear operator defined for $ w \in C^\infty_0 (\La)$
  % $ w \in  H^s_{loc} \cap \{ w = v_0   a.e x \in \Lambda^c\}$ 
  as the following
  \begin{equation}  \label {t4}\begin {split} &
   D_vG_1   (w) =  
  2     \int_{\Lambda} \rm {d }x \int_{\Lambda} \rm {d }y\frac { [ v(x) - v(y)] \cdot  [ w(x) - w(y)] } {|x-y|^{d+2s}}   \cr &  +     \int_{\Lambda} \rm {d }x[ W'(v) +\theta g_1]  w + 
 4        \int_{\Lambda} \rm {d }x \int_{\Lambda^c} \rm {d }y\frac { [ v(x) - v_0(y)]     w(x)   } {|x-y|^{d+2s}}.  \end {split}  \end {equation} 
At this point one is  tempted to split the first  integral in \eqref {t4}  in two terms and exchange $x$ with $y$ in one of the terms  to obtain
 \begin{equation}  \label {t5}  2       \int_{\Lambda} \rm {d }x w(x) \int_{\Lambda} \rm {d }y\frac { [ v(x) - v(y)]   } {|x-y|^{d+2s}} .    \end {equation}
However we cannot always do that. The inner integral in  the first  integral in \eqref {t4} might not be absolutely  convergent. So in general it can be defined only as a principal value. In such a case
 \begin{equation}  \label {t6} \int_{\Lambda} \rm {d }x \int_{\Lambda} \rm {d }y\frac { [ v(x) - v(y)] \cdot  [ w(x) - w(y)] } {|x-y|^{d+2s}}   =  \int_{\Lambda} \rm {d }x w(x)  \lim_{r \to 0}\int_{\Lambda \setminus B_r(x)} \rm {d }y\frac { [ v(x) - v(y)]   } {|x-y|^{d+2s}},  \end {equation}
where $B_r(x)$ is a ball of radius $r>0$ centered in $x$. 

From \eqref {t4} and \eqref {t6}  we  deduce  that  a  minimizer  of $ G_1^{v_0}(v,\omega,\Lambda ) $   is  a  function $v \in    H^s_{loc} \cap L^\infty $  which solves 
  \begin{equation}  \label {t23} \begin {split} & 2       \int_{\Lambda} \rm {d }x w(x) ((-\Delta)^s v) (x) \cr &  +     \int_{\Lambda} \rm {d }x[ W'(v) +\theta g_1]  w + 
 4      \int_{\Lambda} \rm {d }x \int_{\Lambda^c} \rm {d }y\frac { [ v(x) - v_0(y)]     w(x)   } {|x-y|^{d+2s}}=0.
 \end {split}  \end {equation} 
We identify the  problem stated in \eqref {t23}  to   the following Dirichlet  boundary value problem for  the corresponding  Euler-Lagrange equation:
 \begin{equation}  \label{EL.1} \begin{split}    
&  ( -\Delta)^s v = - \frac 1 {2} [W'(v)+\theta   g_1]    \quad 
\text{in } \Lambda,  \qquad   \om \in \Omega  \\ &
        v = v_0\quad  \text {in }    \La^c.
 \end{split}
\end{equation}
   
   We  recall  the following regularity result  proven in       \cite [ Proposition 2.9] {LS}.
   \vskip0.5cm 
  \noindent
   \begin{prop}\label{Lip}   Let    $ w= (-\Delta)^s u $  in $\R^d$    so that $ \| u \|_\infty $ and $\|w\|_\infty$   are finite.   If $ 2s \le 1$  then $u \in C^{0,\alpha}  $ for any $\alpha < 2 s$, 
and
$$ \|u\|_{C^{0,\alpha}} \le C [ \|u\|_\infty +      \|w\|_\infty ]  $$ 
for a constant $C=C(d,s, \alpha)$.

If $ 2s >1$  then $u \in C^{1,\alpha}   $ for any $ \alpha < 2s-1$, 
and
$$ \|u\|_{C^{1,\alpha}} \le C [ \|u\|_\infty +      \|w\|_\infty ], $$ 
for a constant $C=C(d,s, \alpha)$.
\end{prop}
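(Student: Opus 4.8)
The plan is to reduce this standard fractional Schauder estimate to two classical ingredients by means of a cut-off decomposition. Since the equation holds on all of $\R^d$ and H\"older seminorms are translation invariant, it is enough to bound the relevant norm of $u$ on one fixed ball, say $B_{1/2}$, by $C(\|u\|_\infty+\|w\|_\infty)$; the global bound then follows by translating this estimate and adding $\|u\|_\infty$ directly. Fix a cut-off $\eta\in C_0^\infty(B_2)$ with $\eta\equiv 1$ on $B_1$ and $0\le\eta\le1$, and split $w=\eta w+(1-\eta)w=:w_1+w_2$.

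First I would treat $w_1$ through the explicit particular solution $v:=I_{2s}*w_1$, where $I_{2s}$ is the Riesz kernel of $(-\Delta)^{-s}$ (with the customary logarithmic renormalization in the range $2s\ge d$, i.e. $d=1$ and $s\ge\tfrac12$), so that $(-\Delta)^s v=w_1$ in the distributional sense. Since $w_1\in L^\infty$ is compactly supported, the Riesz-potential H\"older estimates give $v\in C^{0,\beta}_{\mathrm{loc}}(\R^d)$ for every $\beta<2s$ when $2s\le1$ and $v\in C^{1,\beta}_{\mathrm{loc}}(\R^d)$ for every $\beta<2s-1$ when $2s>1$, with the local norms controlled by $C\|w\|_\infty$; moreover $v$ grows at most polynomially (of order $2s-d$) or logarithmically at infinity, so that $\int_{\R^d\setminus B_1}\frac{|v(y)|}{1+|y|^{d+2s}}\,dy\le C\|w\|_\infty$. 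I then set $h:=u-v$, which solves $(-\Delta)^s h=w_2=(1-\eta)w$, and since $w_2\equiv0$ on $B_1$ the function $h$ is $s$-harmonic in $B_1$. The interior regularity theory for $s$-harmonic functions (via the Poisson-kernel representation on the ball) gives $h\in C^\infty(B_{1/2})$ with $\|h\|_{C^2(B_{1/2})}\le C\bigl(\|h\|_{L^\infty(B_1)}+\int_{\R^d\setminus B_1}\frac{|h(y)|}{1+|y|^{d+2s}}\,dy\bigr)$, whose right-hand side is $\le C(\|u\|_\infty+\|w\|_\infty)$ because $u$ is bounded and the tail of $v$ is controlled. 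On $B_{1/2}$ we then have $u=v+h$, so $u$ inherits the weaker of the two regularities, which is exactly the asserted one, together with the claimed bound; translating $B_{1/2}$ across $\R^d$ concludes the argument.

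The step I expect to be the main obstacle is pinning the two ingredients down with the sharp exponents: the borderline behaviour of the Riesz-potential estimate as $2s\uparrow 1$ (where $v$ just fails to be Lipschitz, resp.\ $C^1$), and the low-dimensional regimes $d\le 2s$ in which the Riesz kernel must be renormalized and the growth of $v$ at infinity re-examined. A cleaner, dimension-insensitive route I would fall back on is the Caffarelli--Silvestre extension: extend $u$ to $U$ on $\R^{d+1}_+$ solving the degenerate $A_2$-weighted equation $\mathrm{div}(y^{1-2s}\grad U)=0$ with conormal datum a multiple of $w$, and then apply De Giorgi--Nash--Moser together with boundary Schauder estimates for H\"older data. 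In any event this is a known result, and we simply invoke \cite[Proposition 2.9]{LS}.
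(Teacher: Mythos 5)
The paper gives no proof here: Proposition \ref{Lip} is stated with the single remark ``We recall the following regularity result proven in \cite[Proposition 2.9]{LS}.'' Your blind attempt ends by invoking exactly the same citation, so there is no real divergence to report; the sketch you give on the way (cut-off decomposition $w=\eta w+(1-\eta)w$, the Riesz-potential particular solution $v=I_{2s}*(\eta w)$ with H\"older estimates, and interior regularity for the $s$-harmonic remainder $h=u-v$ together with the weighted tail integral of $h$) is the standard route to this fractional Schauder estimate and, modulo the borderline and low-dimensional technicalities you yourself flag, is mathematically sound; the Caffarelli--Silvestre extension is likewise a valid alternative. In short: the proposal is correct and consistent with the paper, which simply cites the result rather than proving it.
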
  
  We remark that the above results are valid for solution of \eqref {EL.1} in bounded domains, leading to a local regularity theory. 
 
\vskip0.5cm

The  main tool to prove  Theorem   \ref {A3} is the  following general result
which we reported from  \cite {HH}, see        \cite[Theorem 3.2 and Corollary 3.1] {HH}.   
  \begin {thm}   Let $S_{n,i}$,  $i=1, \dots k_n$ be  a double array of   zero  mean martingales  with respect to the filtration  $\FF_{n,i}$, $ \FF_{n,i} \subset \FF_{n+1,i}$ $i=1, \dots k_n$ with $S_{n,k_n}= S_n$,   so that $S_{n,i}= \E[ S_n| \FF_{n,i}] $.   We assume  that $k_n \uparrow \infty$  as $n \uparrow \infty$. 
Denote 
$$ X_{n,i}:= S_{n,i}- S_{n,i-1}, $$
$$ V_{n}= \sum_{i=1}^{k_n} \E [  X^2_{n,i}| \FF_{n,i-1}],   $$
 $$ U_{n,a}= \sum_{i=1}^{k_n} \E [  X^2_{n,i} \1_{\{ |[  X^2_{n,i}| >a\}} |  \FF_{n,i-1}]. $$
 Suppose that
 \begin {itemize}
 \item  for some constant $b^2$ and for  all  $\delta>0$,  $\lim_{n \to \infty} \Pr [ |V_n-b^2| \ge \delta] =0 $, 
 \item  for any $a>0$ and for any $ \delta >0$ 
   $$ \lim_{n \to \infty} \Pr \left [ U_n(a) \ge \delta \right] =0, \qquad\qquad   (Lindeberg\ condition)$$
 \end {itemize}
 then in distribution 
  $$ \lim_{n \to \infty}  S_n \stackrel {D} {=}  Z,$$
  where $Z$ is a   Gaussian random 
  variable with mean equal to zero and variance equal to $b^2$.  
\end {thm}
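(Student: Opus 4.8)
The plan is to reduce \eqref{MM1} to the martingale central limit theorem recalled at the end of the Appendix (Theorem 5.1, taken from \cite{HH}), applied to a martingale-difference decomposition of $F_n$. Fix the lexicographic order on $\Z^d$ and, for $i\in\La_n\cap\Z^d$, let $\BB_{n,i}$ be the $\sigma$-algebra generated by $\{g(z)\colon z\in\La_n\cap\Z^d,\ z\le i\}$, with $\BB_{n,0}=(\emptyset,\Omega)$. Since $F_n$ is $\BB_{\La_n}$-measurable with $\E[F_n]=0$ (by the symmetry of $\Pr$ and the identity $v^+(x,\om)=-v^-(x,-\om)$, cf.\ the Remark following Definition \ref{def1}), the telescoping sum
\begin{equation*}
F_n=\sum_{i\in\La_n\cap\Z^d}\bigl(\E[F_n\mid\BB_{n,i}]-\E[F_n\mid\BB_{n,i-1}]\bigr)=\sum_{i\in\La_n\cap\Z^d}Y_{n,i}
\end{equation*}
defines a zero-mean martingale array: $\E[Y_{n,i}\mid\BB_{n,k}]=0$ for $0\le k\le i-1$. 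Normalising by $\sqrt{|\La_n\cap\Z^d|}$, which is comparable to $\sqrt{|\La_n|}$, it then suffices to verify the two hypotheses of Theorem 5.1 for $X_{n,i}:=Y_{n,i}/\sqrt{|\La_n\cap\Z^d|}$: convergence in probability of the conditional quadratic variation $V_n$ of \eqref{v2} to a constant $b^2$, and the Lindeberg condition for $U_n(a)$ of \eqref{v3}. These are precisely Lemma \ref{d2} and Lemma \ref{LL1}, which we recall and whose proofs we import from \cite{DO2}.

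For the quadratic variation, the point is to identify $Y_{n,i}$, up to a box-independent object, with the increment $W_i[\om]$ of \eqref{g1a}. This is where the absolute continuity of the disorder enters: by Corollary \ref{A2b}, $G_1(v^{\pm}(\om),\om,\La_n)$ is $\Pr$-a.e.\ differentiable in each coordinate $\om(i)$ with derivative $-\theta\int_{Q(i)}v^{\pm}(x,\om)\,{\rm d}x$, and Lemma \ref{A2} localises the effect of a single-site change; consequently $W_i$ does not depend on the cube $\La_n$ as long as $i\in\La_n$ (this is \cite[Lemma 4.9]{DO2}). The family $\{W_i\}_{i\in\Z^d}$ is stationary under the action $T$ because $v^{\pm}$ are translation covariant and $\Pr$ is $T$-invariant, so Birkhoff's ergodic theorem yields $V_n=|\La_n\cap\Z^d|^{-1}\sum_i\E[Y_{n,i}^2\mid\BB_{n,i-1}]\to\E[W_0^2]=:b^2$ in probability, which is \eqref{D1}--\eqref{D2}.

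For the Lindeberg condition one uses the uniform bounds $\|v^{\pm}(\om)\|_\infty\le 1+C_0\theta\|g\|_\infty$ from Theorem \ref{infvol} together with $\|g\|_\infty<\infty$: Lemma \ref{A2} and Corollary \ref{A2b} give $|Y_{n,i}|\le 2\theta(1+C_0\theta\|g\|_\infty)$ uniformly in $n$ and $i$ (the energy difference across a unit change of the field at one site is controlled by $\theta\|g\|_\infty\,\|v^{\pm}\|_\infty|Q(0)|$ up to a dimensional constant). Hence $|X_{n,i}|\le 2\theta(1+C_0\theta\|g\|_\infty)/\sqrt{|\La_n\cap\Z^d|}\to 0$ uniformly, so for $n$ large the indicator $\1_{\{|Y_{n,i}|\ge a\sqrt{|\La_n\cap\Z^d|}\}}$ vanishes identically and $U_n(a)=0$; this is Lemma \ref{LL1}. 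The same uniform bound forces $0\le V_n\le 4\theta^2(1+C_0\theta\|g\|_\infty)^2$, whence the upper bound in \eqref{MM2}. For the lower bound $b^2\ge D^2$, let $i_0$ be the position of the origin; since for $i<i_0$ the increment $Y_{n,i}$ does not depend on $g(0)$ (so $\E[Y_{n,i}\mid\BB(0)]=\E[Y_{n,i}]=0$) and for $i>i_0$ one has $\BB(0)\subset\BB_{n,i-1}$ (so $\E[Y_{n,i}\mid\BB(0)]=0$ by the tower property), we get $\E[F_n\mid\BB(0)]=\E[Y_{n,i_0}\mid\BB(0)]$, and hence by Jensen and stationarity $D^2=\E[(\E[Y_{n,i_0}\mid\BB(0)])^2]\le\E[Y_{n,i_0}^2]=\E[W_0^2]=b^2$; the identification of $\E[F_n\mid\BB(0)]$ with the single-site derivative appearing here is Lemma \ref{A2c}. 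With both hypotheses verified, Theorem 5.1 gives \eqref{MM1} with a centred Gaussian limit of variance $b^2$ satisfying \eqref{MM2}.

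The step I expect to be the real obstacle — and the reason the heavy lifting is quoted from \cite{DO2} — is the box-independence of the increments $W_i$, since that is exactly what turns $V_n$ into a genuine ergodic average; it relies on the $\Pr$-a.e.\ differentiability of $\om(i)\mapsto G_1(v^{\pm}(\om),\om,\La_n)$ from Corollary \ref{A2b}, which is precisely what fails when the disorder has atoms (cf.\ Remark \ref{AC1}).
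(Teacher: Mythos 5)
The statement you were asked to prove is the abstract martingale central limit theorem for double arrays, i.e.\ the result the paper quotes in the Appendix from Hall and Heyde (\cite[Theorem 3.2 and Corollary 3.1]{HH}); the paper gives no proof of it and simply cites the literature. Your proposal does not prove this statement: what you wrote is a proof sketch of Theorem \ref{A3}, the \emph{application} of that CLT to $F_n/\sqrt{|\La_n|}$, and its very first step is to invoke the quoted Hall--Heyde theorem as a black box. As a proof of the statement in question this is circular: the two hypotheses, $V_n\to b^2$ in probability and the Lindeberg condition on $U_n(a)$, are exactly the inputs of the theorem you were supposed to establish, and verifying them for the particular array built from $F_n$ (via Lemmas \ref{d2} and \ref{LL1}, Corollary \ref{A2b}, box-independence of the increments $W_i$, translation covariance and ergodicity) says nothing about why those two conditions imply asymptotic normality in general.

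A genuine proof of the stated theorem has to follow the standard martingale-CLT route (Brown's theorem, as in \cite{HH}): truncate or stop the increments $X_{n,i}$ so that the accumulated conditional variance is uniformly controlled, use the Lindeberg condition to show the truncation is asymptotically negligible, and then analyse the product of conditional characteristic functions $\prod_{i\le k_n}\E\bigl[e^{\mathrm{i}tX_{n,i}}\mid\FF_{n,i-1}\bigr]$, showing that it converges in probability to $e^{-t^2b^2/2}$ (using $V_n\to b^2$ together with the Lindeberg bound on the remainder terms in the expansion of the exponentials) and that the gap between $\E[e^{\mathrm{i}tS_n}]$ and the expectation of this product vanishes. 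None of this machinery appears in your write-up. For what it is worth, your sketch is a faithful account of the paper's proof of Theorem \ref{A3} — the martingale-difference decomposition $Y_{n,i}$ along the lexicographic order, the role of Corollary \ref{A2b} and the absolute continuity of the disorder in making $W_i$ independent of the box, the ergodic identification $V_n\to\E[W_0^2]$, the uniform bound giving the Lindeberg condition, and the conditioning argument for $b^2\ge D^2$ — but that is a different theorem from the one under review, and the heavy probabilistic input you were asked to justify is precisely the part you assumed.
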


\end{document}